\pdfoutput=1
\documentclass[12pt]{article}
\usepackage[hmargin=1in,top=1in,bottom=1.25in]{geometry}

\usepackage[font=footnotesize]{caption}
\usepackage[normalem]{ulem}
\usepackage{relsize}
\usepackage{placeins}

\usepackage{ifthen}
\usepackage{style/qstyle} 

\newboolean{anonymous}

\renewcommand{\emph}[1]{\textbf{#1}}
\renewcommand{\mathbf}[1]{\textit{\textbf{#1}}}
\newcommand{\mathbbf}[1]{\mathbbmss{#1}}
\newcommand{\mbf}[1]{\mathbf{#1}}
\newcommand{\mbbf}[1]{\mathbbf{#1}}
\newcommand{\hy}{\text{-}}
\newcommand{\LR}[1]{L_{#1}}

\newcommand{\heading}[1]{\medskip\noindent\textbf{#1}}

\title{Quantum polymorphism  characterisation of commutativity gadgets~in~all~quantum~models}

\ifthenelse{\boolean{anonymous}}{
\author{}
}{
\author[1]{Eric Culf}
\author[2]{Josse van Dobben de Bruyn}
\author[3]{Peter Zeman}

\affil[1]{\small{Department of Applied Mathematics and Institute for Quantum Computing, University of Waterloo, Canada, \texttt{eculf@uwaterloo.ca}}}
\affil[2]{Department of Applied Mathematics, Faculty of Mathematics and Physics, Charles University, Czech~Republic, \texttt{josse.van-dobben-de-bruyn@matfyz.cuni.cz}}
\affil[3]{Department of Algebra, Faculty of Mathematics and Physics, Charles University, Czech Republic, \texttt{peter.zeman@matfyz.cuni.cz}}
\date{\vspace{-1cm}}
}

\begin{document}

\maketitle

\begin{abstract}
	Commutativity gadgets provide a technique for lifting classical reductions between constraint satisfaction problems to quantum-sound reductions between the corresponding nonlocal games. We develop a general framework for commutativity gadgets in the setting of quantum homomorphisms between finite relational structures. Building on the notion of quantum homomorphism spaces, we introduce a uniform notion of commutativity gadget capturing the finite-dimensional quantum, quantum approximate, and commuting-operator models. In the robust setting, we use the weighted-algebra formalism for approximate quantum homomorphisms to capture corresponding notions of robust commutativity gadgets.
	
	Our main results characterize both non-robust and robust commutativity gadgets purely in terms of quantum polymorphism spaces: in any model, existence of a commutativity gadget is equivalent to the collapse of the corresponding quantum polymorphisms to classical ones at arity~$|A|^2$, and robust gadgets are characterized by stable commutativity of the appropriate weighted polymorphism algebra. We use this characterisation to show relations between the classes of commutativity gadget, notably that existence of a robust commutativity gadget is equivalent to the existence of a corresponding non-robust one.
	
	Finally, we prove that quantum polymorphisms of complete graphs $K_n$ have a very special structure, wherein the noncommutative behaviour only comes from the quantum permutation group $S_n^+$. Combining this with techniques from combinatorial group theory, we construct separations between commutativity-gadget classes: we exhibit a relational structure admitting a finite-dimensional commutativity gadget but no quantum approximate gadget, and, conditional on the existence of a non-hyperlinear group, a structure admitting a quantum approximate commutativity gadget but no commuting-operator gadget.
\end{abstract}

\clearpage

\setcounter{tocdepth}{1}
\tableofcontents

\section{Introduction}

\emph{Constraint satisfaction problems (CSPs)} capture some of the most fundamental computational problems, including linear equations, graph colourings, and variants and generalisations of satisfiability. One way to phrase such a problem is as the question of the existence of a \emph{homomorphism} from a given domain relational structure $X$ to a fixed codomain structure $A$. The study of CSPs has a decades long history in computational complexity, culminating in the celebrated CSP dichotomy theorem~\cite{Bul17,Zhu17}, which establishes that every finite-template CSP is either in $\tsf{P}$ or $\tsf{NP}$-complete. A central tool underlying much of this classification is the \emph{algebraic approach}, which relates the complexity of $\CSP(A)$ to the structure of the \emph{polymorphisms} of $A$---that is, the homomorphisms $A^k \to A$, where $k$ is any natural number~\cite{BKW17}.

Almost in parallel to the developments in complexity theory, CSPs appeared in the foundations of quantum theory.
Bell~\cite{Bel64} demonstrated that quantum systems can exhibit correlations that are unattainable in classical physics, ruling out local hidden-variable theories as a possible explanation of quantum phenomena.
CSPs provide a possible formulation of Bell's theorem: 
There are specific CSP templates for which the
classical value and the quantum value can differ, where the latter is defined as the optimal
success probability in a two-player interactive proof of the satisfiability of the CSP instance
where the players have access to quantum resources.
Possibly the simplest example is the Mermin-Peres magic square~\cite{Mer90,Per90a}, which is a system of six linear equations over nine variables.
This system has no solution, so the optimal success probability of classical players is strictly less than one, however, quantum players can achieve success probability one. Since the quantum and classical notions of satisfiability differ, this naturally leads to the question of the computational complexity of determining the quantum value of a CSP.

In recent years, there has been growing interest in the quantum analogues of constraint satisfaction. The natural framework for this is the theory of \emph{nonlocal games}: a CSP instance can be reformulated as a game between two non-communicating players who may share entanglement, and the central question becomes whether the players can satisfy all constraints perfectly or approximately using quantum correlations of various types---finite-dimensional or quantum~($q$), Connes-embeddable or quantum approximate ($qa$), or commuting-operator ($qc$). The resulting hierarchy of quantum CSP problems exhibits strikingly different behaviour from the classical setting and deep connections to operator algebras and quantum information emerge~\cite{JNV+21,MNY22}.

\heading{Nonlocal game complexity.} The complexity of deciding the value of a nonlocal game under quantum correlations can vary dramatically from the classical case, giving rise to a wide variety of classes of problems. In a landmark work, Ji, Natarajan, Vidick, Wright, and Yuen showed that it is undecidable to decide the quantum value of a general nonlocal game~\cite{JNV+21}. In more detail, they show that there is a polynomial-time reduction from the halting problem to the following problem: given a program to efficiently sample questions and decide answers for a nonlocal game, decide if the quantum value is $1$ or upper bounded by $1/2$, given the promise that one of these two cases hold. As such, this problem is complete for $\tsf{RE}$.

Since the set of quantum correlations is not closed~\cite{Slo19}, there is a subtle distinction between quantum correlations and quantum-approximate correlations, which can be realised as the limit of a sequence of quantum correlations. Nevertheless, the problem of deciding if there is a quantum correlation with value $1$ is as hard as the problem of deciding the quantum value~\cite{JNV+21}.

Unlike for the problem of deciding the classical value, where the gapped and gapless cases have equal complexity due to the PCP theorem~\cite{ALMSS98}, the gapless version the quantum problem --- deciding if the quantum value of a nonlocal game is $1$ or strictly less --- is strictly harder. In fact, it is a $\Pi_2^0$-complete problem~\cite{MNY22}.

Generalising to commuting-operator correlations, which are the natural model of correlation in quantum field theory~\cite{CQK23}, gives rise to yet another inequivalent complexity class. Here, both the gapped and gapless problems of deciding if the commuting-operator value of a nonlocal game is~$1$ are complete for $\tsf{coRE}$~\cite{Lin25}.

\heading{CSP nonlocal games.} CSPs can be presented as nonlocal games in two different ways, which lead to non-equivalent decision problems.

When the relations in the CSP have arity $2$, such as for graph homomorphism problems, there is a natural nonlocal game --- the \emph{assignment game} or \emph{variable-variable-game}. In this game, the referee sends Alice variable $x$ and Bob a variable $y$, both in the domain relational structure $X$. The players must respond with answers $a$ and $b$, respectively, such that $a=b$ if $x=y$, and if $(x,y)$ satisfies a relation in $X$, then $(a,b)$ must satisfy the corresponding relation in the codomain structure $A$. 
This game dates back to early work on pseudotelepathy~\cite{BCWdW01,GW02,AHKS06} and on the quantum chromatic number of graphs~\cite{CMN07}, which has later been generalized to quantum homomorphisms between graphs \cite{MR16}.
These notions play an important role in nonlocal games \cite{Har24} and quantum graph theory \cite{AMRSSV19,LMR20,MR20}, and are also related to entanglement-assisted one-shot zero-error capacities of classical channels \cite[\S{}5]{MR16}.
Perfect quantum strategies for the assignment game admit a simple algebraic characterization in terms of a quantum satisfying assignment for the CSP \cite{MR16}.
Although no generalization of this nonlocal game is known for CSPs of higher arity, the induced algebraic characterization can be generalized in a straightforward way, enabling us to define quantum satisfying assignments for CSPs of arbitrary arity.
We refer to this definition of quantum assignments as the \emph{non-oracular framework}.

A different nonlocal game has been studied for CSPs of higher arity --- the \emph{constraint-variable game}. In this game Alice receives a constraint and Bob receives a variable from the domain structure~$X$. Alice must answer with a satisfying assignment for the variables occurring in the constraint and Bob must answer with an assignment to the variable, in such a way that their answers are consistent.
An alternate but closely-related nonlocal presentation of a CSP is the \emph{constraint-constraint game}, where both players receive constraints, and need to reply with satisfying and consistent assignments.
The constraint-variable and constraint-constraint games are well-defined for every CSP. Their perfect quantum strategies agree and have a simple algebraic characterisation.
However, as compared to the non-oracular framework, this framework requires the observables for variables occurring in the same constraint to commute, which is not always a natural restriction.
Constraint-variable games were first introduced in~\cite{CM14,Arkh12} for boolean constraint systems, and extended to general CSPs in~\cite{ABdSZ17}; constraint-constraint games were first introduced in~\cite{Ara04} and further studied in~\cite{KPS18,PS23,MS24}.
We refer to this definition of quantum assignment as the \emph{oracular framework}. This name is in reference to the fact that, for arity-$2$ CSPs, the games studied in this framework are \emph{oracularisations} --- in the sense of~\cite{JNV+21} --- of the games studied in the non-oracular framework.

Which of the two frameworks (oracular or non-oracular) makes more sense in a given situation depends on the problem at hand.
In the literature, graph colouring and graph homomorphism are commonly studied in the non-oracular framework \cite{MR16}, whereas linear equations are commonly studied in the oracular framework \cite{CLS17,Slo19,Slo20}.

\heading{Commutativity gadgets.} A key technique for establishing hardness results in the quantum CSP setting is the use of \emph{commutativity gadgets}. Informally, a commutativity gadget for a relational structure $A$ is an auxiliary structure $G$ with two distinguished variables $x$ and $y$ such that any quantum homomorphism from $G$ to $A$ forces the measurement operators associated to $x$ and $y$ to commute. This allows one to ``lift'' classical NP-hardness reductions to the quantum setting: the gadget ensures that entanglement provides no advantage in composing strategies, enabling the simulation of classical computation within quantum proof systems. Classical reductions between CSP languages, such as pp-definition and pp-construction, consist of rewriting constraints in the original CSP in terms of constraints of the new CSP. Adding commutativity gadgets to these local rewritings allows the reduction to remain sound, by only permitting classical assignment locally. The notion of a commutativity gadget was introduced in~\cite{Ji13}. Commutativity gadgets and their robust variants were systematically developed in~\cite{CDdBVZ25}, building on earlier work of~\cite{JNV+21,MS24,CM24}, and were used to establish undecidability and hardness results for a range of quantum CSP problems.

A natural question, then, is: \emph{when does a relational structure admit a commutativity gadget?} Recent work of Ciardo, Joubert, and Mottet~\cite{CJM25} answered this question for finite-dimensional quantum strategies by establishing an equivalence: a structure $A$ admits a $q$-commutativity gadget if and only if every quantum polymorphism of $A$ is classical. This bridges the algebraic approach to CSPs with the theory of quantum homomorphisms and nonlocal games.

\heading{Our contributions.} In this paper, we develop a comprehensive framework for commutativity gadgets across all standard models of quantum correlation, and characterise their existence purely in terms of quantum polymorphisms.
Furthermore, we compare the different models with the aim of proving that existence of a commutativity gadget in one model does or does not imply existence in another.
Our contributions are as follows.

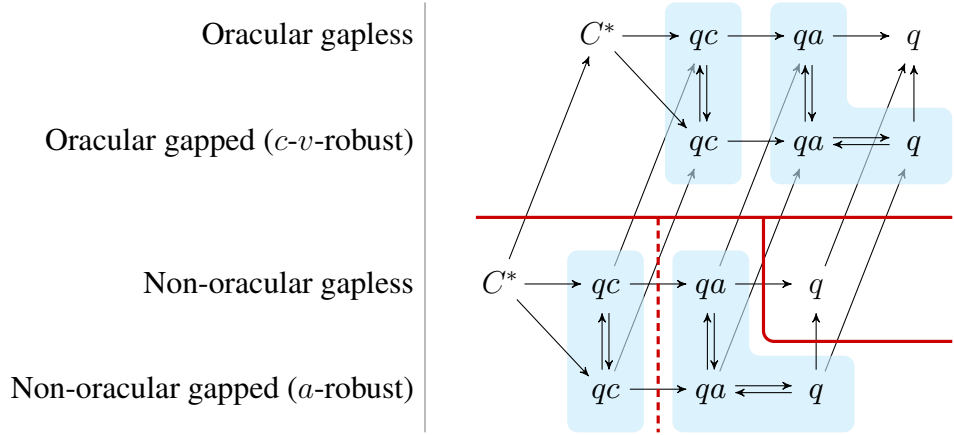
\begin{figure}[t]
    \centering
\def\arrowbaseshift{.6ex}
\def\doublearrowshift{1.3pt}
\def\boxwidth{8pt}
\def\xscale{1.4}
\def\yscale{1.4}
\begin{tikzpicture}[>=stealth',
	            label_node/.style={circle,anchor=base,inner sep=0pt,minimum width=1.5em,minimum height=1.5em},
	            box_fill/.style={rounded corners, fill=cyan!20, fill opacity=.6},
	            nonOracularToOracularSeparation/.style={red!80!black,very thick,rounded corners},
	            CstarToRobustQc/.style={},
	            GappedToGapless/.style={},
	            nonOracularToOracular/.style={}]
	
	\begin{scope}[xshift=1.3cm,xscale=\xscale,yscale=\yscale]
		\draw (0,0) coordinate (oc_coord);
		\draw (1,0) coordinate (oqc_coord);
		\draw (2,0) coordinate (oqa_coord);
		\draw (3,0) coordinate (oq_coord);
		\draw (1,-1) coordinate (roqc_coord);
		\draw (2,-1) coordinate (roqa_coord);
		\draw (3,-1) coordinate (roq_coord);
	\end{scope}
	
	\begin{scope}[yshift=-3.3cm,xscale=\xscale,yscale=\yscale]
		\draw (0,0) coordinate (c_coord);
		\draw (1,0) coordinate (qc_coord);
		\draw (2,0) coordinate (qa_coord);
		\draw (3,0) coordinate (q_coord);
		\draw (1,-1) coordinate (rqc_coord);
		\draw (2,-1) coordinate (rqa_coord);
		\draw (3,-1) coordinate (rq_coord);
	\end{scope}
	
	\foreach \x/\lbl in { c/C^\ast,  qc/qc,  qa/qa,  q/q,  rqc/qc,  rqa/qa,  rq/q,
	                     oc/C^\ast, oqc/qc, oqa/qa, oq/q, roqc/qc, roqa/qa, roq/q} {
		\draw (\x_coord) node[label_node] (\x) {$\lbl$};
	}
	
	\draw[->,nonOracularToOracular] (c) -- (oc);
	\draw[->,nonOracularToOracular] (qc) -- (oqc);
	\draw[->,nonOracularToOracular] (qa) -- (oqa);
	\draw[->,nonOracularToOracular] (q) -- (oq);
	\draw[->,nonOracularToOracular] (rqc) -- (roqc);
	\draw[->,nonOracularToOracular] (rqa) -- (roqa);
	\draw[->,nonOracularToOracular] (rq) -- (roq);
	
	\draw (roqc.south west) ++(-\boxwidth,-\boxwidth) coordinate (oqc_ll);
	\draw (oqc.north east) ++(\boxwidth,\boxwidth) coordinate (oqc_ur);
	\fill[box_fill] (oqc_ll) rectangle (oqc_ur);
	
	\draw (rqc.south west) ++(-\boxwidth,-\boxwidth) coordinate (qc_ll);
	\draw (qc.north east) ++(\boxwidth,\boxwidth) coordinate (qc_ur);
	\fill[box_fill] (qc_ll) rectangle (qc_ur);
	
	\draw (roqa.south west) ++(-\boxwidth,-\boxwidth) coordinate (oqa_ll);
	\draw (roq.north east) ++(\boxwidth,\boxwidth) coordinate (oq_ur);
	\draw (oqa.north east) ++(\boxwidth,\boxwidth) coordinate (oqa_ur);
	\draw (oqa_ll) ++(0,10pt) coordinate (oqa_ll_u);
	\draw (oqa_ll) ++(10pt,0) coordinate (oqa_ll_r);
	\fill[box_fill] (oqa_ll_r) -| (oq_ur) -| (oqa_ur) -| (oqa_ll_u) |- (oqa_ll_r);
	
	\draw (rqa.south west) ++(-\boxwidth,-\boxwidth) coordinate (qa_ll);
	\draw (rq.north east) ++(\boxwidth,\boxwidth) coordinate (q_ur);
	\draw (qa.north east) ++(\boxwidth,\boxwidth) coordinate (qa_ur);
	\draw (qa_ll) ++(0,10pt) coordinate (qa_ll_u);
	\draw (qa_ll) ++(10pt,0) coordinate (qa_ll_r);
	\fill[box_fill] (qa_ll_r) -| (q_ur) -| (qa_ur) -| (qa_ll_u) |- (qa_ll_r);
	
	\foreach \x/\lbl in {  qc/qc,  qa/qa,  rqc/qc,  rqa/qa,  rq/q,
	                      oqc/qc, oqa/qa, roqc/qc, roqa/qa, roq/q} {
		\draw (\x_coord) node[anchor=base] {$\lbl$};
	}
	
	\draw[->] ([yshift=\arrowbaseshift]oc.base east) -- ([yshift=\arrowbaseshift]oqc.base west);
	\draw[->] ([yshift=\arrowbaseshift]oqc.base east) -- ([yshift=\arrowbaseshift]oqa.base west);
	\draw[->] ([yshift=\arrowbaseshift]oqa.base east) -- ([yshift=\arrowbaseshift]oq.base west);
	\draw[->] ([yshift=\arrowbaseshift]roqc.base east) -- ([yshift=\arrowbaseshift]roqa.base west);
	\draw[->] ([yshift=\arrowbaseshift,yshift=\doublearrowshift]roqa.base east) -- ([yshift=\arrowbaseshift,yshift=\doublearrowshift]roq.base west);
	\draw[->] ([yshift=\arrowbaseshift,yshift=-\doublearrowshift]roq.base west) -- ([yshift=\arrowbaseshift,yshift=-\doublearrowshift]roqa.base east);
	\draw[->,CstarToRobustQc] (oc) -- (roqc);
	\draw[->,GappedToGapless] ([xshift=-\doublearrowshift]roqc.north) -- ([xshift=-\doublearrowshift]oqc.south);
	\draw[->,GappedToGapless] ([xshift=\doublearrowshift]oqc.south) -- ([xshift=\doublearrowshift]roqc.north);
	\draw[->,GappedToGapless] ([xshift=-\doublearrowshift]roqa.north) -- ([xshift=-\doublearrowshift]oqa.south);
	\draw[->,GappedToGapless] ([xshift=\doublearrowshift]oqa.south) -- ([xshift=\doublearrowshift]roqa.north);
	\draw[->,GappedToGapless] (roq) -- (oq);
	
	\draw[->] ([yshift=\arrowbaseshift]c.base east) -- ([yshift=\arrowbaseshift]qc.base west);
	\draw[->] ([yshift=\arrowbaseshift]qc.base east) -- ([yshift=\arrowbaseshift]qa.base west);
	\draw[->] ([yshift=\arrowbaseshift]qa.base east) -- ([yshift=\arrowbaseshift]q.base west);
	\draw[->] ([yshift=\arrowbaseshift]rqc.base east) -- ([yshift=\arrowbaseshift]rqa.base west);
	\draw[->] ([yshift=\arrowbaseshift,yshift=\doublearrowshift]rqa.base east) -- ([yshift=\arrowbaseshift,yshift=\doublearrowshift]rq.base west);
	\draw[->] ([yshift=\arrowbaseshift,yshift=-\doublearrowshift]rq.base west) -- ([yshift=\arrowbaseshift,yshift=-\doublearrowshift]rqa.base east);
	\draw[->,CstarToRobustQc] (c) -- (rqc);
	\draw[->,GappedToGapless] ([xshift=-\doublearrowshift]rqc.north) -- ([xshift=-\doublearrowshift]qc.south);
	\draw[->,GappedToGapless] ([xshift=\doublearrowshift]qc.south) -- ([xshift=\doublearrowshift]rqc.north);
	\draw[->,GappedToGapless] ([xshift=-\doublearrowshift]rqa.north) -- ([xshift=-\doublearrowshift]qa.south);
	\draw[->,GappedToGapless] ([xshift=\doublearrowshift]qa.south) -- ([xshift=\doublearrowshift]rqa.north);
	\draw[->,GappedToGapless] (rq) -- (q);
	
	\draw ($(qc_ur)!.5!(oqc_ll)$) coordinate (midY);
	\draw[nonOracularToOracularSeparation] (midY -| c.base west) -- (midY -| oq_ur);
	
	\draw ($(qa)!.5!(rq)$) coordinate (midQaQ);
	\draw[nonOracularToOracularSeparation] (midY -| midQaQ) |- (midQaQ -| oq_ur);
	
	\draw ($(qc)!.5!(rqa)$) coordinate (midQcQa);
	\draw[nonOracularToOracularSeparation, densely dashed] (midY -| midQcQa) -- (midQcQa |- qc_ll);
	
	\coordinate (legend) at (-1,0);
	\draw (oc_coord.base -| legend) node[anchor=base east] {Oracular gapless};
	\draw (roqc_coord.base -| legend) node[anchor=base east] {Oracular gapped ($c$-$v$-robust)};
	\draw (c.base -| legend) node[anchor=base east] {Non-oracular gapless};
	\draw (rqc.base -| legend) node[anchor=base east] {Non-oracular gapped ($a$-robust)};
	\draw[gray] (legend |- oqc_ur) -- (legend |- qc_ll);
\end{tikzpicture}
    \caption{An overview of the implications and separations for existence of commutativity gadgets in different models. An arrow from one model to another indicates that existence of a commutativity gadget in the first model implies existence of a commutativity gadget in the second. Models that are equivalent in this sense are grouped together in shaded areas. Red lines between the areas indicate separations, meaning that there are CSPs which admit a commutativity gadget in the weaker model but not in the stronger model. The dashed red line indicates a separation that is conditional on the existence of a non-hyperlinear group.}
    \label{fig:intro}
\end{figure}

\begin{enumerate}
    \item \textbf{Polymorphism characterisation across all models.} We extend the characterisation of~\cite{CJM25} from $q$-commutativity gadgets to all min-tensor monoids $Q$, including the quantum approximate ($qa$), commuting-operator ($qc$), and universal ($C^\ast$) settings. In each model, both in the non-oracular and oracular frameworks, existence of a $Q$-commutativity gadget is equivalent to all $Q$-quantum polymorphisms being classical at arity $|A|^2$ (\cref{thm:basic-char,thm:orac-basic-char}).

    \item \textbf{Robust commutativity gadgets and stable commutativity.} We introduce \emph{stable commutativity}, a notion capturing commutativity of an algebra in approximate representions. Using this, we characterise robust commutativity gadgets in terms of stable commutativity of polymorphism algebras at the same bounded arity (\cref{thm:a-robust-char,thm:c-v-robust-char}). This yields a number of equivalences and implications between different flavours of commutativity gadget (\cref{thm:implications-equivalences}).
    However, the behaviour of these implications is a bit subtle: if existence of commutativity gadgets in one model implies existence in another, this does not always mean that every commutativity gadget in the first model is a commutativity gadget in the second.
    The equivalences and implications that we prove are illustrated by the arrows in~\cref{fig:intro}.

    \item \textbf{Quantum polymorphisms of complete graphs.} We prove that the quantum polymorphism space of $K_n$ (for $n \geq 3$) decomposes completely: $\mor^+(K_n^k, K_n) \cong (S_n^+)^{\sqcup k}$, where $S_n^+$ is the quantum permutation group (\cref{thm:sn-decomposition}). In particular, the noncommutative behaviour of the quantum polymorphisms is entirely governed by $S_n^+$, which acts independently on each coordinate, similarly to the action of $S_n$ for the classical polymorphisms. This structural result is crucial for our separation theorems.

    \item \textbf{Separations between commutativity gadget classes.} Combining the polymorphism characterisation with techniques from combinatorial group theory and the theory of linear system games, we exhibit explicit separations between the different classes of commutativity gadgets (\cref{thm:separations}). Specifically, we construct a relational structure that admits a $q$-commutativity gadget but no $qa$-commutativity gadget. Conditionally on the existence of a non-hyperlinear group, we also exhibit a structure admitting a $qa$-commutativity gadget but no $qc$-commutativity gadget. These separations occur in the non-oracular framework (see \cref{fig:intro}). The separations also apply to CSP nonlocal games, since the CSPs we construct have arity~$2$. Analogous questions in the oracular framework remain open.

    \item \textbf{Undecidability of gadget existence.} We show that determining whether a given relational structure admits a $q$- or $qa$-commutativity gadget is undecidable, and that the analogous problem for $qc$- and $C^\ast$-commutativity gadgets is $\tsf{coRE}$-complete (\cref{thm:gadget-existence-hard}). These results are obtained by reducing from the undecidability of properties of solution groups of linear systems~\cite{Slo19,Slo20}.
\end{enumerate}

\heading{Organisation.} \ifthenelse{\boolean{anonymous}}{}{In \cref{sec:prelims}, we recall the necessary background on $\ast$-algebras, relational structures, quantum homomorphism spaces, nonlocal games, and commutativity gadgets.} \cref{sec:comm-gadget-hardness} discusses the connection between commutativity gadgets and hardness of quantum CSPs. In \cref{sec:gadget-polymorphisms}, we prove the polymorphism characterisations in the non-robust setting, extending the results of~\cite{CJM25}. \cref{sec:stability} introduces stable commutativity and establishes its basic properties. \cref{sec:robust-gadgets} contains the robust characterisation theorems and the resulting implications and equivalences between gadget classes. \cref{sec:complete-graphs} gives the full description of the quantum polymorphisms of complete graphs, and \cref{sec:separations} applies this to construct the separations. Finally, \cref{sec:gadget-complexity} addresses the complexity of deciding commutativity gadget existence. \ifthenelse{\boolean{anonymous}}{All necessary background on $\ast$-algebras, relational structures, quantum homomorphism spaces, nonlocal games, and commutativity gadgets can be found in \cref{sec:prelims}.}{}

\ifthenelse{\boolean{anonymous}}{}{
\paragraph{Acknowledgements.} 

We would like to thank William Slofstra and Matthijs Vernooij for helpful discussions. 
EC was supported by a CGS D scholarship from Canada's NSERC.
JvDdB was supported by GA\v{C}R grant 25-17377S.
PZ was funded by the European Union (ERC, POCOCOP, 101071674). Views and opinions expressed are however those of the author(s) only and do not necessarily reflect those of the European Union or the European Research Council Executive Agency. Neither the European Union nor the granting authority can be held responsible for them.
}

\ifthenelse{\boolean{anonymous}}{}{
\section{Preliminaries}

\label{sec:prelims}

\subsection{Notation}

We denote the space of bounded operators on a Hilbert space $H$ as $B(H)$. The space of $n\times n$ matrices is $\mbb{M}_n=B(\C^n)$.

The commutator of two operators is denoted $[S,T]=ST-TS$. The group commutator is denoted in the same way $[a,b]=aba^{-1}b^{-1}$.

\subsection{Finitely-presented groups}

Given a finite set $S$ of generators and a finite set of relations $R$ (words in $S\sqcup S^{-1}$), the \emph{finitely-presented group} $\gen*{S}{R}$ is the quotient of the free group generated by $S$ by the normal subgroup~$\angnormal*{R}$ generated by $R$. The \emph{free product} of two finitely presented groups $G_1=\gen*{S_1}{R_1}$ and $G_2=\gen*{S_2}{R_2}$ is the finitely-presented group $G_1\ast G_2=\gen*{S_1\sqcup S_2}{R_1\sqcup R_2}$. If $G_1$ and $G_2$ contain $H_1$ and $H_2$ isomorphic to $H$, then the \emph{amalgamated product} of $G_1$ and $G_2$ over $H$, denoted $G_1\ast_H G_2$, is the quotient of $G_1\ast G_2$ by the normal subgroup generated by the relations $\varphi_1(h)^{-1}\varphi_2(h)$ for all $h\in H$, where $\varphi_i:H\rightarrow H_i$ are the isomorphisms. If $H$ is finitely presented, then $G_1\ast_H G_2$ remains finitely presented. The natural homomorphisms $G_1\rightarrow G_1\ast_H G_2$ and $G_2\rightarrow G_1\ast_H G_2$ are injective~\cite{KS70}.

\subsection{\texorpdfstring{$\ast$}{\textasteriskcentered{}}-algebras to \texorpdfstring{$C^\ast$}{\textit{C}*}-algebras}

A unital \emph{$\ast$-algebra} is an algebra $\mc{A}$ over $\C$ (or $\R$) equipped with an antilinear involution $x\mapsto x^\ast$ such that $1^\ast=1$ and $(xy)^\ast=y^\ast x^\ast$. Denote the (real) subspace of hermitian elements by $\mc{A}_h=\set*{x\in\mc{A}}{x^\ast=x}$. A \emph{$\ast$-positive cone} over $\mc{A}$ is a set $\mc{A}_+\subseteq\mc{A}_h$ such that $1\in\mc{A}_+$, $x+y\in\mc{A}_+$ for all $x,y\in\mc{A}_+$, and $axa^\ast\in\mc{A}_+$ for all $x\in\mc{A}_+$ and $a\in\mc{A}$. A $\ast$-positive cone induces an order on $\mc{A}_h$ via $x\geq y$ iff $x-y\in\mc{A}_+$. We say $\mc{A}_+$ is \emph{archimedean} if for all $x\in\mc{A}$, there exists $R\in\R$ such that $x^\ast x\leq R1$; a $\ast$-algebra equipped with an archimedean $\ast$-positive cone is called a \emph{semi-pre-$C^\ast$-algebra}, following the notation of~\cite{Oza13}. A standard choice of $\ast$-positive cone is the \emph{sum-of-squares cone} given by $\mc{A}_+=\set*{\sum_{i=1}^nx_i^\ast x_i}{n\in\N,\;x_i\in\mc{A}}$. Given a set of generators $S$ and relations $R$, we denote the \emph{$\ast$-algebra generated $S$ subject to $R$} as $\C\!\gen*{S}{R}=F(S)/\angnormal*{R}$, where $F(S)$ is the \emph{free algebra} with generators $S$ and $\angnormal*{R}$ is the two-sided ideal of $F(S)$ generated by $R$. Note that a finitely-generated $\ast$-algebra is archimedean if and only if all its generators are bounded.

A semi-pre-$C^\ast$-algebra is a \emph{pre-$C^\ast$-algebra} if $x^\ast x\leq\varepsilon 1$ for all $\varepsilon>0$ implies that $x=0$ in $\mc{A}$. If $\mc{A}$ is a pre-$C^\ast$-algebra, we can define a norm on it by
\begin{align*}
    \norm{x}=\sup\set*{\norm{\pi(x)}}{\pi:\mc{A}\rightarrow B(H)\text{ is a $\ast$-homomorphism}}=\inf\set*{\lambda\in\R_{\geq0}}{x^\ast x\leq\lambda^21}.
\end{align*}
This norm satisfies the properties $\norm{xy}\leq\norm{x}\norm{y}$ and $\norm{x^\ast x}=\norm{xx^\ast}=\norm{x}^2$. See~\cite{Oza13} for details. A \emph{$C^\ast$-algebra} is a pre-$C^\ast$-algebra that is complete with respect to this norm. In a $C^\ast$-algebra $\mc{A}$, there is a unique $\ast$-positive cone given by $\mc{A}_+=\set*{x^\ast x}{x\in\mc{A}}$.

Given a semi-pre-$C^\ast$-algebra $\mc{A}$, we can convert it to a pre-$C^\ast$-algebra by taking the quotient by the infinitesimal ideal $I(\mc{A})=\set*{x\in\mc{A}}{x^\ast x\leq\varepsilon 1\;\forall\,\varepsilon\geq 0}$. We call the completion of $\mc{A}/I(\mc{A})$ with respect to the norm above the \emph{universal $C^\ast$-algebra} of $\mc{A}$, and denote it $C_u^\ast(\mc{A})$. Given a set of generators $S$ and a set of relations $R$, the \emph{$C^\ast$-algebra generated by $S$ subject to $R$} is $C^\ast\!\gen*{S}{R}=C^\ast_u(\C\!\gen{S}{R})$.

Given a (finitely-presented) group $G$, the \emph{group $\ast$-algebra} of $G$ is the $\ast$-algebra $\C[G]$ generated by $e_g$ for $g\in G$, subject to the relations $e_g^\ast=e_{g^{-1}}$ and $e_ge_h=e_{gh}$ for all $g,h\in G$. The $\ast$-representations of $\C[G]$ are in bijective correspondence with the unitary representations of $G$. We can see $\C[G]$ as a semi-pre-$C^\ast$-algebra with respect to the sum-of-squares cone. Then, the \emph{group $C^\ast$-algebra} of $G$ is the universal $C^\ast$-algebra $C^\ast G=C^\ast_u(\C[G])$.

A \emph{state} on a semi-pre-$C^\ast$-algebra $\mc{A}$ is a linear map $\rho:\mc{A}\rightarrow\C$ such that $\rho(1)=1$ and $\rho(x)\geq 0$ for all $x\in\mc{A}_+$. We say $\rho$ is \emph{tracial} if $\rho(xy)=\rho(yx)$ for all $x,y\in\mc{A}$; and we say $\rho$ is \emph{faithful} if $\rho(x^\ast x)=0$ implies $x=0$ for all $x\in\mc{A}_+$ The existence of a faithful state automatically guarantees that $\mc{A}$ is a pre-$C^*$-algebra, since the infinitesimal ideal is trivial. Tracial states are usually denoted by $\tau$. Any state induces a seminorm $\norm{a}_\rho=\sqrt{\rho(a^\ast a)}$ called the \emph{$\rho$-norm}. Every state $\rho$ on a $\ast$-algebra induces a \emph{GNS representation}, which is a $\ast$-representation $\pi:\mc{A}\rightarrow B(H)$ such that there is a unit vector $\ket{\psi}\in H$ satisfying $\rho(x)=\braket{\psi}{\pi(x)}{\psi}$. A state is called \emph{finite-dimensional} if it admits a finite-dimensional GNS representation; a state is called \emph{Connes-embeddable} if it admits a GNS representation where the von Neumann algebra generated by $\pi(\mc{A})$ is Connes-embeddable (see below). A \emph{character} is a state that is a $\ast$-representation.

The \emph{tensor product} of $\ast$-algebras $\mc{A}$ and $\mc{B}$ is denoted $\mc{A}\otimes\mc{B}$, or sometimes $\mc{A}\odot\mc{B}$ in order to emphasize that we are not taking a completion. We implicitly use the isomorphism $\C\otimes\mc{A}\cong\mc{A}\otimes\C\cong\mc{A}$. For $C^\ast$-algebras $\mc{A}$ and $\mc{B}$, there are a variety of (generally inequivalent) ways to complete $\mc{A}\odot\mc{B}$ to a $C^\ast$-algebra. Notably, we make use of the \emph{minimal tensor product} (or min-tensor product), which is the completion $\mc{A}\otimes_{\min}\mc{B}$ of $\mc{A}\odot\mc{B}$ with respect to the norm $\norm{x}_{\min}=\norm{(\pi_A\otimes\pi_B)(x)}$ for $\pi_A$ and $\pi_B$ faithful $\ast$-representations of $\mc{A}$ and $\mc{B}$, respectively.

\subsection{von Neumann algebras}

A \emph{von Neumann algebra} is a unital $\ast$-subalgebra $\mc{M}\subseteq B(H)$ for a Hilbert space $H$ that is closed under the \emph{weak operator topology}, defined via convergence as $(x_\lambda)\rightarrow x$ iff $\braket{\psi}{x_\lambda}{\phi}\rightarrow\braket{\psi}{x}{\phi}$ for all $\ket{\psi},\ket{\phi}\in H$. Equivalently, $\mc{M}$ is a unital $\ast$-subalgebra of $B(H)$ equal to its bicommutant $\mc{M}''$, where the \emph{commutant} of $S\subseteq B(H)$ is $S'=\set*{a\in B(H)}{ax=xa\;\forall\,x\in S}$; or that $\mc{M}$ is closed under the \emph{strong operator topology}, defined via convergence as $(x_\lambda)\rightarrow x$ iff $\norm{(x_\lambda-x)\ket{\psi}}\rightarrow 0$ for all $\ket{\phi}\in H$. A von Neumann algebra is a \emph{factor} if $\mc{M}\cap\mc{M}'=\C\cdot 1$; any von Neumann algebra can be decomposed as a direct integral of factors. The finite-dimensional factors are exactly the matrix algebras $\mbb{M}_n$. A von Neumann algebra is \emph{finite} if there exists a faithful tracial state on $\mc{M}$. An important example of a finite von Neumann algebra is the \emph{hyperfinite type $\mathrm{II}_1$ factor} $\mc{R}$. One way of constructing $\mc{R}$ is by taking a closure of the algebra $\mc{A}=\bigcup_{n=1}^\infty\mbb{M}_{2^n}$ subject to the natural identification $x\otimes 1=x$. To find the right notion of closure, note first that there is a tracial state $\tau$ on $\mc{A}$ given by the normalised trace on each component. We can complete $\mc{A}$ as a Hilbert space $H$ with respect to the inner product $\ang*{x,y}=\tau(x^\ast y)$. Then, $\mc{A}\subseteq B(H)$ via the action of left multiplication, and we can take the closure in the strong operator topology to get $\mc{R}$. A finite von Neumann algebra is \emph{Connes-embeddable} if there is an injective $\ast$-homomorphism $\mc{M}\rightarrow\mc{R}^\omega$, where $\mc{R}^\omega$ is an ultrapower of the hyperfinite $\mathrm{II}_1$ factor.

To define $\mc{R}^\omega$, we need the concept of an ultrafilter. An \emph{ultrafilter} $\omega$ on the natural numbers is a collections of subsets of $\N$ such that $\varnothing\notin\omega$; if $A,B\in\omega$, then $A\cap B\in\omega$; and for all $A\subseteq\N$, either $A\in\omega$ or $A^c\in\omega$. Note that these properties also imply that if $A\in\omega$ and $B\supseteq A$, then $B\in\omega$ as well. An ultrafilter is called \emph{principal} if it contains a singleton, and \emph{free} otherwise. The existence of a free ultrafilter follows from the axiom of choice. Ultrafilters allow us to define a robust notion of limit. Given an ultrafilter $\omega$, we say $x$ is the \emph{ultralimit} of a sequence $(x_n)$ if for each $\varepsilon>0$, $\set*{n\in\N}{|x-x_n|<\varepsilon}\in\omega$; we denote this as $\lim_{n\rightarrow\omega}x_n=x$ or $x_n\overset{\omega}{\rightarrow}x$. It holds that if the ultralimit exists, it is unique; and every bounded sequence admits an ultralimit. Further, if $\omega$ is free and $(x_n)$ converges in the usual way, then $\lim_{n\rightarrow\omega}x_n=\lim_{n\rightarrow\infty}x_n$. Given a sequence of finite von Neumann algebras $(\mc{M}_n,\tau_n)$, their \emph{ultraproduct} is the space of bounded sequences $(a_n)$ such that $a_n\in\mc{M}_n$, modulo the equivalence relation $(a_n)\sim(b_n)$ if $\norm{a_n-b_n}_{\tau_n}\overset{\omega}{\rightarrow}0$. This space, denoted $\prod_{n\rightarrow\omega}\mc{M}_n$ or $\prod_{\omega}\mc{M}_n$ if the index is clear from context, is a finite von Neumann algebra with tracial state $\tau_\omega(a)=\lim_{n\rightarrow\omega}\tau_n(a_n)$ for a representative $(a_n)$ of $a$. If $\mc{M}_n=\mc{M}$ for each $n$, then we write $\prod_\omega\mc{M}_n=\mc{M}^{\omega}$, and call it the \emph{ultrapower} of $\mc{M}$.

It is known that $\mc{R}^\omega\cong\mc{R}^{\omega'}$ for all free ultrafilters if and only if the continuum hypothesis holds (see~\cite{Gol23}). However, we can usually make use of weaker relations between different ultraproducts: below, we provide a proof of the (folklore) result that quantifies the sense in which we can treat $\mc{R}^\omega$ as ultraproduct of matrix algebras of increasing dimension.

\begin{lemma}\label{lem:ultrapower}
    Let $\omega$ be an ultrafilter on $\N$.
    \begin{enumerate}[(i)]
        \item For any sequence of integers $(k(n))_n$, there is an embedding $\prod_\omega\mbb{M}_{k(n)}\hookrightarrow\mc{R}^\omega$.
        \item There exists an ultrafilter $\omega'$ on $\N$ and a sequence $(k(n))_n$ such that there is an embedding~$\mc{R}^\omega\hookrightarrow\prod_{\omega'}\mbb{M}_{k(n)}$.
    \end{enumerate}
\end{lemma}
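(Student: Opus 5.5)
For (i), the plan is to embed each matrix algebra into $\mc{R}$ trace-preservingly and then pass to the ultraproduct. For each $n$, fix a unital trace-preserving $\ast$-embedding $\iota_n:\mbb{M}_{k(n)}\hookrightarrow\mc{R}$. Such an embedding exists because $\mc{R}$ is a $\mathrm{II}_1$ factor: it contains a system of $k(n)$ pairwise orthogonal equivalent projections of trace $1/k(n)$ summing to $1$, together with matrix units between them. Next define $\iota((a_n)_n)=(\iota_n(a_n))_n$ on bounded sequences. This map sends bounded sequences to bounded sequences, since each $\iota_n$ is isometric in operator norm. It is a $\ast$-homomorphism, and it preserves the $2$-norms, because $\norm{\iota_n(a)}_{\tau_{\mc{R}}}=\norm{a}_{\mathrm{tr}}$ when $\iota_n$ preserves the normalised trace. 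Hence $\iota$ respects the equivalence relation defining both ultraproducts in both directions. It therefore descends to an injective trace-preserving $\ast$-homomorphism $\prod_\omega\mbb{M}_{k(n)}\hookrightarrow\mc{R}^\omega$. If $\omega$ is principal, both sides reduce to a single factor and the same argument applies.

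For (ii), the plan is to approximate elements of $\mc{R}$ by matrices and change the index set. Write $\mc{A}=\bigcup_m\mbb{M}_{2^m}$. This is $2$-norm dense in $\mc{R}$, and by Kaplansky density every element of norm at most $r$ is a $2$-norm limit of elements of $\mc{A}$ with norm at most $r$. Take the index set $\N\times\N$, identified with $\N$ by a bijection. On it, put an ultrafilter $\omega'$ extending the sets $\{(n,m):n\in U,\;m\geq M\}$ for $U\in\omega$ and $M\in\N$. These sets are closed under finite intersections and nonempty, so Zorn's lemma gives such an $\omega'$. Set $k(n,m)=2^m$.

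Given a bounded sequence $(x_n)$ in $\mc{R}$, let $E_m:\mc{R}\to\mbb{M}_{2^m}$ be the trace-preserving conditional expectation. Map $(x_n)_n$ to $(E_m(x_n))_{(n,m)}$. Conditional expectations are contractive, so this is bounded, and $E_m(x)\to x$ in $2$-norm for every $x\in\mc{R}$. This map is linear and $\ast$-preserving. It is approximately multiplicative, since
\[
\norm{E_m(x_ny_n)-E_m(x_n)E_m(y_n)}_2\leq\norm{x_ny_n-E_m(x_n)E_m(y_n)}_2,
\]
which tends to $0$ as $m\to\infty$ for fixed $n$. It is also approximately $2$-norm preserving. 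Therefore, along $\omega'$ (first $m\to\infty$, then $n\to\omega$), the discrepancies vanish and the map descends to a trace-preserving $\ast$-homomorphism $\mc{R}^\omega\hookrightarrow\prod_{\omega'}\mbb{M}_{k(n,m)}$. Injectivity follows from trace preservation.

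The main obstacle is the iterated-limit step. One must check that a quantity $f(n,m)$ satisfying $\lim_m f(n,m)=0$ for each $n$ has $\omega'$-limit $0$. This holds because, for each $\varepsilon>0$, the set $\{(n,m):f(n,m)<\varepsilon\}$ contains $\bigcup_{n}\{n\}\times[M_n,\infty)$. One must therefore choose $\omega'$ to contain all such sets, not merely the product-style sets above. So I would define $\omega'$ as an ultrafilter extending $\{S\subseteq\N\times\N:\{n:\{m:(n,m)\notin S\}\text{ finite}\}\in\omega\}$, which is a filter. With this choice, well-definedness, preservation of the trace $\tau_{\omega'}$, and injectivity all follow.
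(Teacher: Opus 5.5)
Your proposal is correct and follows essentially the same route as the paper. Part (i) is identical. In (ii) the paper also indexes by $\N\times\N$ with $k(n,m)=2^m$ and uses the product ultrafilter $\omega\times\upsilon$ with $\upsilon$ free; this is one particular extension of your filter $\{S:\{n:\{m:(n,m)\notin S\}\text{ finite}\}\in\omega\}$, so the iterated-limit step works the same way.

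The only real difference is that you use the trace-preserving conditional expectations $E_m$ rather than arbitrarily chosen $2$-norm approximants. Being contractive, the $E_m$ give the uniform operator-norm bounds needed for the sequences to lie in the ultraproduct and for products of approximants to converge. The paper's argument does not make these bounds explicit; with arbitrary approximants they would need something like Kaplansky density.
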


\begin{proof}
    \begin{enumerate}[(i)]
        \item By a standard result, there is a trace-preserving embedding $\iota_n:\mbb{M}_n\rightarrow\mc{R}$ for all~$n$~\cite{MvN43}. Define the map $\iota:\Pi_\omega\mbb{M}_{k(n)}\rightarrow\mc{R}^\omega$ by taking $\iota(a)$ to be the element with representative $(\iota_{k(n)}(a_n))$, for $(a_n)$ a representative of $a$. First, $\iota$ is well-defined: if $(a_n)\sim(b_n)$, then, $\norm{a_n-b_n}_{\tr_{k(n)}}\overset{\omega}{\rightarrow}0$; as $\iota_n$ is trace-preserving, then $\norm{\iota_{k(n)}(a_n)-\iota_{k(n)}(b_n)}_\tau\overset{\omega}{\rightarrow}0$. Therefore, $\iota$ is a $\ast$-homomorphism. Next, suppose the image of $\iota(a)=0$ in $\mc{R}^\omega$. Then, letting $(a_n)$ be a representative of $a$, $0=\lim_{n\rightarrow\omega}\tau(\iota_{k(n)}(a_n)^\ast \iota_{k(n)}(a_n))=\lim_{n\rightarrow\omega}\tr_{k(n)}(a_n^\ast a_n)$, so $a=0$ in $\prod_\omega\mbb{M}_{k(n)}$. So $\iota$ is injective.

        \item $\mc{R}$ is constructed as the strong operator topology closure of $\bigcup_{n=1}^\infty\mbb{M}_{2^n}$, subject to the natural identification $x\otimes 1=x$. In particular, for all $x\in\mc{R}$, there exists a sequence $(x_m)$ such that $x_m\in\mbb{M}_{2^m}$ and $x_m\rightarrow x$ with respect to the trace norm; via the axiom of choice, we can choose such a sequence for every element of $\mc{R}$. Fix any free ultrafilter $\upsilon$ on $\N$. Using~\cite[Definition III.2.3]{CLP15}, consider $\mc{M}=\prod_{(n,m)\rightarrow\omega\times\upsilon}\mbb{M}_{2^m}$. Since $\N\times\N\cong\N$, $\omega\times\upsilon$ can be seen as a free ultrafilter on $\N$, and hence we intend to construct an embedding of $\mc{R}^\omega$ in $\mc{M}$. Define the map $\phi:\mc{R}^\omega\rightarrow\mc{M}$ as $\phi(a)=[(a_{nm})_{n,m}]$ where $(a_n)$ is a representative of $a$. First, $\phi$ is well-defined, as for any $(a_n)\sim (b_n)$, $$\lim_{(n,m)\rightarrow\omega\times\upsilon}\norm{a_{nm}-b_{nm}}_{\tr_{2^m}}=\lim_{n\rightarrow\omega}\lim_{m\rightarrow\upsilon}\norm{a_{nm}-b_{nm}}_{\tr_{2^m}}=\lim_{n\rightarrow\omega}\norm{a_n-b_n}_\tau=0.$$
        Also, $\phi$ is a $\ast$-homomorphism. For example, if $a,b\in\mc{R}^\omega$ then $\phi(a+b)=\phi(a)+\phi(b)$ as
        $$\lim_{(n,m)\rightarrow\omega\times\upsilon}\norm{(a_n+b_n)_m-a_{nm}-b_{nm}}_{\tr_{2^m}}=\lim_{n\rightarrow\omega}\norm{a_n+b_n-a_n-b_n}_\tau=0.$$
        $\phi$ preserves the identity, products, and the adjoint in the same way. To finish, we show $\phi$ is injective. Suppose $\phi(a)=0$. Then, $0=\lim_{(n,m)\rightarrow\omega\times\upsilon}\norm{a_{nm}}_{\tr_{2^m}}=\lim_{n\rightarrow\omega}\norm{a_n}_\tau=0$, so~$a=0$.
    \end{enumerate}
\end{proof}

\subsection{Weighted algebras}

A \emph{weighted algebra} is a pair $(\mc{A},\mu)$ where $\mc{A}$ is a $\ast$-algebra and $\mu:\mc{A}\rightarrow\R_{\geq 0}$ is a finitely-supported function, called the \emph{weight function}. The \emph{defect polynomial} of a weighted algebra is
\begin{align*}
    D(\mc{A},\mu)=\sum_{a\in\mc{A}}\mu(a)a^\ast a.
\end{align*}
Given a tracial state $\tau:\mc{A}\rightarrow\C$, the \emph{defect} of $\tau$ is $$\defect(\tau;\mu)=\tau(D(\mc{A},\mu))=\sum_{a\in\mc{A}}\mu(a)\norm{a}_\tau^2.$$
We denote it $\defect(\tau)$ if the weight function is clear from context. Weighted algebras capture approximate representations of $\mc{A}/\angnormal{\supp(\mu)}$ and $\defect(\tau)$ captures how much the GNS representation of the tracial state $\tau$ deviates from an exact representation of this algebra.

Since we consider only the representations of weighted algebras induced by tracial states, the natural notion of order is weaker than that induced by the sums of squares. We say $a,b\in\mc{A}$ are \emph{cyclically equivalent} if there exist $c_1,d_1,\ldots,c_k,d_k\in\mc{A}$ such that $a-b=\sum_i[c_i,d_i]$; we denote this by $a\cyceq b$. Write $a\gtrsim b$ if there exist $s_1,\ldots,s_k\in\mc{A}$ such that $a-b\cyceq\sum_is_i^\ast s_i$.

A \emph{$C$-homomorphism} $\alpha:(\mc{A},\mu)\rightarrow(\mc{B},\nu)$ is a $\ast$-homomorphism $\alpha:\mc{A}\rightarrow\mc{B}$ such that $\alpha(D(\mc{A},\mu))\lesssim C\cdot D(\mc{B},\nu)$. If there is a $C$-homomorphism $\alpha:(\mc{A},\mu)\rightarrow(\mc{B},\nu)$, then for every tracial state $\tau$ on $\mc{B}$, there exists a tracial state $\tau'=\tau\circ\alpha$ on $\mc{A}$ such that $\defect(\tau')\leq C\defect(\tau)$.

The following inequality from \cite{CDdBVZ25} is often useful in the context of weighted algebras: for $a_1,\ldots,a_k\in\mc{A}$, and writing $\abs{a}^2=a^\ast a$,
\begin{align*}
    \abs[\Big]{\sum_{i=1}^ka_i}^2\leq k\sum_{i=1}^k\abs{a_i}^2.
\end{align*}

\subsection{Relational structures}


A \emph{relational structure} is a pair $A=(\dom(A),\rel(A))$ where $\dom(A)$ is a (usually finite) set called the \emph{domain} of $A$ and $\rel(A)$ is a (usually finite) set called the \emph{relations} of $A$, where for each $R\in \rel(A)$, there exists $n\in\N$, called the \emph{arity} of $R$ and denoted $\ar(R)$, such that $R\subseteq\dom(A)^{\ar(R)}$. When clear from context, we write $A$ to mean $\dom(A)$.

A \emph{relational signature} is a (usually finite) set $\sigma$ with an associated arity function $\ar:\sigma\rightarrow\N$. The elements of $\sigma$ are called \emph{relational symbols}. A \emph{relational structure over $\sigma$} (or a \emph{$\sigma$-structure}) is a relational structure $A$ with an associated bijective mapping $\sigma\rightarrow\rel(A)$, $R\mapsto R^A$ such that $\ar(R^A)=\ar(R)$. If $\sigma\subseteq\tau$, then every $\sigma$-structure is also a $\tau$-structure. A \emph{homomorphism} of $\sigma$-structures $f:A\rightarrow B$ is a map $f:\dom(A)\rightarrow\dom(B)$ such that for all $R\in\sigma$ and $(a_1,\ldots,a_{\ar(R)})\in R^A$, $(f(a_1),\ldots,f(a_{\ar(R)}))\in R^B$.

Given a relational structure $A$ over $\sigma$, the \emph{constraint satisfaction computational problem (CSP)} over $A$ is the problem $\CSP(A)$ with instances that are relational structures over $\sigma$, where $B$ is a yes instance if there exists a homomorphism $B\rightarrow A$ (\textit{i.e.} $\mor(B,A)$ is nonempty), and $B$ is a no instance otherwise. Due to the \emph{CSP dichotomy theorem}, we know that $\CSP(A)$ is either contained in $\tsf{P}$ or $\tsf{NP}$-complete~\cite{Bul17,Zhu17}.

Let $A$ and $B$ be $\sigma$-structures. The \emph{cartesian product} of $A$ and $B$ is the $\sigma$-structure~$A\times B$ with domain $\dom(A\times B)=\dom(A)\times\dom(B)$ and relations $$R^{A\times B}=\set*{((a_1,b_1),\ldots,(a_{\ar(R)},b_{\ar(R)}))}{\mathbf{a}\in R^A,\;\mathbf{b}\in R^B}.$$

Write $A^k=A\times A\times\cdots\times A$. Note that for graphs, the cartesian product as relational structures is the same as the categorical (or tensor) product as graphs.

We denote elements of a relational structure in standard italic font $a\in A$, tuples of elements with bold face $\mbf{a}=(a_1,\ldots,a_k)\in A^k$, and tuples of tuples of elements (as in the relations of a power of a relational structure) with blackboard bold face $\mbbf{a}=(\mbf{a}_1,\ldots,\mbf{a}_{l})=((a_{11},\ldots,a_{1k}),\ldots,(a_{l1},\ldots,a_{lk}))\in (A^k)^{l}$.

Let $\sigma_{\mathrm{LIN}}=\set*{\LR{b,n}}{b\in\Z_2,\,n\in\N}$ with $\ar(\LR{b,n})=n$; $\mathrm{LIN}$ is the $\sigma_{\mathrm{LIN}}$-structure with $\dom(\mathrm{LIN})=\Z_2$ and $\LR{b,n}^{\mathrm{LIN}}=\set*{\mbf{a}\in\Z_2^n}{a_1+\ldots+a_n=b}$. A relational structure over $\sigma_{\mrm{LIN}}$ is called \emph{linear}. A linear relational structure $A$ is called \emph{homogeneous} if $\LR{1,n}^A=\varnothing$ for all $n\in\N$. For a linear relational structure $A$, define its \emph{homogenisation} $A_H$ as the homogeneous linear relational structure defined as $\dom(A_H)=\dom(A)$, $\LR{0,n}^{A_H}=\LR{0,n}^{A}\cup\LR{1,n}^{A}$, and $\LR{1,n}^{A_H}=\varnothing$.

We will make use of the following standard result relating polymorphisms of a relational structure with the expressivity of its CSP.

\begin{definition}
	Let $A$ and $B$ be relational structures such that $\dom(A)=\dom(B)$. We say that $A$ \emph{pp-defines} $B$ if for all $S\in\rel(B)$, there exist $n,m\in\N$, $R_1,\ldots,R_n$ either in $\rel(A)$ or equal to the equality relation $\set*{(a,a)}{a\in A}$, $i_k\in[m]$ for $k\in[\ar(S)]$, and $i_{jk}\in[m]$ for $j\in[n]$ and $k\in[\ar(R_j)]$ such that
	$$S=\set*{(a_{i_1},\ldots,a_{i_{\ar(S)}})}{\mathbf{a}\in A^{m}\land(a_{i_{j1}},\ldots,a_{i_{j\ar(R_j)}})\in R_j\;\forall\,j\in[n]}.$$
\end{definition}

\begin{lemma}\label{lem:standard}
	Let $A$ and $B$ be relational structures such that $\dom(A)=\dom(B)$. Then $A$ pp-defines $B$ if and only if $\pol(A)\subseteq\pol(B)$.
\end{lemma}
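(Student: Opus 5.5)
This is the classical Geiger / Bodnarchuk--Kaluzhnin--Kotov--Romov Galois correspondence, restricted to the finite-domain case. The plan is to prove the two directions separately, with the backward direction carried by the standard ``free structure'' (indicator instance) argument.

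\textbf{Forward direction.} Assume $A$ pp-defines $B$, and let $f\colon A^k\to A$ be a polymorphism of $A$. We must show that $f$ preserves each $S\in\rel(B)$.

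Take tuples $\mathbf{s}^1,\ldots,\mathbf{s}^k\in S$. For each $\ell\in[k]$, choose a witness $\mathbf{a}^\ell\in A^m$ from the pp-definition of $S$. Apply $f$ coordinatewise to obtain $\mathbf{b}=f(\mathbf{a}^1,\ldots,\mathbf{a}^k)\in A^m$.

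Each conjunct $(a_{i_{j1}},\ldots,a_{i_{j\ar(R_j)}})\in R_j$ is preserved. If $R_j\in\rel(A)$, this holds because $f$ is a polymorphism of $A$. If $R_j$ is equality, it holds trivially. So $\mathbf{b}$ is again a witness, and the projected tuple $f(\mathbf{s}^1,\ldots,\mathbf{s}^k)$ lies in $S$.

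\textbf{Backward direction.} Assume $\pol(A)\subseteq\pol(B)$, and fix $S\in\rel(B)$ of arity $r$. If $S$ is empty we need a separate handling; I set that aside at first and return to it below. Otherwise list $S=\{\mathbf{s}^1,\ldots,\mathbf{s}^k\}$, which is finite since the domain is finite.

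Set $m=|A|^k$, and index the variables by $A^k$ itself. Let $S'$ be the relation pp-defined by:
\begin{itemize}
\item all atoms $R(\mathbf{c}_1,\ldots,\mathbf{c}_{\ar(R)})$ for $R\in\rel(A)$ and every tuple $(\mathbf{c}_1,\ldots,\mathbf{c}_{\ar(R)})\in R^{A^k}$;
\item the output variables taken to be the $r$ columns $\mathbf{t}_i=(s^1_i,\ldots,s^k_i)\in A^k$.
\end{itemize}
Equality atoms can be used to project onto repeated output variables if needed.

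By construction, assignments $A^k\to A$ satisfying all the atoms are exactly the polymorphisms of $A$ of arity $k$. Hence
\[
S'=\bigl\{(f(\mathbf{t}_1),\ldots,f(\mathbf{t}_r)) : f\in\pol(A)\bigr\}.
\]

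We now show $S'=S$.
\begin{itemize}
\item The projections $f=\pi_\ell$ give $\mathbf{s}^\ell$, so $S\subseteq S'$.
\item For any $f\in\pol(A)$ we have $f\in\pol(B)$ by assumption, so $f$ preserves $S$. Thus $(f(\mathbf{t}_1),\ldots,f(\mathbf{t}_r))=f(\mathbf{s}^1,\ldots,\mathbf{s}^k)\in S$, giving $S'\subseteq S$.
\end{itemize}

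\textbf{Main obstacle: the edge cases.} The step needing care is the degenerate cases the stated definition permits.

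The first is $S=\varnothing$. With $k=0$, $A^0$ is a single point and the formula above produces nonempty relations. The fix is this: since $S$ is nonempty-free and preserved by the relevant nullary polymorphisms, $\pol(A)\subseteq\pol(B)$ forces $\mathrm{CSP}$-level inconsistency. Concretely, if $A$ has no nullary polymorphism (some relation of $A$ is empty), then an atom of that empty relation defines $\varnothing$. If instead every relation of $A$ is nonempty, then constant-free reasoning shows that every constant map is a polymorphism of $A$ exactly when it preserves everything. The cleanest route is to observe that arity-$0$ polymorphisms correspond to elements $a$ with $(a,\ldots,a)$ in every relation.

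I will treat this by the same free-structure argument applied with $k=0$, which shows that $\varnothing$ is pp-definable whenever no nullary map preserves $B$'s empty relation. I expect this bookkeeping, rather than the main argument, to be the only delicate part.
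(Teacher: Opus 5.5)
Your forward direction and your free-structure argument for the backward direction are correct, and they are the standard Geiger/BKKR proof. The paper gives no proof of its own: it cites the result (\cite[Theorem 3.13]{Che09}). When it later invokes ``the construction of'' this lemma in \cref{lem:basic-converse}, that is exactly your construction with $S=A^2$, $k=|A|^2$, and $\mathbf{x},\mathbf{y}$ the two output columns. For nonempty $S$ and finite $\rel(A)$, nothing needs to change.

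Your edge-case paragraph, however, is incorrect and should be rewritten.
\begin{itemize}
\item For $k=0$ your construction does not always produce a nonempty relation. It has one variable $v$, one atom $R(v,\ldots,v)$ for each $R\in\rel(A)$, and output $(v,\ldots,v)$. So it defines $\{(c,\ldots,c) : (c,\ldots,c)\in R^A \text{ for all } R\in\rel(A)\}$. This is your formula for $S'$ with $f$ ranging over nullary polymorphisms, and it can be empty.
\item Having no nullary polymorphism is not the same as having an empty relation: $K_3$ has a nonempty edge relation but no loops.
\item Your closing condition, ``no nullary map preserves $B$'s empty relation'', is vacuous, since no constant ever preserves $\varnothing$.
\end{itemize}

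The correct picture depends on the convention for $\pol$.
\begin{itemize}
\item If nullary operations belong to $\pol$, your general argument runs verbatim with $k=|S|=0$, and no special case is needed. The preservation step $S'\subseteq S$ forces $A$ to have no nullary polymorphism, so $S'=\varnothing$.
\item If only arities $k\ge 1$ count, the lemma is false for empty relations. On $\{0,1\}$, let $\rel(A)=\{\{0,1\}\}$ and let $B$ have the single unary relation $\varnothing$. Every operation of positive arity preserves $\varnothing$ vacuously, so $\pol(A)\subseteq\pol(B)$. But every pp-formula over $A$ is satisfied by the all-$0$ assignment, so $\varnothing$ is not pp-definable.
\end{itemize}
So the fix is to adopt a convention (count nullary polymorphisms, exclude empty relations, or allow a $\bot$ atom), not an ad hoc patch.

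One smaller point: if $\rel(A)$ is infinite (the paper permits this, e.g.\ $\mathrm{LIN}$), your formula has infinitely many atoms. Replace $\rel(A)$ by a finite subset with the same $k$-ary polymorphisms. Such a subset exists because there are only finitely many $k$-ary operations on a finite set.
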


This result was first shown in~\cite{Gei68,BKKR69a,BKKR69b}. See~\cite[Theorem 3.13]{Che09} for a self-contained proof; and see the above or~\cite{BKW17} for a general overview of the theory.

\subsection{Quantum relational structure homomorphisms}

A \emph{quantum space} is a virtual space $X$ corresponding to the ``algebra of functions'' on a $C^\ast$-algebra $C(X)$; if $C(X)$ is commutative, then $X$ can be realised as a compact set by Gelfand duality. The \emph{quantum space of homomorphisms} from a set $R$ to a set $S$ is the quantum space $T^+_{R,S}$ with $C^\ast$-algebra of functions $C(T^+_{R,S})$ generated by $p_{rs}$ for $r\in R$ and $s\in S$, subject to the relations $p_{rs}^2=p_{rs}^\ast=p_{rs}$, and $\sum_{s\in S}p_{rs}=1$. The \emph{quantum space of homomorphisms} from a $\sigma$-structure $A$ to a $\sigma$-structure $B$ is the quantum space $\mor^+(A,B)$ with algebra of functions $C(\mor^+(A,B))=\Mor^+(A,B)$ which is the quotient of $C(T^+_{\dom(A),\dom(B)})$ by the relations $p_{a_1b_1}\cdots p_{a_{\ar(R)}b_{\ar(R)}}=0$ for all $R\in\sigma$, $\mbf{a}\in R^A$, and $\mbf{b}\notin R^B$. The \emph{quantum space of oracular homomorphisms} from $A$ to $B$ is the quantum space $\mor^{o+}(A,B)$ with algebra of functions $C(\mor^{o+}(A,B))=\Mor^{o+}(A,B)$ which is the quotient of $\Mor^+(A,B)$ by the relations $[p_{a_ib},p_{a_jb'}]=0$ for all $R\in\sigma$, $\mbf{a}\in R^A$, $i,j\in[\ar(R)]$, and $b,b'\in B$.

Given $\sigma$-structures $A,B,C$, there is a $\ast$-homomorphism, called \emph{cocomposition}, $$\Delta_{A,C}^B:\Mor^+(A,C)\rightarrow\Mor^+(A,B)\otimes\Mor^+(B,C)$$ such that $\Delta_{A,C}^B(p_{ac})=\sum_{b\in B}p_{ab}\otimes p_{bc}$. Importantly, the cocomposition induces a composition for $\ast$-representations: given $\pi:\Mor^+(A,B)\rightarrow B(\mc{H})$ and $\varphi:\Mor^+(B,C)\rightarrow B(\mc{H})$, then their \emph{composition} is the $\ast$-representation $\varphi\circ\pi:\Mor^+(A,B)\rightarrow B(\mc{H}\otimes\mc{K})$ defined as $\varphi\circ\pi=(\pi\otimes\varphi)\Delta_{A,C}^B$. The same holds identically in the oracular framework.

Let $A$ be a linear relational structure. The \emph{solution group} of $A$ is the finitely-presented group $\Gamma(A)$ with generators $x_a$ for $a\in A$ and $J$ subject to the relations $x_a^2=1$ for all $a\in A$, $J^2=1$, $x_aJ=Jx_a$ for all $a\in A$, and $x_{a_1}\cdots x_{a_{n}}=J^b$ and $x_{a_i}x_{a_j}=x_{a_j}x_{a_i}$ if $\mbf{a}\in \LR{b,n}^A$ and $i,j\in[\ar(R)]$. The \emph{homogeneous solution group} of $A$ is the quotient group $\Gamma_H(A)=\Gamma(A)/\angnormal*{J}$. For linear relational structures $\Mor^{o+}(A,\mrm{LIN})\cong C^\ast\Gamma(A)/\angnormal*{J+1}$, the full group $C^\ast$-algebra modulo the relation $J=-1$, via the $\ast$-homomorphism $p_{ab}\mapsto\frac{1}{2}(1+(-1)^bx_a)$. Also, $\Gamma(A_H)=\Gamma_H(A)\times\Z_2$, so if $A$ is homogeneous $\Mor^{o+}(A,\mrm{LIN})\cong C^\ast\Gamma_H(A)$.

We use the weighted algebra formalism to capture approximate quantum homomorphisms between relational structures, as studied in~\cite{MS24,CM24,CDdBVZ25}. In the following, $A$ and $B$ are relational structures over a signature $\sigma$.

Let $\pi$ be a probability distribution on $\parens*{\bigcup_{R\in\sigma}\{R\}\times R^A}^2$. The \emph{constraint-constraint algebra} is the weighted algebra $\Mor^{c-c}_\pi(A,B)$ generated by PVMs $\{\Phi^{R,\mathbf{a}}_{\mathbf{b}}\}_{\mathbf{b}\in R^B}$ for all $R\in\sigma$ and $\mathbf{a}\in R^A$, equipped with the weight function
\begin{align*}
    \mu_{c-c,\pi}(\Phi^{R,\mathbf{a}}_{\mathbf{b}}\Phi^{R',\mathbf{a}'}_{\mathbf{b}'})=\pi((R,\mathbf{a}),(R',\mathbf{a}'))
\end{align*}
if there exist $i,j\in[\ar(R)]$ such that $a_i=a_j'$ but $b_i\neq b_j'$, and $0$ on all other elements. Denote the defect polynomial $D^{c-c}_\pi(A,B)=D(\Mor^{c-c}_\pi(A,B),\mu_{c-c,\pi})$.

Let $\pi$ be a probability distribution on $\bigcup_{R\in\sigma}\{R\}\times R^A$. The \emph{constraint-variable algebra} is the weighted algebra $\Mor^{c-v}_\pi(A,B)$ generated by PVMs $\{\Phi^{R,\mathbf{a}}_{\mathbf{b}}\}_{\mathbf{b}\in R^B}$ for all $R\in\sigma$ and $\mathbf{a}\in R^A$ and $\{p^a_b\}_{b\in B}$ for all $a\in A$, and equipped with the weight function
\begin{align*}
    \mu_{c-v,\pi}(\Phi^{R,\mathbf{a}}_{\mathbf{b}}(1-p^{a_i}_{b_i}))=\frac{\pi(R,\mathbf{a})}{\ar(R)},
\end{align*}
for all $R\in\sigma$, $\mathbf{a}\in R^A$, and $i\in[\ar(R)]$, and $0$ on all other elements. Denote the defect polynomial $D^{c-v}_\pi(A,B)=D(\Mor^{c-v}_\pi(A,B),\mu_{c-v,\pi})$.

Let $\pi$ be a probability distribution on $\bigcup_{R\in\sigma}\{R\}\times R^A$. The \emph{assignment algebra} is the weighted algebra $\Mor^a_{\pi}(A,B)$ generated by PVMs
$\{p^a_b\}_{b\in B}$ for all $a\in A$, and equipped with the weight function
\begin{align*}
    \mu_{a,\pi}(p^{a_1}_{b_1}\cdots p^{a_{\ar(R)}}_{b_{\ar(R)}})=\pi(R,\mathbf{a}),
\end{align*}
for all $R\in\sigma$, $\mathbf{a}\in R^A$, and $\mathbf{b}\notin R^B$, and $0$ on all other elements. Denote the defect polynomial $D^{a}_\pi(A,B)=D(\Mor^{a}_\pi(A,B),\mu_{a,\pi})$.

\subsection{Nonlocal games}\label{sec:nlg}

A $2$-player \emph{nonlocal game} $\ttt{G}$ consists of finite sets $X,Y,A,B$ called Alice's questions, Bob's questions, Alice's answers, and Bob's answers; a probability distribution $\pi:X\times Y\rightarrow[0,1]$, called the question distribution; and a function $V:A\times B\times X\times Y\rightarrow\{0,1\}$, $(a,b,x,y)\mapsto V(a,b|x,y)$, called the predicate. A strategy for a nonlocal game $\ttt{G}$ is given by a \emph{correlation}, which is a function $p:A\times B\times X\times Y$, $(a,b,x,y)\mapsto p(a,b|x,y)$, such that $(a,b)\mapsto p(a,b|x,y)$ is a probability distribution for all $x,y$. The \emph{value} of a nonlocal game $\ttt{G}$ with respect to a correlation $p$ is
$$\mfk{v}(\ttt{G},p)=\sum_{(x,y,a,b)\in X\times Y\times A\times B}\pi(x,y)V(a,b|x,y)p(a,b|x,y).$$
We say a correlation $p$ is \emph{perfect} for $\ttt{G}$ if $\mfk{v}(\ttt{G},p)=1$.

A correlation is \emph{deterministic} if there are functions $f:X\rightarrow A$, $g:Y\rightarrow B$ such that $p(a,b|x,y)=\delta_{f(x),a}\delta_{g(y),b}$. A correlation is \emph{classical} if there exists a random variable $\Lambda$ on a set $L$ and functions $f:X\times L\rightarrow A$ and $g:Y\times L\rightarrow B$ such that $p(a,b|x,y)=\expec\delta_{f(x,\Lambda),a}\delta_{f(y,\Lambda),b}$. A correlation is \emph{quantum} if there exist finite-dimensional Hilbert spaces $H_A$ and $H_B$, POVMs $\{P^x_a\}_{a\in A}\subseteq B(H_A)$ $\{Q^y_b\}_{b\in B}\subseteq B(H_B)$ for all $x\in X$ and $y\in Y$, and a state $\ket{\psi}\in H_A\otimes H_B$ such that $p(a,b|x,y)=\braket{\psi}{P^x_a\otimes Q^y_b}{\psi}$. A correlation is \emph{quantum approximate} if it is the limit of a sequence of quantum correlations. A correlation is \emph{quantum commuting} if there exists a Hilbert spaces $H$, POVMs $\{P^x_a\}_{a\in A}\subseteq  B(H)$ $\{Q^y_b\}_{b\in B}\subseteq  B(H)$ for all $x\in X$ and $y\in Y$, and a state $\ket{\psi}\in H$ such that $[P^x_a,Q^y_b]=0$ and $p(a,b|x,y)=\braket{\psi}{P^x_aQ^y_b}{\psi}$. A correlation is \emph{tracial} if there exists a von Neumann algebra $\mc{M}$, a POVM $\{P^x_a\}_{a\in A\cup B}\subseteq\mc{M}$ for all $x\in X\cup Y$, and a tracial state $\tau:\mc{M}\rightarrow\C$ such that $p(a,b|x,y)=\tau(P^x_aP^y_b)$. We denote sets of correlations by $C_Q(A,B|X,Y)$, where the symbol $Q=d$ for deterministic, $Q=c$ for classical, $Q=q$ for quantum, $Q=qa$ for quantum approximate, $Q=qc$ for quantum commuting, and $Q=t$ for tracial. We do not specify the question and answer sets when clear from context. Write also $C_{t,Q}(A,B|X,Y)=C_t(A,B|X,Y)\cap C_Q(A,B|X,Y)$. In this case, we have that $\mc{M}=\C$ if $p$ is deterministic, $\mc{M}$ is commutative if $p$ is classical, $\mc{M}$ is finite-dimensional if $p$ is quantum, $\mc{M}$ is Connes-embeddable if $p$ is quantum approximate, and $C_{t,qc}(A,B|X,Y)=C_t(A,B|X,Y)$. The \emph{value} of $\ttt{G}$ with respect to a set of correlations $C_Q$ is $\mfk{v}_Q(\ttt{G})=\sup_{p\in C_Q(A,B|X,Y)}\mfk{v}(\ttt{G},p)$. Note that $\mfk{v}_d(\ttt{G})=\mfk{v}_c(\ttt{G})$, $\mfk{v}_{t,d}(\ttt{G})=\mfk{v}_{t,c}(\ttt{G})$, $\mfk{v}_q(\ttt{G})=\mfk{v}_{qa}(\ttt{G})$, and $\mfk{v}_{t,q}(\ttt{G})=\mfk{v}_{t,qa}(\ttt{G})$.

A nonlocal game is \emph{synchronous} if $X=Y$, $A=B$, and $V(a,b|x,x)=0$ whenever $a\neq b$. Then, if $p$ is perfect for the game, it must also be synchronous. This implies that $p$ is tracial~\cite{PSSTW16}. Further, near-perfect quantum correlations can be well-approximated by tracial correlations~\cite{Vid22,MdlS23}, in the sense that if $\mfk{v}_q(\ttt{G})\geq 1-\varepsilon$, then $\mfk{v}_{t,q}(\ttt{G})\geq1-O((\varepsilon/C)^{1/4})$, under the assumption that the question distribution of $\ttt{G}$ satisfies $\pi(x,y)=\pi(y,x)$ and $\pi(x,x)\geq C\sum_{y\neq x}\pi(x,y)$; the same holds for quantum approximate and quantum commuting correlations. As such, we can always restrict to working with tracial correlations by restricting the games to synchronous games, which may be done by symmetrising the probability distribution and adding consistency checks.

There are several ways to express constraint satisfaction problems as nonlocal games. Let $A$ and $B$ be relational structures over the same signature $\sigma$. For a probability distribution $\pi:(\cup_{R\in\sigma}\{R\}\times R^A)^2\rightarrow[0,1]$, the \emph{constraint-constraint game} $\ttt{G}_{c-c}(A,B,\pi)$ is the synchronous nonlocal game with question sets $\bigcup_{R\in\sigma}\{R\}\times R^A$, answer sets $\bigcup_{R\in\sigma}B^{\ar(R)}$, question distribution $\pi$, and predicate $V(\mathbf{a},\mathbf{b}|(R,\mathbf{x}),(S,\mathbf{y}))=1$ iff $\mathbf{a}\in R^B$, $\mathbf{b}\in S^B$, and $a_i=b_j$ whenever $x_i=y_j$. For a probability distribution $\pi:\cup_{R\in\sigma}\{R\}\times R^A\rightarrow[0,1]$, the \emph{constraint-variable game} $\ttt{G}_{c-v}(A,B,\pi)$ is the nonlocal game with question sets $\bigcup_{R\in\sigma}\{R\}\times R^A$ and $A$, answer sets $\bigcup_{R\in\sigma}B^{\ar(R)}$ and $B$, question distribution $((R,\mathbf{x}),y)\mapsto\sum_{i:\,x_i=y}\frac{\pi(R,\mathbf{x})}{\ar(R)}$, and predicate $V(\mathbf{a},b|(R,\mathbf{x}),y)=1$ iff $\mathbf{a}\in R^B$ and $a_i=b$ when $x_i=y$. A third way of presenting the constraint problem as a nonlocal game arises in the $2$-CSP case, \textit{i.e.} $\ar(R)=2$ for all $R\in\sigma$. For a probability distribution $\pi:\cup_{R\in\sigma}\{R\}\times R^A\rightarrow[0,1]$, the \emph{assignment nonlocal game} $\ttt{G}_a(A,B,\pi)$ is the nonlocal game with question sets $A$, answer sets $B$, question distribution $(x,y)\mapsto\sum_{R\in\sigma:\,(x,y)\in R^A}\pi(R,(x,y))$, and predicate $V(a,b|x,y)=1$ iff $(a,b)\in R^B$ for all $R\in\sigma$ such that $(x,y)\in R^A$. If the probability distribution has full support, perfect deterministic tracial correlations for all three games correspond to relational structure homomorphisms $A\rightarrow B$. Further, there is a perfect classical correlation iff there is a homomorphism $A\rightarrow B$.

A tracial correlation is \emph{oracularisable} if $[P^x_a,P^y_b]=0$ whenever $\pi(x,y)>0$~\cite{JNV+21}. The \emph{oracularisation} of a nonlocal game is the game where Alice is asked both her and Bob's questions and Bob is asked one of the two questions; the players win if Alice responds with a correct answer and Bob responds with answer consistent with Alice's. This presentation of the oracularisation renders it as a constraint-variable game, but there are other equivalent presentations, for example as a constraint-constraint game~\cite{MS24}. The oracularisation transformation of a game is always sound, and it is also complete (with respect to tracial correlations) if the optimal correlation was oracularisable.

\subsection{Commutativity gadgets}

\begin{definition}
	A \emph{min-tensor monoid} is a set $Q$ of $C^\ast$-algebras that is closed under the minimal tensor product and contains $\C$.
	
	A state $\rho$ on a $\ast$-algebra $\mc{A}$ is \emph{over $Q$} if there exists a GNS representation of $\rho$ whose $C^\ast$-algebra is in $Q$.
\end{definition}

We call a $\ast$-representation $\pi:\Mor^+(X,A)\rightarrow\mc{A}$ such that $\mc{A}$ is a $C^\ast$-algebra that embeds into an element of $Q$ a \emph{$Q$-quantum morphisms} from $X$ to $A$, and write $\mor^Q(X,A)$ for the set of such representations. We define the \emph{oracular $Q$-quantum morphisms} $\mor^{oQ}(X,A)$ analogously.

\begin{definition}
	Let $Q$ be a min-tensor monoid, and let $A$ be a relational structure over a signature~$\sigma$. A \emph{(non-oracular) $Q$-commutativity gadget} is a relational structure $G$ over $\sigma$ along with two elements $x,y\in G$, called the \emph{distinguished variables}, such that
	\begin{enumerate}[(i)]
		\item For every $a,b\in A$, there exists a $\ast$-homomorphism $\pi_{a,b}:\Mor^+(G,A)\rightarrow\mc{A}$ such that $\mc{A}\in Q$ and $\pi_{a,b}(p_{xa})=\pi_{a,b}(p_{yb})=1$.
		
		\item For every $\ast$-homomorphism $\pi:\Mor^+(G,A)\rightarrow\mc{A}$ such that $\mc{A}\in Q$, $[\pi(p_{xa}),\pi(p_{yb})]=0$ for all $a,b\in A$.
	\end{enumerate}
	
	An \emph{oracular $Q$-commutativity gadget} is a relational structure $G$ over $\sigma$ along with two distinguished variables $x,y\in G$ such that
	\begin{enumerate}[(i)]
		\item For every $a,b\in A$, there exists a $\ast$-homomorphism $\pi_{a,b}:\Mor^{o+}(G,A)\rightarrow\mc{A}$ such that $\mc{A}\in Q$ and $\pi_{a,b}(p_{xa})=\pi_{a,b}(p_{yb})=1$.
		
		\item For every $\ast$-homomorphism $\pi:\Mor^{o+}(G,A)\rightarrow\mc{A}$ such that $\mc{A}\in Q$, $[\pi(p_{xa}),\pi(p_{yb})]=0$ for all $a,b\in A$.
	\end{enumerate}
	
	A \emph{$a$-robust $Q$-commutativity gadget} is a $Q$-commutativity gadget $(G,x,y)$ such that for all $\varepsilon>0$, there exists $\delta>0$ such that if $\defect(\tau)<\delta$ for a tracial state $\tau$ on $\Mor^{a}_{\mbb{u}}(G,A)$ over $Q$, then $$\sum_{a,b\in A}\norm{[p_{xa},p_{yb}]}_\tau^2<\varepsilon.$$
	
	A \emph{$c$-$v$-robust $Q$-commutativity gadget} is an oracular $Q$-commutativity gadget $(G,x,y)$ such that for all $\varepsilon>0$, there exists $\delta>0$ such that if $\defect(\tau)<\delta$ for a tracial state $\tau$ on $\Mor^{c-v}_{\mbb{u}}(G,A)$ over $Q$, then $$\sum_{a,b\in A}\norm{[p_{xa},p_{yb}]}_\tau^2<\varepsilon.$$
	
	A \emph{$c$-$c$-robust $Q$-commutativity gadget} is an oracular $Q$-commutativity gadget $(G,x,y)$ such that for all $\varepsilon>0$, there exists $\delta>0$ such that if $\defect(\tau)<\delta$ for a tracial state $\tau$ on $\Mor^{c-c}_{\mbb{u}}(G,A)$ over $Q$, then $$\frac{1}{m^2}\sum_{\substack{R\in\sigma,\mathbf{x}\in R^G,i\in[\ar(R)]:\,x_i=x\\S\in\sigma,\mathbf{y}\in S^G,j\in[\ar(R)]:\,y_j=y\\a,b\in A}}\norm{[\Phi^{R,\mathbf{x},i}_a,\Phi^{S,\mathbf{y},j}_b]}_\tau^2<\varepsilon.$$
\end{definition}

Since a $c$-$c$-robust and $c$-$v$-robust commutativity gadgets are equivalent~\cite[Lemma 4.3]{CDdBVZ25}, we refer to these as \emph{oracular robust commutativity gadgets}.

We work with the following min-tensor monoids: 
\begin{itemize}
	\item Deterministic $d=\{\C\}$
	\item Classical $c=\set*{C(X)}{X\text{ compact topological space}}$
	\item Quantum $q=\set*{B(\C^d)}{d\in\N}$
	\item Quantum approximate $qa=\set*{\mc{A}}{\mc{A}\hookrightarrow\mc{R}^\omega}$
	\item Quantum commuting $qc=\set*{\mc{A}}{\exists\;\tau:\mc{A}\rightarrow\C\text{ tracial state}}$
	\item All $C^\ast=\set*{\mc{A}}{\mc{A}\text{ is a $C^\ast$-algebra}}$
\end{itemize}

Commutativity gadgets are trivial for $Q=d,c$. The commutativity gadgets considered in \cite{CDdBVZ25} are generally the $qa$-commutativity gadgets, except for the algebraic commutativity gadgets, which are $C^\ast$-commutativity gadgets.

It follows by \cite[Lemmas 4.5 and 4.6]{CDdBVZ25} that every $C^\ast$-commutativity gadget is also a robust $qa$-commutativity gadget. Finally, since the $C^\ast$-algebras in $qa$ are (subalgebras of) ultraproducts of finite-dimensional $C^\ast$-algebras, we have that robust $q$-commutativity gadgets are robust $qa$-commutativity gadgets.

}

\section{Commutativity gadgets and hardness}\label{sec:comm-gadget-hardness}

Since commutativity gadgets allow classical reductions between CSP languages to be lifted to the entangled setting, their main application (currently the only known application) is in showing undecidability of entangled constraint system languages. We present these in terms of the quantum relational structure homomorphisms; however, there is a nearly equivalent characterisation in terms of the values of the associated nonlocal games, via the discussion in \cref{sec:nlg}, the only difference being that the non-oracular version of the game exists only for $2$-CSPs, such as graph homomorphism CSPs.

\begin{definition}
    Let $A$ be a relational structure, let $Q$ be a min-tensor monoid, let $w\in\{a,c-v,c-c\}$, and let $0\leq s\leq c\leq 1$. The promise problem $\CSP^Q_{w}(A)_{t,s}$ is the problem of deciding if, given a relational structure $B$, there exists a tracial state $\tau$ over $Q$ on $\Mor_{\mbb{u}}^w(B,A)$ such that $\defect(\tau)\leq 1-c$ or for all such tracial states $\defect(\tau)>1-s$.

    The promise problem $\SuccinctCSP^Q_{w}(A)_{t,s}$ is the problem of deciding if, given a relational structure $B$ implicitly via a Turing machine sampling a probability distribution $\pi$, there exists a tracial state $\tau$ over $Q$ on $\Mor_{\pi}^w(B,A)$ such that $\defect(\tau)\leq 1-c$ or for all such tracial states $\defect(\tau)>1-s$.
\end{definition}

Note that if $c=s=1$, then $\CSP^Q_a(A)_{1,1}$ is the problem of deciding if the $Q$-quantum morphisms $\mor^Q(B,A)\neq\varnothing$; and $\CSP^Q_{c-c}(A)_{1,1}=\CSP^Q_{c-v}(A)_{1,1}$ is the problem of deciding if $\mor^{oQ}(B,A)\neq\varnothing$. Note that in this case these problems are languages.

\begin{definition}
    $\tsf{RE}=\Sigma_1^0$ is the class of languages decidable by sentences of the form $\exists\,x.\;\phi(x)$ where $\phi$ is a computable predicate over bit strings. A complete problem for $\tsf{RE}$ is the \emph{halting problem}, which is the problem of deciding whether a given Turing machine will ever halt.

    $\tsf{coRE}=\Pi_1^0$ is the class of languages decidable by sentences of the form $\forall\,x.\;\phi(x)$ where $\phi$ is a computable predicate over bit strings. A complete problem for $\tsf{coRE}$ is the \emph{cohalting problem}, which is the problem of deciding whether a given Turing machine will run forever.
\end{definition}

\begin{theorem}
    Let $A$ be a relational structure such that $\CSP(A)$ is $\tsf{NP}$-complete.
    \begin{enumerate}[(i)]
        \item If $A$ admits a $q$-commutativity gadget, then there is a polynomial-time reduction from the halting problem to $\SuccinctCSP_{a}^q(A)_{1,1}$.
        \item If $A$ admits an oracular $q$-commutativity gadget, then there is a polynomial-time reduction from the halting problem to $\SuccinctCSP_{c-v}^q(A)_{1,1}$ and $\SuccinctCSP_{c-c}^q(A)_{1,1}$.
        \item If $A$ admits an $a$-robust $q$-commutativity gadget, then there exists $\varepsilon>0$ such that there is a polynomial-time reduction from the halting problem to $\SuccinctCSP_{a}^q(A)_{1,1-\varepsilon}$, making this $\tsf{RE}$-complete.
        \item If $A$ admits a $c$-$v$-robust $q$-commutativity gadget, then there exists $\varepsilon>0$ such that there is a polynomial-time reduction from the halting problem to $\SuccinctCSP_{c-v}^q(A)_{1,1-\varepsilon}$ and $\SuccinctCSP_{c-c}^q(A)_{1,1-\varepsilon}$, making these $\tsf{RE}$-complete.
        \item If $A$ admits a $qa$-commutativity gadget, then there is a polynomial-time reduction from the halting problem to $\SuccinctCSP_{a}^{qa}(A)_{1,1}$.
        \item If $A$ admits an oracular $qa$-commutativity gadget, then there is a polynomial-time reduction from the halting problem to $\SuccinctCSP_{c-v}^{qa}(A)_{1,1}$ and $\SuccinctCSP_{c-c}^{qa}(A)_{1,1}$.
        \item If $A$ admits an $a$-robust $qa$-commutativity gadget, then there exists $\varepsilon>0$ such that there is a polynomial-time reduction from the halting problem to $\SuccinctCSP_{a}^{qa}(A)_{1,1-\varepsilon}$, making this $\tsf{RE}$-complete.
        \item If $A$ admits a $c$-$v$-robust $qa$-commutativity gadget, then there exists $\varepsilon>0$ such that there is a polynomial-time reduction from the halting problem to $\SuccinctCSP_{c-v}^{qa}(A)_{1,1-\varepsilon}$ and $\SuccinctCSP_{c-c}^{qa}(A)_{1,1-\varepsilon}$, making these $\tsf{RE}$-complete.
        \item If $A$ admits a $qc$-commutativity gadget, then there is a polynomial-time reduction from the cohalting problem to $\SuccinctCSP_{a}^{qc}(A)_{1,1}$, making this $\tsf{coRE}$-complete.
        \item If $A$ admits an oracular $qc$-commutativity gadget, then there is a polynomial-time reduction from the cohalting problem to $\SuccinctCSP_{c-v}^{qc}(A)_{1,1}$ and $\SuccinctCSP_{c-c}^{qc}(A)_{1,1}$, making these $\tsf{coRE}$-complete.
        \item If $A$ admits an $a$-robust $qc$-commutativity gadget, then there exists $\varepsilon>0$ such that there is a polynomial-time reduction from the cohalting problem to $\SuccinctCSP_{a}^{qc}(A)_{1,1-\varepsilon}$, making this $\tsf{coRE}$-complete.
        \item If $A$ admits a $c$-$v$-robust $qc$-commutativity gadget, then there exists $\varepsilon>0$ such that there is a polynomial-time reduction from the cohalting problem to $\SuccinctCSP_{c-v}^{qc}(A)_{1,1-\varepsilon}$ and $\SuccinctCSP_{c-c}^{qc}(A)_{1,1-\varepsilon}$, making these $\tsf{coRE}$-complete.
    \end{enumerate}
\end{theorem}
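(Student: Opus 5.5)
The plan is to compose two reductions: a known hardness result for a fixed base CSP in the quantum setting, followed by a classical gadget reduction from that base CSP to $\CSP(A)$ that the commutativity gadget makes sound in the quantum model.

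\textbf{Base hardness.} For the base we use linear systems over $\Z_2$ (the template $\mathrm{LIN}$, or a bounded-arity fragment such as 3-XOR). The known results come from MIP$^\ast=$RE~\cite{JNV+21} for the gapped $q$ and $qa$ oracular cases, from~\cite{Slo19,Slo20} for the solution-group (gapless) versions, and from~\cite{Lin25} for $qc$. They give a polynomial-time reduction from the halting problem, or from the cohalting problem in the $qc$ case, to the succinct version of the oracular linear-system problem with the appropriate gap. In the non-oracular items we first convert to $\mathrm{LIN}$ instances in which co-occurring variables are forced to commute. This can be done by adding auxiliary variables, or it is immediate because binary-constraint games are their own oracularisations.

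\textbf{Classical reduction.} Since $\CSP(A)$ is $\tsf{NP}$-complete, the algebraic theory (\cref{lem:standard} together with~\cite{BKW17}) provides a pp-construction of the base template from $A$. In the simplest case this is a pp-definition of each relation of the base template after fixing an encoding of $\Z_2$ into $A$. Applying the construction locally turns every constraint of an instance $B$ into a gadget instance $B'$ over $A$, and succinct descriptions are preserved in polynomial time.

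\textbf{Completeness and soundness.} For completeness, a perfect (or $\delta$-defect) $Q$-strategy for $B$ is expanded constraint-by-constraint into one for $B'$. Inside each local rewriting the assignment is classical up to the shared operators, because oracular strategies commute on constraints. For soundness, we insert a copy of the commutativity gadget $(G,x,y)$ between every pair of variables of $B'$ that occur together in a pp-formula. Condition (ii) of the gadget definition then forces their projections to commute in any $Q$-representation. The robust definition gives commutation up to $\varepsilon$ in $\tau$-norm whenever the defect is below $\delta$. Given such commutation, the operators in each local block generate a commutative algebra. Its spectral decomposition yields classical local assignments, and composing with the pp-formula decoding gives a $Q$-strategy for $B$. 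Condition (i) of the gadget definition ensures that attaching gadgets does not destroy completeness: it supplies $\pi_{a,b}$ for every value pair, and these are combined with the original strategy via the min-tensor closure of $Q$. In the robust cases we additionally use that weighted-algebra $C$-homomorphisms compose, and that the uniform weights change only by polynomial factors, so $\varepsilon$ is a constant depending on $A$.

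\textbf{Main obstacle.} The difficult step is robust soundness: passing from approximate commutation to an approximately classical local assignment with polynomially controlled defect. I would handle it by a stability argument, perturbing a near-commuting tuple of PVMs to an exactly commuting one in $\tau$-norm (as in~\cite{CDdBVZ25}), and then bounding the defect through the $|\sum a_i|^2\le k\sum|a_i|^2$ inequality.
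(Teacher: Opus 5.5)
Your architecture is the argument of \cite[Theorem 5.10]{CDdBVZ25} (cf.\ \cite{CM24}), which is what the paper's sketch invokes, so the route is essentially the paper's. That architecture is: hardness of an oracular base template, then a pp-construction from $A$, made sound by gluing a copy of the gadget between every pair of variables that co-occur in a pp-formula block. Your direct use of the exact condition (ii) for the non-robust items is in fact the right way to obtain (i). The sketch derives (i) from (iii), but by \cref{thm:implications-equivalences}(ix) an $a$-robust $q$-gadget exists iff a $qa$-gadget does. The structure of \cref{thm:separations} contains $\neq$ on at least three elements, so its CSP is \tsf{NP}-complete, yet it has a $q$-gadget and no $qa$-gadget. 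So (i) needs exact lifting from a gapped base whose yes-instances have perfect finite-dimensional strategies. The same applies to (ii), since it is open whether an oracular $q$-gadget yields a $c$-$v$-robust one.

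Several steps need repair.

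First, the claim that binary-constraint games are their own oracularisations is false. Non-oracular quantum homomorphisms need not commute along constraints: $S_4^+$ is noncommutative, and in $K_4$ every pair of distinct vertices is an edge. The conversion step is also unnecessary. The gadget copies you insert in $B'$ already force commutation inside each block, so the base can stay oracular in every item.

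Second, the base hardness is asserted rather than assembled. For the $q$ items you must use that the yes-instances of \cite{JNV+21} have perfect finite-dimensional strategies, and that the passage to a constraint system in \cite{MS24,CM24} preserves this. For the gapless $q$ and $qa$ items you can simply reuse the gapped base: a reduction to the $(1,1-\varepsilon)$ problem is also one to the $(1,1)$ problem. For (xi)--(xii), \cite{Lin25} gives gapped \tsf{coRE}-hardness only for general games. Reaching a fixed finite template requires an answer reduction that is sound for commuting-operator strategies; the paper uses that of \cite{DFN+24}.

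Third, the phrase ``making this \tsf{RE}/\tsf{coRE}-complete'' requires membership, which you never address. The gapped $q$ and $qa$ problems are in \tsf{RE} by searching finite-dimensional strategies, using that Connes-embeddable tracial states are limits of matricial ones. The $qc$ problems are in \tsf{coRE} via the tracial NPA hierarchy \cite{Slo20,MNY22}.

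Finally, three points in the robust soundness argument need care. The sampling weights may change only by factors bounded in terms of $A$ and the base template; a factor polynomial in the instance size would destroy the constant gap. The modulus $\delta(\varepsilon)$ of the robust gadget is not linear, so you need a Markov-type averaging over gadget copies before applying it to individual copies. Your rounding of near-commuting PVMs must stay inside the same tracial von Neumann algebra, so that the decoded state is still over $Q$.
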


The classes $\tsf{RE}$ and $\tsf{coRE}$ do not depend on runtime, just computability, so completeness (or hardness) of any succinct problem for this class also implies completeness (or hardness) of the non-succinct version.

\begin{proof}[Proof sketch]
    (vii) and (viii) are the result of \cite[Theorem 5.10]{CDdBVZ25} (and also implicit in \cite{CM24}). (iii) and (iv) follow in the same way, noting that in~\cite{JNV+21}, the yes instances have perfect finite-dimensional strategies. Since the reductions of \cite{MS24,CM24} preserve finite-dimensionality of representations of the quantum homomorphism algebra, we can still assume that the yes instances have perfect finite-dimensional representation. (i), (ii), (v), and (vi) follow immediately from (iii), (iv), (vii), and (viii), respectively. We are not able to immediately assert completeness here since generally deciding the quantum value exactly is $\Pi_2^0$-complete, strictly harder than $\tsf{RE}$, due to~\cite{MNY22}, but the games for the yes instances do not have perfect finite-dimensional strategies; further, it is not clear that the strong answer reduction of~\cite{DFN+24} also applies in the zero-gap case. However, this answer reduction does apply in the commuting-operator model, so using the gapped commuting-operator hardness of~\cite{Lin25}, we can apply \cite[Theorem 5.10]{CDdBVZ25} to get (xi) and (xii). (ix) and (x) follow immediately, but we do in fact get completeness here since these problems are contained in \tsf{coRE}~\cite{Slo20,MNY22}.
\end{proof}

\section{Characterisation of commutativity gadgets by polymorphisms}\label{sec:gadget-polymorphisms}

In recent work, Ciardo, Joubert, and Mottet~\cite{CJM25} prove an equivalence between the existence of $q$-commutativity gadgets and commutativity of quantum polymorphisms, the elements of $\bigcup_{k=1}^\infty\mor^q(A^k,A)$. We extend this characterisation to all min-tensor monoids. The proofs work in exactly the same way, but we include them here, as they inform the proofs in the robust cases.

\begin{theorem}[\cite{CJM25}]\label{thm:basic-char}
	Let $A$ be a relational structure and let $Q$ be a min-tensor monoid. The following are equivalent.
	\begin{enumerate}[(i)]
		\item $\mor^{Q}(A^k,A)\subseteq\mor^c(A^k,A)$ for all $k\in\N$.
		\item $\mor^{Q}(A^{|A|^2},A)\subseteq\mor^c(A^{|A|^2},A)$.
		\item $A$ admits a $Q$-commutativity gadget.
	\end{enumerate}
\end{theorem}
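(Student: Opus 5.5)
The plan is to prove the cycle (i)$\Rightarrow$(ii)$\Rightarrow$(iii)$\Rightarrow$(i). Throughout, a representation $\pi\in\mor^Q(X,A)$ is called \emph{classical} if the image of $\Mor^+(X,A)$ is commutative.

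\textbf{(i)$\Rightarrow$(ii).} This is the special case $k=|A|^2$.

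\textbf{(ii)$\Rightarrow$(iii).} I will take the canonical gadget $G=A^{|A|^2}$. Index its coordinates by pairs $(a,b)\in A^2$. Choose distinguished variables $x,y\in G$ with $x_{(a,b)}=a$ and $y_{(a,b)}=b$.

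For condition (i) of the definition, fix $a,b$. The projection onto coordinate $(a,b)$ is a classical homomorphism $G\to A$ sending $x\mapsto a$ and $y\mapsto b$. It induces a character $\Mor^+(G,A)\to\C$ with $p_{xa}\mapsto 1$ and $p_{yb}\mapsto 1$, and $\C\in Q$.

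For condition (ii), any $\ast$-homomorphism $\pi:\Mor^+(G,A)\to\mc{A}$ with $\mc{A}\in Q$ lies in $\mor^Q(A^{|A|^2},A)$. By (ii) its image is commutative, so all $\pi(p_{xa})$ and $\pi(p_{yb})$ commute.

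One subtlety is that $\mor^Q$ is defined via algebras \emph{embedding} into an element of $Q$, while the gadget definition uses $\mc{A}\in Q$ directly. This causes no problem: the inclusion $\mc{A}\in Q$ is itself such an embedding.

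\textbf{(iii)$\Rightarrow$(i).} This is the main step. Let $(G,x,y)$ be a $Q$-commutativity gadget and let $\pi:\Mor^+(A^k,A)\to\mc{A}\hookrightarrow\mc{B}\in Q$. We must show $[\pi(p_{\mbf{u}c}),\pi(p_{\mbf{v}d})]=0$ for all $\mbf{u},\mbf{v}\in A^k$ and $c,d\in A$. Since the generators $p_{\mbf{u}c}$ generate the algebra, this gives commutativity of the image.

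Fix $\mbf{u},\mbf{v}$. For each coordinate $i\in[k]$, condition (i) of the gadget gives $\pi_i:=\pi_{u_i,v_i}:\Mor^+(G,A)\to\mc{A}_i\in Q$ with $\pi_i(p_{xu_i})=\pi_i(p_{yv_i})=1$. I will tensor these along the cocomposition $\Mor^+(G,A)^{\otimes k}\leftarrow\Mor^+(G,A^k)$, using the map $p_{g,\mbf{w}}\mapsto\bigotimes_i p_{g,w_i}$. To make this precise, I will check that the product quantum homomorphism $G\to A^k$ defined by $P_{g\mbf{w}}=\bigotimes_i\pi_i(p_{gw_i})$ satisfies the relations. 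These are projections summing to $1$. The relation condition also holds: if $\mbf{g}\in R^G$ and the tuple $(\mbf{w}_1,\dots)$ is not in $R^{A^k}$, then some coordinate $i$ has $(w_{1i},\dots)\notin R^A$, so that tensor factor of the product vanishes.

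Composing with $\pi$ then gives $\theta=(\Sigma\otimes\pi)\circ\Delta^{A^k}_{G,A}$, a representation $\Mor^+(G,A)\to\mc{A}_1\otimes_{\min}\cdots\otimes_{\min}\mc{A}_k\otimes_{\min}\mc{B}$. By the min-tensor monoid property, this target lies in $Q$, so the gadget forces $[\theta(p_{xc}),\theta(p_{yd})]=0$.

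It remains to compute these images. We have $\theta(p_{xc})=\sum_{\mbf{w}}P_{x\mbf{w}}\otimes\pi(p_{\mbf{w}c})$. Because $\pi_i(p_{xw_i})=\delta_{w_i,u_i}$, only $\mbf{w}=\mbf{u}$ survives, giving $\theta(p_{xc})=1\otimes\pi(p_{\mbf{u}c})$. Similarly $\theta(p_{yd})=1\otimes\pi(p_{\mbf{v}d})$. Since $1\otimes(\cdot)$ is injective in the min tensor product, commutativity transfers to $\pi$.

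The main obstacle is carrying out this construction cleanly. The points to verify are the well-definedness of the product representation, that is, that it respects the relations of $\Mor^+(G,A^k)$, and the injectivity and closure under min-tensor. Everything else is bookkeeping.
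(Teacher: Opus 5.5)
Your proposal is correct and follows the paper's proof: you use the same canonical gadget $A^{|A|^2}$ for (ii)$\Rightarrow$(iii), and for (iii)$\Rightarrow$(i) the same coordinatewise tensor-product representation of $\Mor^+(G,A^k)$ composed with $\pi$ (the paper argues by contradiction, you argue directly). The only real difference is that you witness condition (i) of the gadget by the coordinate projection onto $(a,b)$, whereas the paper invokes the pp-definability correspondence of \cref{lem:standard}; your route is more elementary and makes the choice of $\mathbf{x},\mathbf{y}$ explicit.
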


It is clear that (i) $\Rightarrow$ (ii). Next, we show that (iii) $\Rightarrow$ (i) in contrapositive.

\begin{lemma}\label{lem:basic-nogo}
	Suppose that there exists $k\in\N$ such that $\mor^{Q}(A^k,A)\not\subseteq\mor^c(A^k,A)$. Then $A$ does not admit a $Q$-commutativity gadget.
\end{lemma}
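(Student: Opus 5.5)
We argue by contradiction: assume $(G,x,y)$ is a $Q$-commutativity gadget and that $\varphi:\Mor^+(A^k,A)\to\mc{B}$ is a $Q$-quantum morphism that is not classical, i.e.\ $\mc{B}$ embeds into some $\mc{C}\in Q$ and the image of $\varphi$ is noncommutative. The plan is to compose $\varphi$ with a suitable family of \emph{classical} homomorphisms $G\to A$ coming from condition (i) of the gadget, packaged into a single morphism $G\to A^k$. The result is a $Q$-quantum morphism $G\to A$ in which $p_{xa}$ and $p_{yb}$ fail to commute, contradicting condition (ii).

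\textbf{Step 1: locate a noncommuting pair.} Since $\Mor^+(A^k,A)$ is generated by the projections $p_{\mbf{c}a}$ ($\mbf{c}\in A^k$, $a\in A$), noncommutativity of the image means there exist $\mbf{u},\mbf{v}\in A^k$ and $a,b\in A$ with $[\varphi(p_{\mbf{u}a}),\varphi(p_{\mbf{v}b})]\neq0$. Fix these.

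\textbf{Step 2: build a map $G\to A^k$ hitting $\mbf{u},\mbf{v}$.} For each coordinate $i\in[k]$, condition (i) gives a $\ast$-homomorphism $\pi_{u_i,v_i}:\Mor^+(G,A)\to\mc{A}_i\in Q$ with $\pi(p_{xu_i})=\pi(p_{yv_i})=1$. I would like these to be classical homomorphisms $h_i:G\to A$ with $h_i(x)=u_i$, $h_i(y)=v_i$; then $h=(h_1,\dots,h_k):G\to A^k$ is a homomorphism with $h(x)=\mbf{u}$, $h(y)=\mbf{v}$.

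The main obstacle is that condition (i) only supplies quantum $\pi_{u_i,v_i}$. To avoid needing classical $h_i$, I would instead use the quantum morphisms directly. The tensor product $\pi=\bigotimes_i\pi_{u_i,v_i}$ into $\mc{A}_1\otimes_{\min}\cdots\otimes_{\min}\mc{A}_k\in Q$ defines a representation of $\Mor^+(G,A^k)$ via $p_{g\mbf{c}}\mapsto\bigotimes_i\pi_{u_i,v_i}(p_{gc_i})$. Checking the relations of $\Mor^+(G,A^k)$ is routine: the elements are projections summing to $1$, and for a relation tuple $\mbf{g}$ of $G$ and $\mbbf{c}\notin R^{A^k}$, some coordinate $i$ has a tuple outside $R^A$, so the $i$-th tensor factor of the product vanishes. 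This morphism satisfies $\pi(p_{x\mbf{u}})=\pi(p_{y\mbf{v}})=1$. Closure of $Q$ under $\otimes_{\min}$ is exactly what is used here.

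\textbf{Step 3: compose and conclude.} Form the composition $\varphi\circ\pi=(\pi\otimes\varphi)\Delta^{A^k}_{G,A}$, valued in $(\bigotimes_i\mc{A}_i)\otimes_{\min}\mc{C}\in Q$. Then $(\varphi\circ\pi)(p_{xa})=\sum_{\mbf{c}}\pi(p_{x\mbf{c}})\otimes\varphi(p_{\mbf{c}a})=1\otimes\varphi(p_{\mbf{u}a})$, because $\pi(p_{x\mbf{c}})=\delta_{\mbf{c},\mbf{u}}$ (orthogonal projections summing to $1$, one of which is $1$). Similarly $(\varphi\circ\pi)(p_{yb})=1\otimes\varphi(p_{\mbf{v}b})$. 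Their commutator is $1\otimes[\varphi(p_{\mbf{u}a}),\varphi(p_{\mbf{v}b})]\neq0$, since $1\otimes z\ne0$ in a min-tensor product for $z\neq0$. This contradicts condition (ii).

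Minor care is needed because $\mc{B}$ only embeds into some element of $Q$; one composes with that embedding, and the embedding also passes through the min-tensor product by injectivity of $\otimes_{\min}$.
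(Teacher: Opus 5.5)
Your proposal is correct and follows the paper's proof essentially step for step. You tensor the witnessing morphisms $\pi_{u_i,v_i}$ from condition (i) into a $Q$-morphism $G\to A^k$ sending $x\mapsto\mathbf{u}$ and $y\mapsto\mathbf{v}$, then compose with the noncommutative polymorphism to obtain $1\otimes\varphi(p_{\mathbf{u}a})$ and $1\otimes\varphi(p_{\mathbf{v}b})$, contradicting condition (ii). Your extra care about composing with the embedding $\mc{B}\hookrightarrow\mc{C}\in Q$ and about $1\otimes z\neq 0$ in the min-tensor product fills in details the paper leaves implicit.
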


\begin{proof}
	Let $\pi$ be a non-classical $\ast$-representation of $\Mor^+(A^k,A)$ over $Q$. Then, there exist $\mathbf{a},\mathbf{b}\in A^k$ and $a,b\in A$ such that $\pi(p_{\mathbf{a}a})$ and $\pi(p_{\mathbf{b}b})$ do not commute. Now, suppose $(G,x,y)$ is a $Q$-commutativity gadget for $A$. By definition of a $Q$-commutativity gadget, for each $a,b\in A$, there exists a $\ast$-representation $\pi_{a,b}:\Mor^+(G,A)\rightarrow\mc{A}_{a,b}$ such that $\mc{A}_{a,b}\in Q$ and $\pi_{a,b}(p_{xa})=\pi_{a,b}(p_{yb})=1$.  Now consider the map $\varphi:\Mor^+(G,A^k)\rightarrow\mc{A}_{a_1,b_1}\otimes\cdots\otimes\mc{A}_{a_k,b_k}$ defined on the generators as $\varphi(p_{z\mathbf{c}})=\pi_{a_1,b_1}(p_{zc_1})\otimes\cdots\otimes\pi_{a_k,b_k}(p_{zc_k})$. I claim this extends to a $\ast$-representation of $\Mor^+(G,A^k)$ and therefore is an element of $\mor^{Q}(G,A^k)$. To see this, simply note that $\varphi$ satisfies the defining relations of $\Mor^+(G,A^k)$. In fact, the $\varphi(p_{z\mathbf{c}})$ are projections as they are tensor products of projections. Next,
	\begin{align*}
		\sum_{\mathbf{c}\in A^k}\varphi(p_{z\mathbf{c}})=\sum_{c_1\in A}\pi_{a_1,b_1}(p_{zc_1})\otimes\cdots\otimes\sum_{c_k\in A}\pi_{a_k,b_k}(p_{zc_k})=1.
	\end{align*}
	Finally, suppose $R\in\sigma$, $\mathbf{z}\in R^G$, and $\mathbbm{c}\notin R^{A^k}$; this implies that $(c_{1i},\ldots,c_{ni})\notin R^A$ for some $i$. Writing $n=\ar(R)$, we have that
	\begin{align*}
		\varphi(p_{z_1\mathbf{c}_1})\cdots\varphi(p_{z_n\mathbf{c}_n})&=\bigotimes_{i=1}^k\pi_{a_i,b_i}(p_{z_1c_{1i}}\cdots p_{z_n c_{ni}})=0.
	\end{align*}
	Thus, $\varphi\in\mor^{Q}(G,A^k)$. Note also that by construction $\varphi(p_{x\mathbf{a}})=\varphi(p_{y\mathbf{b}})=1$. Composing $\varphi$ with $\pi$ as $Q$-morphisms, we have that $\pi\circ\varphi\in\mor^{Q}(G,A)$. But we have that
	\begin{align*}
		&(\pi\circ\varphi)(p_{xa})=\sum_{\mathbf{c}\in A^k}\varphi(p_{x\mathbf{c}})\otimes\pi(p_{\mathbf{c}a})=1\otimes\pi(p_{\mathbf{a}a})\\
		&(\pi\circ\varphi)(p_{yb})=\sum_{\mathbf{c}\in A^k}\varphi(p_{y\mathbf{c}})\otimes\pi(p_{\mathbf{c}b})=1\otimes\pi(p_{\mathbf{b}a}),
	\end{align*}
	which do not commute by hypothesis. Contradiction, so $A$ does not admit a commutativity gadget.
\end{proof}

We finish the proof of \cref{thm:basic-char} by showing (ii) $\Rightarrow$ (iii), which relies on \cref{lem:standard}.

\begin{lemma}\label{lem:basic-converse}
	Suppose $A$ is a relational structure such that $\mor^{Q}(A^{|A|^2},A)\subseteq\mor^c(A^{|A|^2},A)$. Then, $A$ admits a $Q$-commutativity gadget.
\end{lemma}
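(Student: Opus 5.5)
The plan is to build the gadget as the "free structure" $G=A^{|A|^2}$ itself, with the distinguished variables chosen from the coordinates. Index the coordinates of $A^{n}$, $n=|A|^2$, by pairs $(a,b)\in A\times A$. Let $\mathbf{x}\in A^{n}$ be the tuple with $\mathbf{x}_{(a,b)}=a$, and let $\mathbf{y}$ be the tuple with $\mathbf{y}_{(a,b)}=b$. Take $(G,x,y)=(A^{n},\mathbf{x},\mathbf{y})$. The two conditions in the definition of a $Q$-commutativity gadget are then checked separately.

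For condition (i), given $a,b\in A$, take the projection $\mathrm{pr}_{(a,b)}:A^n\to A$ onto coordinate $(a,b)$. This is a classical homomorphism, hence a $\ast$-homomorphism $\pi_{a,b}:\Mor^+(A^n,A)\to\C$ with $p_{\mathbf{z}c}\mapsto\delta_{z_{(a,b)},c}$. Since $\C\in Q$ and $\mathbf{x}_{(a,b)}=a$, $\mathbf{y}_{(a,b)}=b$, we get $\pi_{a,b}(p_{\mathbf{x}a})=\pi_{a,b}(p_{\mathbf{y}b})=1$. (I expect only this step really needs the particular choice of $\mathbf{x},\mathbf{y}$.)

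For condition (ii), let $\pi:\Mor^+(A^n,A)\to\mc{A}$ with $\mc{A}\in Q$. Then $\pi\in\mor^Q(A^n,A)$, which by hypothesis lies in $\mor^c(A^n,A)$. So the image $\pi(\Mor^+(A^n,A))$ is commutative, since classical means the $C^\ast$-algebra generated by the image is commutative, i.e.\ of the form $C(X)$. In particular $[\pi(p_{\mathbf{x}a}),\pi(p_{\mathbf{y}b})]=0$ for all $a,b$, and the gadget is done.

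The main point to be careful about is exactly what membership in $\mor^c$ means. It is defined as representations into algebras embedding into some $C(X)$, i.e.\ with commutative generated $C^\ast$-algebra. The implicit identification $\mor^Q\subseteq\mor^c$ should be read as "every $Q$-representation has commutative image", and I would state this explicitly. If the paper instead intends the route via \cref{lem:standard} (pp-defining a structure whose classical polymorphism clone controls commutation), the above direct construction still suffices. \cref{lem:standard} would only be needed to shrink the gadget, e.g.\ to replace $A^n$ by a pp-definable substructure, which is not required here. A minor subtlety is that $G$ must be over the same signature $\sigma$. This holds since $A^n$ is a $\sigma$-structure.
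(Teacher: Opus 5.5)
Your proof is correct and is essentially the paper's. You use the same gadget $(A^{|A|^2},\mathbf{x},\mathbf{y})$, with $\mathbf{x},\mathbf{y}$ the two ``columns'' enumerating $A^2$, and you read condition (ii) off the hypothesis directly; your reading of $\mor^Q\subseteq\mor^c$ as ``every $Q$-representation has commutative image'' is the intended one.

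The only difference is in condition (i). There the paper cites \cref{lem:standard} (pp-definability of $A^2$) to obtain a polymorphism $f$ with $f(\mathbf{x})=a$ and $f(\mathbf{y})=b$. You simply take the coordinate projection onto $(a,b)$, which is exactly the polymorphism that construction produces. Your route is therefore slightly more elementary and makes the choice of $\mathbf{x},\mathbf{y}$ explicit.
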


\begin{proof}
	Let $k=|A^2|=|A|^2$. By \cref{lem:standard}, $A$ pp-defines the relation $A^2$, as the set of polymorphisms that preserves $A^2$, which are all maps, contains $\pol(A)$. Now, we claim $(A^k,\mathbf{x},\mathbf{y})$ is a commutativity gadget for $A$. To show property (i), let $a,b\in A$. By the construction of \cref{lem:standard}, there exists a polymorphism $f:A^k\rightarrow A$ such that $f(\mathbf{x})=a$ and $f(\mathbf{y})=b$. So, we can take $\pi_{a,b}(p_{\mathbf{a}c})=\delta_{c,f(\mathbf{a})}$. Property (ii) follows directly from the hypothesis that $\mor^{qa}(A^k,A)=\mor^c(A^k,A)$.
\end{proof}

This proof extends identically to the oracular case.

\begin{theorem}[\cite{CJM25}]\label{thm:orac-basic-char}
	Let $A$ be a relational structure and let $Q$ be a min-tensor monoid. The following are equivalent.
	\begin{enumerate}
		\item $\mor^{oQ}(A^k,A)\subseteq\mor^c(A^k,A)$ for all $k\in\N$.
		\item $\mor^{oQ}(A^{|A|^2},A)\subseteq\mor^c(A^{|A|^2},A)$.
		\item $A$ admits an oracular $Q$-commutativity gadget.
	\end{enumerate}
\end{theorem}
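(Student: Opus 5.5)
The plan mirrors the proof of \cref{thm:basic-char}, replacing $\Mor^+$ by $\Mor^{o+}$ throughout and checking that each construction respects the extra oracular relations $[p_{a_ib},p_{a_jb'}]=0$ for $\mathbf{a}\in R^{A}$. The implication (1) $\Rightarrow$ (2) is trivial. I would prove (3) $\Rightarrow$ (1) in contrapositive, as in \cref{lem:basic-nogo}, and (2) $\Rightarrow$ (3) as in \cref{lem:basic-converse}.

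For (3) $\Rightarrow$ (1), take a non-classical $\pi\in\mor^{oQ}(A^k,A)$ with $[\pi(p_{\mathbf{a}a}),\pi(p_{\mathbf{b}b})]\neq0$, and an oracular gadget $(G,x,y)$ with witnesses $\pi_{a,b}:\Mor^{o+}(G,A)\to\mc{A}_{a,b}\in Q$. Define $\varphi(p_{z\mathbf{c}})=\bigotimes_i\pi_{a_i,b_i}(p_{zc_i})$ into $\bigotimes_i\mc{A}_{a_i,b_i}$, which lies in $Q$ since $Q$ is closed under minimal tensor products. The projection, partition-of-unity and relation-annihilation checks are verbatim those of \cref{lem:basic-nogo}. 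The new check is the oracular commutation. For $\mathbf{z}\in R^G$, the operators $\varphi(p_{z_i\mathbf{c}})$ and $\varphi(p_{z_j\mathbf{c}'})$ are elementary tensors whose $l$-th factors $\pi_{a_l,b_l}(p_{z_ic_l})$ and $\pi_{a_l,b_l}(p_{z_jc'_l})$ commute, because each $\pi_{a_l,b_l}$ is oracular. Hence the tensors commute, and $\varphi\in\mor^{oQ}(G,A^k)$.

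Next, the composition $\pi\circ\varphi=(\varphi\otimes\pi)\Delta$ is a representation of $\Mor^{o+}(G,A)$; the preliminaries note that cocomposition works identically in the oracular setting. To confirm this I would verify the commutation directly. For $\mathbf{z}\in R^G$ we have $(\pi\circ\varphi)(p_{z_ia})=\sum_{\mathbf{c}}\varphi(p_{z_i\mathbf{c}})\otimes\pi(p_{\mathbf{c}a})$. Products of such sums have nonzero terms only when $(\mathbf{c},\mathbf{c}')$ extends to a tuple in $R^{A^k}$, because $\varphi$ kills the other terms via the relation $p_{z_1\mathbf{c}_1}\cdots p_{z_n\mathbf{c}_n}=0$. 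Then the $\pi$-factors commute, since $\pi$ is oracular on the constraint $(\mathbf{c}_1,\dots,\mathbf{c}_n)\in R^{A^k}$.

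This step is the main obstacle. The pairs $(\mathbf{c},\mathbf{c}')$ only need to be jointly consistent with some full constraint tuple. I would handle it by inserting $1=\sum_{\mathbf{c}_{\neq i,j}}\prod\varphi(p_{z_l\mathbf{c}_l})$, using that the $\varphi$-projections on a constraint commute, and expanding over full assignments of the constraint. Nonzero terms then have $\mathbbm{c}\in R^{A^k}$, and $\pi$'s oracular relations apply to each term. With this, $(\pi\circ\varphi)(p_{xa})=1\otimes\pi(p_{\mathbf{a}a})$ and $(\pi\circ\varphi)(p_{yb})=1\otimes\pi(p_{\mathbf{b}b})$ do not commute, which contradicts the gadget property.

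For (2) $\Rightarrow$ (3), take $k=|A|^2$ with $\mathbf{x},\mathbf{y}\in A^k$ listing the first and second coordinates of $A^2$, and use the gadget $(A^k,\mathbf{x},\mathbf{y})$. By \cref{lem:standard} every pair $(a,b)$ is realised as $(f(\mathbf{x}),f(\mathbf{y}))$ for some polymorphism $f$; concretely, $f$ is a coordinate projection. The deterministic representation $p_{\mathbf{a}c}\mapsto\delta_{c,f(\mathbf{a})}\in\C$ is commutative, hence oracular, which gives property (i). Property (ii) is exactly the hypothesis that every oracular $Q$-morphism $A^k\to A$ is classical, i.e.\ has commuting image.
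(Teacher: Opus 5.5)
Your proposal is correct and follows the paper's route. The paper proves (1)$\Rightarrow$(2) trivially, uses the gadget $(A^{|A|^2},\mathbf{x},\mathbf{y})$ for (2)$\Rightarrow$(3), and for (3)$\Rightarrow$(1) reuses the tensor-product representation $\varphi$ of \cref{lem:basic-nogo}, adding only the check that it respects the oracular commutation relations. Beyond that, your explicit proof that $\pi\circ\varphi$ preserves those relations (expanding over full tuples in $R^{A^k}$) fills in a step the paper merely asserts in the preliminaries, and your coordinate-projection argument is a more direct substitute for the paper's appeal to \cref{lem:standard}.
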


\begin{proof}\phantom{}
	
	\fbox{(i)$\Rightarrow$(ii)} Immediate.
	
	\fbox{(ii)$\Rightarrow$ (iii)} Suppose $\mor^{oQ}(A^{|A|^2},A)=\mor^c(A^{|A|^2},A)$. Following the notation of \cref{lem:basic-converse}, we claim $(A^{|A|^2},\mathbf{x},\mathbf{y})$ is an oracular commutativity gadget for $A$. In fact, property (i) holds identically; and property (ii) holds since every oracular $Q$-morphism is classical.
	
	\fbox{(iii)$\Rightarrow$ (i)} To extend the proof of \cref{lem:basic-nogo}, it suffices to show that the constructed $\varphi$ also satisfies the additional commutativity relations of $\Mor^{o+}(G,A^k)$. Let $\sigma\in R$, $\mathbf{z}\in R^G$, $i\neq j\in[\ar(R)]$, and $\mathbf{c},\mathbf{d}\in A^k$. Then,
	\begin{align*}
		\varphi(p_{z_i\mathbf{c}})\varphi(p_{z_j\mathbf{d}})&=\pi_{a_1,b_1}(p_{z_ic_1})\pi_{a_1,b_1}(p_{z_jd_1})\otimes\cdots\otimes\pi_{a_n,b_n}(p_{z_ic_n})\pi_{a_n,b_n}(p_{z_jd_n})\\
		&=\pi_{a_1,b_1}(p_{z_jd_1})\pi_{a_1,b_1}(p_{z_ic_1})\otimes\cdots\otimes\pi_{a_n,b_n}(p_{z_jd_n})\pi_{a_n,b_n}(p_{z_ic_n})\\
		&=\varphi(p_{z_j\mathbf{d}})\varphi(p_{z_i\mathbf{c}}),
	\end{align*}
	as wanted. The rest of the proof proceeds identically.
\end{proof}

\section{Stable commutation}\label{sec:stability}

To extend the characterisation of commutativity gadgets in terms of polymorphisms to the robust setting, we make use of an approximate notion of commutativity.

\begin{definition}
	Let $(\mc{A},\mu)$ be a weighted algebra, and let $Q$ be a min-tensor monoid. We say $\mc{A}$ is \emph{$Q$-stably commutative} with respect to a finite generating set $X\subseteq\mc{A}$ if for all $\varepsilon>0$, there exists $\delta>0$ such that for any tracial state $\tau:\mc{A}\rightarrow\C$ over $Q$, $\defect(\tau)<\delta$ implies that $\norm{[x,y]}_\tau^2<\varepsilon$ for all $x,y\in X$.
\end{definition}

It is clear that if $\mc{A}$ is $Q$-stably commutative, then $\mc{A}/\angnormal{\supp\mu}$ is commutative with respect to any representation to a tracial $C^\ast$-algebra in $Q$. Also, for $Q\subseteq Q'$, if $\mc{A}$ is $Q'$-stably commutative, then $\mc{A}$ is $Q$-stably commutative.

Next we show that stable commutativity is independent of the choice of finite generating set.

\begin{lemma}\label{lem:generator-independence}
	Let $(\mc{A},\mu)$ be a weighted algebra that is archimedean with respect to the $\ast$-positive cone of sums of squares, and let $X$ and $Y$ be finite generating sets of $\mc{A}$. Then $\mc{A}$ is $Q$-stably commutative with respect to $X$ if and only if it is $Q$-stably commutative with respect to $Y$.
\end{lemma}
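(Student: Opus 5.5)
By symmetry it suffices to show one direction: if $\mc{A}$ is $Q$-stably commutative with respect to $X$, then it is $Q$-stably commutative with respect to $Y$. The idea is to write each $y\in Y$ as a noncommutative polynomial in the elements of $X$. We then expand the commutator $[y,y']$ into a finite sum of terms, each of the form $w_1[x,x']w_2$ with $w_1,w_2$ words in $X\cup X^\ast$. Finally, we bound its $\tau$-norm using the smallness of the $\norm{[x,x']}_\tau$ together with uniform operator-norm bounds on the words.

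The first step is to fix, for each $y\in Y$, a noncommutative $\ast$-polynomial $P_y$ with $y=P_y(X,X^\ast)$. Since $X$ is a finite generating set and $Y$ is finite, only finitely many coefficients and words appear. The second step is the Leibniz-type expansion: for words $u=u_1\cdots u_m$ and $v=v_1\cdots v_n$ in letters from $X\cup X^\ast$, the commutator $[u,v]$ is a sum of $mn$ terms of the form $\alpha[u_i,v_j]\beta$, where $\alpha,\beta$ are subwords.

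Adjoints need care here. If $u_i=x^\ast$ and $v_j=x'$, we use $[x^\ast,x']=-[x'^\ast,x]^\ast$, and since $\tau$ is tracial, $\norm{z^\ast}_\tau=\norm{z}_\tau$. So all such commutators are controlled by $\norm{[x,x']}_\tau$ for $x,x'\in X$. Bilinearity then writes $[y,y']$ as a finite linear combination of terms $\alpha[x,x']^{(\ast)}\beta$, with the number of terms $N$ and the coefficient sizes depending only on the fixed polynomials.

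The third step bounds each term, and this is where the archimedean hypothesis enters. Since the cone is archimedean, each generator $x\in X$ satisfies $x^\ast x\le R_x 1$ in the sum-of-squares order. Hence in the GNS representation of any state, $\norm{\pi(x)}\le\sqrt{R_x}$, and every word $\alpha$ of length at most $L$ has operator norm at most $M:=\max_x R_x^{L/2}$, uniformly over all tracial states. We then use $\norm{\alpha z\beta}_\tau\le\norm{\alpha}\,\norm{z}_\tau\,\norm{\beta}$; the right multiplication bound needs traciality, via $\tau(\beta^\ast z^\ast\alpha^\ast\alpha z\beta)\le M^2\tau(\beta\beta^\ast z^\ast z)$ and $\beta\beta^\ast\le M^2$. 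Combining this with the inequality $\abs{\sum_{i=1}^N a_i}^2\le N\sum_i\abs{a_i}^2$ gives
\[\norm{[y,y']}_\tau^2\le C\sum_{x,x'\in X}\norm{[x,x']}_\tau^2\]
for a constant $C$ depending only on $X$, $Y$, and the chosen polynomials.

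The conclusion then follows directly. Given $\varepsilon>0$, apply $X$-stable commutativity with $\varepsilon'=\varepsilon/(C|X|^2)$ to get $\delta$. Then every tracial $\tau$ over $Q$ with $\defect(\tau)<\delta$ satisfies $\norm{[y,y']}_\tau^2<\varepsilon$. The main obstacle is the third step, namely making the operator-norm bounds on words valid uniformly for arbitrary tracial states and not just $C^\ast$-representations. I would handle this by working in the GNS representation, whose operators are bounded precisely because of the archimedean property, or alternatively by proving purely algebraic inequalities such as $\beta^\ast z^\ast z\beta\lesssim M^2 z^\ast z$ in the cyclic order.
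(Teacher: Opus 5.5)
Your overall strategy matches the paper's, but the step where you handle adjoints fails. The paper's strategy is:
\begin{itemize}
\item write each $y\in Y$ as a polynomial in the elements of $X$;
\item expand $[y,y']$ by the Leibniz rule;
\item bound each term using operator-norm bounds from the archimedean property, traciality for right multiplication, and $\abs{\sum_{i=1}^N a_i}^2\leq N\sum_i\abs{a_i}^2$.
\end{itemize}

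The identity you quote has the wrong sign; the correct one is $[x^\ast,x']=[x'^\ast,x]^\ast$. More importantly, it only trades one mixed commutator for another. A commutator $[x^\ast,x']$ with $x,x'\in X$ is not controlled by the quantities $\norm{[x,x'']}_\tau$ for $x,x''\in X$. Only the doubly-starred case reduces, via $[x^\ast,x'^\ast]=-[x,x']^\ast$.

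In fact, if ``generating set'' means generating as a $\ast$-algebra, the statement is false. Take $\mc{A}=\mbb{M}_2$ with $\mu=0$, $X=\{E_{12}\}$ and $Y=\{E_{12},E_{21}\}$; both are $\ast$-generating sets. Every commutator of elements of $X$ vanishes, so $\mc{A}$ is trivially $Q$-stably commutative with respect to $X$. However, the unique tracial state $\tr$ is over $q$ and has zero defect, while $\norm{[E_{12},E_{21}]}_{\tr}^2=\tr\bigl((E_{11}-E_{22})^2\bigr)=1$. So $\mc{A}$ is not stably commutative with respect to $Y$.

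The lemma therefore has to be read with $X$ and $Y$ generating $\mc{A}$ as an algebra, for instance with $X=X^\ast$. This is what the paper's proof does implicitly: it writes $y_i=p_i(x_1,\ldots,x_n)$ with no adjoints, which is harmless in the applications, where the generators are projections.

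Under that reading your mixed-commutator step is unnecessary, since every letter of every word lies in $X$. The rest of your argument then goes through and is essentially the paper's proof. One small correction: take $M=\max(1,\max_x R_x)^{L/2}$, so that shorter words (including the empty word) are covered. The only other difference is presentational. You expand $[u,v]$ directly into $mn$ terms $\alpha[u_i,v_j]\beta$. The paper instead iterates the bound $\norm{[ab,c]}_\tau^2\le 2\norm{a}^2\norm{[b,c]}_\tau^2+2\norm{b}^2\norm{[a,c]}_\tau^2$.
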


\begin{proof}
	By symmetry, we need only show one direction. Suppose $\mc{A}$ is $Q$-stably commutative with respect to $X$. Since $\mc{A}$ is archimedean, for each $x\in\mc{A}$, there exists $\norm{x}\in\R_{\geq 0}$ such that $x^\ast x,xx^\ast\leq\norm{x}^2$; choosing $\norm{x}$ minimal gives rise to a submultiplicative seminorm~\cite{Oza13}. Write $X=\{x_1,\ldots,x_n\}$ and $Y=\{y_1,\ldots,y_m\}$. As $Y\subseteq\mc{A}$, there exist noncommutative polynomials $p_i$ for each $i\in[m]$ such that
	$$y_i=p_i(x_1,\ldots,x_n)=\sum_{k=1}^{K_i}c_{k,i}x_{j_{k,i,1}}\cdots x_{j_{k,i,N_{k,i}}}.$$
	Let $$M=\max_{i,i'\in[m]}4K_iK_{i'}\sum_{k=1}^{K_i}\sum_{k'=1}^{K_{i'}}|c_{k,i}|^2|c_{k',i'}|^2N_{k,i}N_{k',i'}\sum_{\alpha=1}^{N_{k,i}}\sum_{\alpha'=1}^{N_{k',i'}}\prod_{\beta\neq\alpha}\norm{x_{j_{k,i,\beta}}}^2\prod_{\beta'\neq\alpha'}\norm{x_{j_{k',i',\beta'}}}^2$$
	Let $\varepsilon>0$. Then, as $\mc{A}$ is $Q$-stably commutative with respect to $X$, then there exists $\delta>0$ such that for any tracial state $\tau$ on $\mc{A}$ over $Q$ and $\defect(\tau)<\delta$, the norms $\norm{[x_j,x_{j'}]}_\tau^2<\frac{\varepsilon}{M}$. Now, let $\tau$ be a tracial state on $\mc{A}$ over $Q$ such that $\defect(\tau)<\delta$. First, for any $a,b,c\in\mc{A}$, the commutator
	\begin{align*}
		\norm{[ab,c]}_\tau^2&=\norm{abc-acb+acb-cab}_\tau^2=\norm{a[b,c]+[a,c]b}_\tau^2\\
		&\leq 2\norm{a}^2\norm{[b,c]}_\tau^2+2\norm{b}^2\norm{[a,c]}_\tau^2.
	\end{align*}
	Hence, by induction, for any $a_1,\ldots,a_N,b\in\mc{A}$, we have that
	\begin{align*}
		\norm{[a_1 \cdots a_N,b]}_\tau^2&\leq 2\norm{a_1}^2\cdots\norm{a_{\floor{N/2}}}^2\norm{[a_{\floor{N/2}+1}\cdots a_N,b]}_\tau^2+2\norm{a_{\floor{N/2}+1}}^2\cdots\norm{a_{N}}^2\norm{[a_{1}\cdots a_{\floor{N/2}},b]}_\tau^2\\
		&\leq 2N\sum_{i=1}^N\prod_{j\neq i}\norm{a_j}^2\cdot\norm{[a_i,b]}_\tau^2
	\end{align*}
	Hence, we can bound the commutator
	\begin{align*}
		\norm{[y_i,y_{i'}]}_\tau^2&\leq K_iK_{i'}\sum_{k=1}^{K_i}\sum_{k'=1}^{K_{i'}}|c_{k,i}|^2|c_{k',i'}|^2\norm{[x_{j_{k,i,1}}\cdots x_{j_{k,i,N_{k,i}}},x_{j_{k',i',1}}\cdots x_{j_{k',i',N_{k',i'}}}]}_\tau^2\\
		&\leq K_iK_{i'}\sum_{k=1}^{K_i}\sum_{k'=1}^{K_{i'}}|c_{k,i}|^2|c_{k',i'}|^24N_{k,i}N_{k',i'}\sum_{\alpha=1}^{N_{k,i}}\sum_{\alpha'=1}^{N_{k',i'}}\prod_{\beta\neq\alpha}\norm{x_{j_{k,i,\beta}}}^2\prod_{\beta'\neq\alpha'}\norm{x_{j_{k',i',\beta'}}}^2\cdot\norm{[x_{j_{k,i,\alpha}},x_{j_{k',i',\alpha'}}]}_\tau^2\\
		&<K_iK_{i'}\sum_{k=1}^{K_i}\sum_{k'=1}^{K_{i'}}|c_{k,i}|^2|c_{k',i'}|^24N_{k,i}N_{k',i'}\sum_{\alpha=1}^{N_{k,i}}\sum_{\alpha'=1}^{N_{k',i'}}\prod_{\beta\neq\alpha}\norm{x_{j_{k,i,\beta}}}^2\prod_{\beta'\neq\alpha'}\norm{x_{j_{k',i',\beta'}}}^2\cdot\frac{\varepsilon}{M}\\
		&\leq M\frac{\varepsilon}{M}=\varepsilon.
	\end{align*}
	As such, $\mc{A}$ is $Q$-stably commutative with respect to $Y$.
\end{proof}

Next, we show that stable commutativity is a weaker condition than commutativity of the quotient algebra.

\begin{lemma}\label{lem:quotient-commutativity}
	Suppose $(\mc{A},\mu)$ is a weighted algebra with finite generating set $X$, that is archimedean with respect to the $\ast$-positive cone of sums of squares. If $C_u^\ast(\mc{A}/\angnormal{\supp\mu})$ is commutative, then $\mc{A}$ is $Q$-stably commutative for any min-tensor monoid $Q$.
\end{lemma}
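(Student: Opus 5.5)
The plan is a proof by contradiction via an ultraproduct (limit) argument. Suppose $\mc{A}$ is not $Q$-stably commutative with respect to $X$. Then there are $\varepsilon>0$, generators $x,y\in X$ (passing to a subsequence, since $X$ is finite), and tracial states $\tau_n$ on $\mc{A}$ over $Q$ with $\defect(\tau_n)<1/n$ but $\norm{[x,y]}_{\tau_n}^2\geq\varepsilon$ for all $n$. Fix a free ultrafilter $\omega$ and define $\tau(a)=\lim_{n\rightarrow\omega}\tau_n(a)$.

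The first step is to check that $\tau$ is a well-defined tracial state on $\mc{A}$. Archimedeanity gives $|\tau_n(a)|\leq\norm{a}$ uniformly in $n$, so the ultralimit exists. Linearity, traciality, positivity on sums of squares and $\tau(1)=1$ all pass to the limit. Since the defect is a finite sum, $\defect(\tau)=\lim_\omega\defect(\tau_n)=0$. Likewise $\norm{[x,y]}_\tau^2=\lim_\omega\norm{[x,y]}_{\tau_n}^2\geq\varepsilon$, because $\norm{[x,y]}_\tau^2=\tau([x,y]^\ast[x,y])$ is the value of $\tau$ on a single fixed element. Note that we do not need $\tau$ to be over $Q$: commutativity of $C_u^\ast(\mc{A}/\angnormal{\supp\mu})$ constrains every tracial state.

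Next, $\defect(\tau)=0$ gives $\tau(a^\ast a)=0$ for every $a\in\supp\mu$. Take the GNS representation $(\pi,H,\ket{\psi})$ of $\tau$. Then $\pi(a)\ket{\psi}=0$, and by traciality $\norm{\pi(a)\pi(b)\ket{\psi}}^2=\tau(b^\ast a^\ast ab)=\tau(a^\ast a\,bb^\ast)$. By Cauchy--Schwarz, $|\tau(a^\ast a\,bb^\ast)|\leq\tau((a^\ast a)^2)^{1/2}\tau((bb^\ast)^2)^{1/2}$, and $\tau((a^\ast a)^2)\leq\norm{a}^2\tau(a^\ast a)=0$, using $\tau(c^\ast d c)\leq\norm{d}\tau(c^\ast c)$ for $d\geq0$. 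Hence $\pi(a)$ vanishes on the dense subspace $\pi(\mc{A})\ket{\psi}$, so $\pi(a)=0$. Consequently $\pi$ factors through $\mc{A}/\angnormal{\supp\mu}$. Since $\pi$ is bounded (archimedean), it extends to a $\ast$-representation of $C_u^\ast(\mc{A}/\angnormal{\supp\mu})$, which is commutative by hypothesis.

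Therefore $\pi([x,y])=0$, and $\norm{[x,y]}_\tau^2=\norm{\pi([x,y])\ket{\psi}}^2=0$, contradicting the bound $\geq\varepsilon$. The main obstacle is the middle step: showing that vanishing defect forces the GNS representation to kill the ideal generated by $\supp\mu$. The argument has to use traciality (a non-tracial state would only annihilate the left ideal) together with archimedean boundedness. Everything else is routine passage to ultralimits.
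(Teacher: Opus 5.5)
Your proof is correct, but it takes a different route from the paper's. The paper argues directly and quantitatively. Since $[x,y]$ vanishes in $C_u^\ast(\mathcal{A}/\angnormal{\supp\mu})$, it lies in the infinitesimal ideal of $\mathcal{A}/\angnormal{\supp\mu}$. So for the given $\varepsilon$ there is an algebraic certificate in $\mathcal{A}$,
\[
|[x,y]|^2-\tfrac{\varepsilon}{2}+p_{x,y}=\sum_i a_i r_i b_i,\qquad p_{x,y}\text{ a sum of squares},\ r_i\in\supp\mu .
\]
The paper applies a tracial state, cycles $\tau(a_ir_ib_i)=\tau(b_ia_ir_i)$, and uses Cauchy--Schwarz with $\|r_i\|_\tau^2\le\defect(\tau)/\mu(r_i)$. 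This gives $\|[x,y]\|_\tau^2\le\tfrac{\varepsilon}{2}+M\sqrt{\defect(\tau)}$, so $\delta=(\varepsilon/2M)^2$ works, with $M$ read off from the certificate. No GNS construction or limits are needed.

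You instead use compactness of the tracial state space, via the ultralimit, to produce a tracial state of zero defect. You then show, via GNS and traciality, that its representation kills $\angnormal{\supp\mu}$. Your route gives no explicit $\delta$, but it avoids the Positivstellensatz-type certificate entirely. It also needs a weaker hypothesis: you only use that the GNS representation of a tracial state on the quotient is commutative. So the same argument proves Lemma~\ref{lem:qc-ungapping}, and it is essentially the ultralimit mechanism the paper uses for Lemmas~\ref{lem:ungapping} and~\ref{lem:qc-ungapping}. The paper's route, in exchange, is constructive and quantitative.

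A minor simplification of your middle step: if $bb^\ast\le R$, then $\tau(b^\ast a^\ast ab)=\tau(abb^\ast a^\ast)\le R\,\tau(aa^\ast)=R\,\tau(a^\ast a)=0$. This bypasses the Cauchy--Schwarz estimate.
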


\begin{proof}
	Since $C_u^\ast(\mc{A}/\angnormal{\supp\mu})$ is commutative, $[x,y]=0$ in $C_u^\ast(\mc{A}/\angnormal{\supp\mu})$ for all $x,y\in X$. This implies that $[x,y]$ is in the infinitesimal ideal of $\mc{A}/\angnormal{\supp\mu}$. Let $\varepsilon>0$. By the above, we know $\abs*{[x,y]}^2\leq\frac{\varepsilon}{2}$ in $\mc{A}/\angnormal{\supp\mu}$. By definition, there exists $p_{x,y}\in\mc{A}/\angnormal{\supp\mu}$ positive such that $\frac{\varepsilon}{2}-\abs*{[x,y]}^2=p_{x,y}$. In $\mc{A}$, this gives that $\abs{[x,y]}^2-\frac{\varepsilon}{2}+p_{x,y}\in\angnormal{\supp\mu}$, so there exist $N_{x,y}\in\N$, $r_{x,y,i}\in\supp\mu$, and $a_{x,y,i},b_{x,y,i}\in\mc{A}$ such that
	\begin{align*}
		\abs{[x,y]}^2-\frac{\varepsilon}{2}+p_{x,y}=\sum_{i=1}^{N_{x,y}}a_{x,y,i}r_{x,y,i}b_{x,y,i}.
	\end{align*}
	Now, let $M=\max_{x,y}\sum_{i=1}^{N_{x,y}}\frac{\norm{b_{x,y,i}a_{x,y,i}}}{\sqrt{\mu(r_{x,y,i})}}$ and set $\delta=\parens*{\frac{\varepsilon}{2M}}^2$.
	
	Suppose $\tau$ is a tracial state on $\mc{A}$ over $Q$ such that $\defect(\tau)<\delta$. Then, for all $x,y\in X$,
	\begin{align*}
		\norm{[x,y]}_\tau^2&=\tau(\abs*{[x,y]}^2)=\tau\parens[\Big]{\frac{\varepsilon}{2}-p_{x,y}+\sum_{i=1}^{N_{x,y}}a_{x,y,i}r_{x,y,i}b_{x,y,i}}\\
		&=\frac{\varepsilon}{2}-\tau(p_{x,y})+\sum_{i=1}^{N_{x,y}}\tau(b_{x,y,i}a_{x,y,i}r_{x,y,i})\\
		&\leq\frac{\varepsilon}{2}+\sum_{i=1}^{N_{x,y}}\norm{b_{x,y,i}a_{x,y,i}}\norm{r_{x,y,i}}_\tau\\
		&\leq\frac{\varepsilon}{2}+\sum_{i=1}^{N_{x,y}}\norm{b_{x,y,i}a_{x,y,i}}\sqrt{\frac{\defect(\tau)}{\mu(r_{x,y,i})}}\\
		&<\frac{\varepsilon}{2}+M\sqrt{\delta}=\varepsilon.\qedhere
	\end{align*}
\end{proof}

To finish this section, we find some equivalences for stable commutativity between different min-tensor monoids. Here, we follow the notation of \cite{CLP15} for ultrafilters.

\begin{lemma}\label{lem:fd-rounding}
	Let $\mc{M}$ be a finite-dimensional von Neumann algebra with a faithful tracial state $\tr$. Then,
	\begin{enumerate}[(i)]
		\item For any $X\in\mc{M}$, there exists a unitary $U$ such that $\norm{X-U}_{\tr}\leq\norm{X^\ast X-1}_{\tr}$.
		\item For any unitary $U\in\mc{M}$, there exists an order-$d$ unitary $W$ such that $\norm{U-W}_{\tr}\leq\norm{U^d-1}_{\tr}$.
	\end{enumerate} 
\end{lemma}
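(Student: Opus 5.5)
For both parts I will use the spectral theorem in the finite-dimensional algebra $\mc{M}$, and reduce each trace-norm inequality to a scalar inequality on the spectrum. The reduction works as follows. If $Z=\sum_j z_jP_j$ with pairwise orthogonal projections $P_j\in\mc{M}$ summing to $1$, then $\norm{Z}_{\tr}^2=\sum_j|z_j|^2\tr(P_j)$. So it suffices to compare the relevant functions pointwise on the spectrum, since $\tr(P_j)\geq 0$.

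\textbf{Part (i).} I will take the polar decomposition $X=V|X|$ with $|X|=(X^\ast X)^{1/2}$. Since $\mc{M}$ is finite-dimensional, hence a finite direct sum of matrix algebras, the partial isometry $V$ can be extended to a unitary $U\in\mc{M}$ with $X=U|X|$. In each block, $\ker X$ and $(\operatorname{ran}X)^\perp$ have equal dimension, so $U$ can map one isometrically onto the other. Then
\[
\norm{X-U}_{\tr}=\norm{U(|X|-1)}_{\tr}=\norm{|X|-1}_{\tr},
\]
using unitary invariance of the trace norm, which follows from traciality. Writing $|X|=\sum_j t_jP_j$ with $t_j\geq 0$, I need $|t-1|\leq|t^2-1|$ for $t\geq0$. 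This holds because $|t^2-1|=|t-1|(t+1)$ and $t+1\geq1$. Summing against $\tr(P_j)$ gives $\norm{|X|-1}_{\tr}\leq\norm{X^\ast X-1}_{\tr}$.

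\textbf{Part (ii).} I will write $U=\sum_j\lambda_jP_j$ with $|\lambda_j|=1$. For each $j$, let $\mu_j$ be a $d$-th root of unity nearest to $\lambda_j$, and set $W=\sum_j\mu_jP_j$. This $W$ is a unitary with $W^d=1$. Writing $\lambda_j=\mu_je^{i\varphi_j}$ with $|\varphi_j|\leq\pi/d$, we have
\[
|\lambda_j-\mu_j|=2|\sin(\varphi_j/2)|,\qquad |\lambda_j^d-1|=|e^{id\varphi_j}-1|=2|\sin(d\varphi_j/2)|.
\]
Set $s=|\varphi_j|/2$, so that $0\leq s\leq ds\leq\pi/2$. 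Since $\sin$ is increasing on $[0,\pi/2]$, we get $\sin s\leq\sin(ds)$, which is the required pointwise bound $|\lambda_j-\mu_j|\leq|\lambda_j^d-1|$. Summing against $\tr(P_j)$ then gives $\norm{U-W}_{\tr}\leq\norm{U^d-1}_{\tr}$, because $U^d-1=\sum_j(\lambda_j^d-1)P_j$.

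The only step needing care is the elementary angle estimate in (ii), in particular that the nearest root of unity leaves $d\varphi_j$ inside $[-\pi,\pi]$. Everything else is routine finite-dimensional functional calculus.
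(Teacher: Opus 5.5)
Your proof is correct and follows essentially the same route as the paper: in (i) your polar-decomposition unitary $U$ with $X=U|X|$ is exactly the paper's $U=VW$ from the singular-value decomposition $X=V\Sigma W$, and in (ii) you round each eigenvalue of $U$ to the nearest $d$-th root of unity just as the paper does. Your half-angle argument ($\sin s\le\sin(ds)$ for $0\le s\le ds\le\pi/2$) is a cleaner, fully rigorous justification of the pointwise bound than the paper's. The paper factors $e^{id\theta}-1$ through the geometric sum $1+e^{i\theta}+\cdots+e^{i(d-1)\theta}$ and appeals to $\cos(k\theta),\sin(k\theta)\geq 0$, which fails for $\theta<0$ and for large $k$; the bound it actually needs, $|\sin(d\theta/2)/\sin(\theta/2)|\geq 1$, is precisely what your computation establishes.
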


\begin{proof}
	\begin{enumerate}[(i)]
		\item Let $X=V\Sigma W$ be the singular-value decomposition, and set $U=VW$. Then, $\norm{X-U}_{\tr}=\norm{\Sigma-1}_{\tr}$ by unitary invariance. For any $x\in\R_{\geq 0}$, we have that $(x^2-1)^2=(x+1)^2(x-1)^2\geq(x-1)^2$. As such, noting that there are $p_i>0$ such that $\tr(A)=\sum_ip_iA_{ii}$, we get that
		\begin{align*}
			\norm{\Sigma-1}_{\tr}^2=\sum_ip_{i}(\Sigma_{ii}-1)^2\leq\sum_ip_{i}(\Sigma_{ii}^2-1)^2=\norm{\Sigma^2-1}_{\tr}^2.
		\end{align*}
		Using unitary invariance again gives $\norm{X-U}_{\tr}^2\leq\norm{X^\ast X-1}_{\tr}^2$.
		
		\item Let $U=VDV^\ast$ be the spectral decomposition of $U$. We can write $D_{ii}=\omega^{a_i}e^{i\theta_i}$ where $\omega$ is a primitive $d$-th root of unity, $a_i\in\{0,\ldots,d-1\}$, and $\theta_i\in(-\pi/d,\pi/d]$. Let $\tilde{D}_{ij}=\omega^{a_i}\delta_{ij}$ and $W=V\tilde{D}V^\ast$. Also, note that $\abs{e^{id\theta_i}-1}=\abs{e^{i\theta_i}-1}\abs{1+e^{i\theta_i}+\ldots+e^{i(d-1)\theta_i}}\geq\abs{e^{i\theta_i}-1}$, since $\cos(k\theta_i),\sin(k\theta_i)\geq 0$ for $\theta_i\in(-\pi/d,\pi/d]$. Therefore,
		\begin{align*}
			\norm{U-W}_{\tr}^2=\norm{D-\tilde{D}}_{\tr}^2=\sum_ip_i\abs{e^{i\theta_i}-1}^2\leq\sum_ip_i\abs{e^{id\theta_i}-1}^2=\norm{D^d-1}_{\tr}=\norm{U^d-1}_{\tr}.
		\end{align*}
	\end{enumerate}
\end{proof}

\begin{lemma}\label{lem:q-to-qa}
	Let $(\mc{A},\mu)$ be a weighted algebra where $\mc{A}=\C\Z_{d_1}\ast\cdots\ast\C\Z_{d_k}$. If $\mc{A}$ is $q$-stably commutative, then $\mc{A}$ is $qa$-stably commutative.
\end{lemma}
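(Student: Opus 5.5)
We argue by contradiction. Suppose $\mc{A}$ is $q$-stably commutative but not $qa$-stably commutative, with respect to the generating set $X$ of group generators $u_1,\ldots,u_k$, where $u_i$ is the generator of $\C\Z_{d_i}$. By \cref{lem:generator-independence} this choice of generating set is harmless: $\mc{A}$ is archimedean, since each $u_i$ is unitary. Then there is an $\varepsilon>0$ and a sequence of tracial states $\tau_n$ over $qa$ with $\defect(\tau_n)\to 0$ but $\norm{[u_i,u_j]}_{\tau_n}^2\geq\varepsilon$ for some fixed pair $i,j$. Passing to a subsequence, we may fix the pair. Each $\tau_n$ comes from a $\ast$-homomorphism $\mc{A}\to\mc{R}^\omega$ composed with the trace $\tau_\omega$.

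By \cref{lem:ultrapower}(ii), $\mc{R}^\omega$ embeds trace-preservingly into $\prod_{\omega'}\mbb{M}_{k(m)}$. Each image $u_i$ therefore has a representative sequence $(X_{i,m})_m$ of matrices in $\mbb{M}_{k(m)}$. The representative is bounded, but the $X_{i,m}$ need not be unitaries of order $d_i$. They do, however, satisfy $\norm{X_{i,m}^\ast X_{i,m}-1}_{\tr}\to 0$ along $\omega'$.

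The next step rounds these representatives to exact unitaries. Apply \cref{lem:fd-rounding}(i) to get unitaries $U_{i,m}$ with $\norm{X_{i,m}-U_{i,m}}_{\tr}\to_{\omega'}0$. Since $u_i^{d_i}=1$ in the ultraproduct, we get $\norm{U_{i,m}^{d_i}-1}_{\tr}\to_{\omega'}0$. Now apply \cref{lem:fd-rounding}(ii) to obtain order-$d_i$ unitaries $W_{i,m}$ with $\norm{U_{i,m}-W_{i,m}}_{\tr}\to_{\omega'}0$. Because $\mc{A}$ is a free product of the $\C\Z_{d_i}$, the assignment $u_i\mapsto W_{i,m}$ defines an honest $\ast$-representation $\pi_m:\mc{A}\to\mbb{M}_{k(m)}$. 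This gives a finite-dimensional tracial state $\sigma_m=\tr\circ\pi_m$, which is over $q$. This is exactly where the free-product hypothesis is essential: there are no relations to repair beyond the orders of the generators.

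It remains to compare $\sigma_m$ with $\tau_n$ along the ultrafilter. The representations $\pi_m$ represent, in the ultraproduct, the same elements $u_i$ as the original embedding. So for any fixed noncommutative polynomial $p$ in the generators,
\[
\lim_{m\to\omega'}\tr(\pi_m(p))=\tau_n(p),
\]
by trace preservation of the embedding. Both the defect polynomial $D(\mc{A},\mu)$ and $\abs{[u_i,u_j]}^2$ are such fixed polynomials, the former because $\mu$ has finite support. Hence $\defect(\sigma_m)\to_{\omega'}\defect(\tau_n)$ and $\norm{[u_i,u_j]}_{\sigma_m}^2\to_{\omega'}\norm{[u_i,u_j]}_{\tau_n}^2\geq\varepsilon$. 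For $n$ large, choose $m$ with $\defect(\sigma_m)<2\defect(\tau_n)$ and $\norm{[u_i,u_j]}^2_{\sigma_m}>\varepsilon/2$. This yields finite-dimensional tracial states with defect tending to $0$ and commutator norm bounded below, contradicting $q$-stable commutativity.

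The main obstacle is the convergence of traces of products after rounding. Each $W_{i,m}$ is close to $X_{i,m}$ in $2$-norm, and all the matrices involved have uniformly bounded operator norm: the $W$'s are unitary, and the $X$'s may be taken bounded by choosing bounded representatives. For products, a telescoping estimate gives $\norm{ab-a'b'}_{\tr}\leq\norm{a}\norm{b-b'}_{\tr}+\norm{a-a'}_{\tr}\norm{b'}$, which suffices.
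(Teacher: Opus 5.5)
Your proof is correct and follows essentially the same route as the paper. Both realise the $qa$ trace in $\mathcal{R}^\omega$, pass to an ultraproduct of matrix algebras via the ultrapower lemma, and round the generators to order-$d_i$ unitaries with the finite-dimensional rounding lemma. Both then use freeness of the product to get honest finite-dimensional representations and compare defects and commutator norms along the ultrafilter. Your two-stage rounding (first to a unitary via part (i), then to order $d_i$ via part (ii)) is in fact more careful than the paper's single step, since (ii) requires a unitary input. The only nit is that you should choose $\defect(\sigma_m)<\defect(\tau_n)+1/n$ rather than $\defect(\sigma_m)<2\defect(\tau_n)$, which fails when $\defect(\tau_n)=0$.
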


\begin{proof}
	Let $\varepsilon>0$ and fix the generating set $X=\{u_1,\ldots,u_k\}$ of $\mc{A}$ such that $u_i^{d_i}=u_i^\ast u_i=1$ with no other relations. Since $\mc{A}$ is $q$-stably commutative, there exists $\delta>0$ such that if $\defect(\tau)<\delta$ for a tracial state $\tau:\mc{A}\rightarrow\C$ over $q$, then $\norm{[x,y]}_\tau^2<\varepsilon$ for all $x,y\in X$. Now, suppose $\tau$ is a tracial state on $\mc{A}$ over $qa$ such that $\defect(\tau)<\frac{\delta}{2}$. Now consider the GNS representation of $\tau$; we may assume that it is minimal, giving a $\ast$-representation $\pi:\mc{A}\rightarrow\mc{M}$, where $(\mc{M},\tr)$ is a finite von Neumann algebra that embeds into $(\mc{R}^\omega,\tau_\omega)$ and $\tau=\tr\circ\pi$. In particular $\tr$ is faithful. Further, by \cref{lem:ultrapower}, $\mc{R}^\omega$ embeds into $(\prod_{\omega'}\mc{M}_n,\tau_{\omega'})$, where $\omega'$ is some free ultrafilter and $\mc{M}_n$ are finite-dimensional von Neumann algebras. By composing $\pi$ with the embedding $\mc{M}\hookrightarrow\prod_{\omega'}\mc{M}_n$, we get a $\ast$-representation $\pi':\mc{A}\rightarrow\prod_{\omega'}\mc{M}_n$ such that $\tau=\tau_{\omega'}\circ\pi'$. This gives that $\pi'$ can be realised as a sequence of maps $(\pi_n)$ such that for all $a,b\in\mc{A}$ $\lim_{n\rightarrow\omega'}\norm{\pi_n(a)\pi_n(b)-\pi_n(ab)}_{\tr_n}^2=0$ and $\lim_{n\rightarrow\omega'}\norm{\pi_n(a^\ast)-\pi_n(a)^\ast}_{\tr_n}^2=0$. Now, we need to round the $\pi_n$ to $\ast$-representations of $\mc{A}$. Let $U_{n,i}$ be the nearest order-$d_i$ unitary to $\pi_n(u_i)$. By \cref{lem:fd-rounding}, we know that $\norm{\pi_n(u_i)-U_{n,i}}_{\tr_n}\overset{\omega'}{\rightarrow}0$. Then, we define the $\ast$-representation $\pi'_n:\mc{A}\rightarrow\mc{M}_n$ by $\pi_n'(u_i)=U_{n,i}$. We have that $\norm{\pi_n(a)-\pi_n'(a)}_{\tr_n}\overset{\omega'}{\rightarrow}0$ for all $a\in\mc{A}$, so the sequence $(\pi'_n)$ also induces the representation $\pi'$ on $\mc{R}^\omega$. Since $\defect(\tau)<\frac{\delta}{2}$, we know that $\defect(\tr_n\circ\pi_n)\overset{\omega}{\rightarrow}\defect(\tau)<\frac{\delta}{2}$. Let $N=\set*{n\in\N}{\defect(\tr_n\circ\pi_n)<\delta}$. By convergence, $N\in\omega'$ and by $q$-stable commutativity, for all $n\in N$, $\norm{[x,y]}_{\tr_n\circ\pi_n}^2<\varepsilon$ for all $x,y\in X$. Now, suppose there exist $x,y\in X$ such that $\norm{[x,y]}_\tau^2\geq\varepsilon$. Then, there exists $M\in\omega'$ such that for all $n\in M$, $\norm{[x,y]}_{\tr_n\circ\pi_n}^2\geq\varepsilon$. However, we get that $M$ and $N$ are disjoint, so $\varnothing=M\cap N\in\omega'$. Contradiction, so $\norm{[x,y]}_\tau^2<\varepsilon$ for all $x,y\in X$, giving that $\mc{A}$ is $qa$-stably commutative.
\end{proof}

\ifthenelse{\boolean{anonymous}}{}{\newpage}
\begin{lemma}\label{lem:ungapping}
	Let $(\mc{A},\mu)$ be a weighted algebra. If every representation of $C_u^\ast(\mc{A}/\angnormal{\supp\mu})$ into $\mc{R}^\omega$ (for some ultrafilter $\omega$) is commutative, then $\mc{A}$ is $q$-stably commutative.
\end{lemma}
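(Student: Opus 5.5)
We argue by contradiction, using an ultraproduct of finite-dimensional approximate representations. Suppose $\mc{A}$ is not $q$-stably commutative with respect to some finite generating set $X$; by \cref{lem:generator-independence} the choice of $X$ is irrelevant, and we may take $X$ to consist of the bounded generators of $\mc{A}$. Then there is $\varepsilon>0$ and a sequence of tracial states $\tau_n$ on $\mc{A}$ over $q$ with $\defect(\tau_n)<1/n$, but with $\norm{[x_n,y_n]}_{\tau_n}^2\geq\varepsilon$ for some $x_n,y_n\in X$. Since $X$ is finite, after passing to a subsequence we may fix a single pair $x,y\in X$ for all $n$.

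Each $\tau_n$ is finite-dimensional. Its GNS representation therefore gives a $\ast$-representation $\pi_n:\mc{A}\rightarrow\mc{M}_n$ into a finite-dimensional von Neumann algebra with faithful trace $\tr_n$, such that $\tau_n=\tr_n\circ\pi_n$. Fix a free ultrafilter $\omega$ and set $\pi=(\pi_n)_n:\mc{A}\rightarrow\prod_\omega\mc{M}_n$. This is well-defined because $\mc{A}$ is archimedean, so $\norm{\pi_n(a)}\leq\norm{a}$ uniformly in $n$ and the sequences are bounded. It is an honest $\ast$-homomorphism, since each $\pi_n$ is one. By \cref{lem:ultrapower}(i), $\prod_\omega\mc{M}_n$ embeds trace-preservingly into $\mc{R}^\omega$; if the $\mc{M}_n$ are direct sums of matrix algebras, first embed each into a single matrix algebra of suitable dimension with a weighted trace, or directly into $\mc{R}$ via \cite{MvN43}. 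Write $\rho=\tau_\omega\circ\pi$.

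Next we show that $\pi$ kills $\supp\mu$. For $r\in\supp\mu$ we have
\[
\norm{\pi(r)}_{\tau_\omega}^2=\lim_{n\rightarrow\omega}\norm{r}_{\tau_n}^2\leq\lim_{n\rightarrow\omega}\defect(\tau_n)/\mu(r)=0 .
\]
Hence $\pi(r)=0$ in the tracial von Neumann algebra, since $\tau_\omega$ is faithful there. So $\pi$ factors through $\mc{A}/\angnormal{\supp\mu}$. As the target is a $C^\ast$-algebra, it further factors through $C_u^\ast(\mc{A}/\angnormal{\supp\mu})$, giving a representation into $\mc{R}^\omega$. By hypothesis this representation is commutative, so $\pi([x,y])=0$.

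On the other hand,
\[
\norm{\pi([x,y])}_{\tau_\omega}^2=\lim_{n\rightarrow\omega}\norm{[x,y]}_{\tau_n}^2\geq\varepsilon ,
\]
which is a contradiction.

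The main obstacle is the step identifying the ultraproduct with a subalgebra of $\mc{R}^\omega$ and passing from $\mc{A}$ to the completed quotient. This needs uniform operator-norm bounds so that $\pi$ lands in bounded sequences; the archimedean hypothesis supplies them, and we must assume it as in the preceding lemmas. It also requires that "commutative representation" be read as: the image is commutative, so in particular $[\pi(x),\pi(y)]=0$. A small detail is that the hypothesis applies to the ultrafilter $\omega$ we chose, which is fine since the statement allows any ultrafilter.
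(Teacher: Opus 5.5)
Your proposal is correct and follows essentially the same route as the paper. You take finite-dimensional tracial states whose defect tends to $0$ but whose commutator norm stays bounded below, and assemble them into an ultraproduct $\prod_\omega\mc{M}_n\hookrightarrow\mc{R}^\omega$. That representation factors through $C_u^\ast(\mc{A}/\angnormal{\supp\mu})$ and is noncommutative, which contradicts the hypothesis. The differences are cosmetic: you fix a single pair $x,y$ by pigeonhole where the paper sums the commutator defects, and your remarks on archimedean boundedness and on embedding non-factor $\mc{M}_n$ into $\mc{R}$ spell out details the paper leaves implicit.
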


\begin{proof}
	Fix a finite generating set $X$ of $\mc{A}$, and for a tracial state $\tau$ write $\mathrm{comm\hy def}(\tau)=\sum_{x,y\in X}\norm{[x,y]}_\tau^2$. We proceed by contrapositive. Suppose $\mc{A}$ is not $q$-stably commutative. Then, there exists $\varepsilon>0$ such that for all $\delta>0$, there exists a finite-dimensional tracial state $\tau$ on $\mc{A}$ such that $\defect(\tau)<\delta$ but $\mathrm{comm\hy def}(\tau)\geq\varepsilon$. In particular, for each $n\in\N$, there exists a finite-dimensional tracial state $\tau_n$ such that $\defect(\tau_n)<\frac{1}{2^n}$ but but $\mathrm{comm\hy def}(\tau_n)\geq\varepsilon$. Now, consider the GNS representation $\pi_n:\mc{A}\rightarrow\mc{M}_n$, where $\mc{M}_n$ is a finite-dimensional von Neumann algebra with tracial state $\tr_n$; we may without loss of generality assume that $\dim\mc{M}_n\rightarrow\infty$ in $n$. Fix a non-principal ultrafilter $\omega$, and let $\mc{M}=\prod_\omega\mc{M}_n$ the ultraproduct. $\mc{M}$ has a tracial state $\tr$ defined as $\tr(x)=\lim_{n\rightarrow\omega}\tr_n(x_n)$, and embeds in $\mc{R}^{\omega}$ by \cref{lem:ultrapower}. Let $\pi:\mc{A}\rightarrow\mc{M}$ be defined as $\pi(a)=(\pi_n(a))$. This is a $\ast$-homomorphism and induces a tracial state $\tau=\tr\circ\pi$ on $\mc{A}$. We also have that
	\begin{align*}
		\defect(\tau)&=\sum_{a\in\mc{A}}\mu(a)\tau(a^\ast a)=\sum_{a\in\mc{A}}\mu(a)\lim_{n\rightarrow\omega}\tr_n(\pi_n(a^\ast a))=\lim_{n\rightarrow\omega}\sum_{a\in\mc{A}}\mu(a)\norm{a}_{\tau_n}^2=\lim_{n\rightarrow\omega}\frac{1}{2^n}=0.
	\end{align*}
	As such, $\pi$ factors through $\mc{A}/\angnormal{\supp\mu}$ to give a representation $\bar{\pi}:\mc{A}/\angnormal{\supp\mu}\rightarrow\mc{M}$. By universal property, this induces a representation $\bar{\bar{\pi}}:C_u^\ast(\mc{A}/\angnormal{\supp\mu})\rightarrow\mc{M}$. To conclude, it remains to see that $\im\bar{\bar{\pi}}=\overline{\im\pi}^{\norm{\cdot}}$ is not commutative. Let $a=\sum_{x,y\in X}\abs*{[x,y]}^2$. Since $\tr$ is faithful, if $\tau(a)=\mathrm{comm\hy def}(\tau)\neq 0$, then $\im\bar{\bar{\pi}}$ is not commutative. We have that
	\begin{align*}
		\tau(a)=\lim_{n\rightarrow\omega}\tau_n(a)=\lim_{n\rightarrow\omega}\mathrm{comm\hy def}(\tau_n).
	\end{align*}
	However, $\mathrm{comm\hy def}(\tau_n)\geq\varepsilon$ for all $n$, and therefore $\set*{n\in\N}{\abs*{\mathrm{comm\hy def}(\tau_n)}<\varepsilon}=\varnothing\notin\omega$. Thus, $(\mathrm{comm\hy def}(\tau_n))$ does not tend to $0$, so $\bar{\bar{\pi}}$ is a noncommutative representation of $C_u^\ast(\mc{A}/\angnormal{\supp\mu})$ into $\mc{M}\subseteq\mc{R}^{\omega}$.
\end{proof}
 
\begin{lemma}\label{lem:qc-ungapping}
	Let $(\mc{A},\mu)$ be a weighted algebra. If every representation of $C_u^\ast(\mc{A}/\angnormal{\supp\mu})$ into $\mc{B}\in qc$ is commutative, then $\mc{A}$ is $qc$-stably commutative.
\end{lemma}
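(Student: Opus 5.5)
The plan mirrors the proof of \cref{lem:ungapping}, replacing the ultraproduct of finite-dimensional algebras with an ultraproduct of tracial von Neumann algebras; this class is closed under ultraproducts, which is why the $qc$ case needs no rounding.

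Argue by contrapositive. Fix a finite generating set $X$ and write $\mathrm{comm\hy def}(\tau)=\sum_{x,y\in X}\norm{[x,y]}_\tau^2$. If $\mc{A}$ is not $qc$-stably commutative, then there is $\varepsilon>0$ and, for each $n$, a tracial state $\tau_n$ on $\mc{A}$ over $qc$ such that $\defect(\tau_n)<2^{-n}$ and $\mathrm{comm\hy def}(\tau_n)\geq\varepsilon$. Take the GNS representation $\pi_n:\mc{A}\rightarrow B(H_n)$ of $\tau_n$. Let $\mc{M}_n=\pi_n(\mc{A})''$; then $\tau_n$ extends to a faithful normal tracial state $\tr_n$ on $\mc{M}_n$, so $(\mc{M}_n,\tr_n)$ is a finite von Neumann algebra.

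Fix a free ultrafilter $\omega$ and form the tracial ultraproduct $\mc{M}=\prod_\omega\mc{M}_n$, with trace $\tr=\lim_{n\rightarrow\omega}\tr_n$. This trace is faithful on $\mc{M}$, as recalled in the preliminaries, so $\mc{M}\in qc$. To define $\pi(a)=[(\pi_n(a))_n]$ we need the sequence to be uniformly bounded in operator norm. This holds on generators whenever $\mc{A}$ is archimedean, for instance when it is generated by PVMs, as in every application. In general, $\pi$ can be defined on the $\ast$-subalgebra generated by norm-bounded elements, or one can assume archimedeanity as in \cref{lem:generator-independence}. The map $\pi$ is a $\ast$-homomorphism, and $\tau=\tr\circ\pi$ satisfies
\[\defect(\tau)=\lim_{n\rightarrow\omega}\defect(\tau_n)=0.\]
By faithfulness of $\tr$, $\pi$ vanishes on $\supp\mu$, hence factors through $\mc{A}/\angnormal{\supp\mu}$ and then, by the universal property, through $C_u^\ast(\mc{A}/\angnormal{\supp\mu})$. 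This yields a representation into the $C^\ast$-algebra $\mc{B}=\overline{\im\pi}\subseteq\mc{M}$. Restricting $\tr$ gives a tracial state on $\mc{B}$, so $\mc{B}\in qc$.

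Finally,
\[\tr\Bigl(\pi\Bigl(\sum_{x,y\in X}\abs{[x,y]}^2\Bigr)\Bigr)=\lim_{n\rightarrow\omega}\mathrm{comm\hy def}(\tau_n)\geq\varepsilon>0,\]
so some $[\pi(x),\pi(y)]\neq0$ and the representation into $\mc{B}$ is noncommutative, contradicting the hypothesis.

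The main obstacle is the bookkeeping that the ultraproduct of tracial von Neumann algebras again carries a faithful trace and so lies in $qc$, together with the uniform norm bound needed to define $\pi$. Once these are in place, the remaining steps are the same computations as in \cref{lem:ungapping}.
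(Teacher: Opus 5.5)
Your proposal is correct and follows the paper's argument: the paper proves this lemma by rerunning the proof of \cref{lem:ungapping} with the finite-dimensional GNS algebras replaced by tracial ones, using only that $qc$ is closed under tracial ultraproducts. You also make explicit the uniform operator-norm bound needed to define $\pi$ into the ultraproduct. The paper leaves this tacit: it holds automatically when $\mc{A}$ is generated by PVMs, and it is implicitly assumed since $C_u^\ast$ is only defined for archimedean algebras.
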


The proof follows in the same way as \cref{lem:ungapping}, by noting that $qc$ is closed under ultraproducts.

\section{Robust commutativity gadgets}
\label{sec:robust-gadgets}

Using the stability notions of \cref{sec:stability}, we can fully characterise when robust commutativity gadgets exist in terms of properties of the quantum polymorphisms.

\begin{theorem}\label{thm:a-robust-char}
	Let $A$ be a relational structure and let $Q\subseteq qc$ be a min-tensor monoid. The following are equivalent.
	\begin{enumerate}[(i)]
		\item $\Mor^a_{\mbb{u}}(A^k,A)$ is $Q$-stably commutative for all $k\in\N$.
		\item $\Mor^a_{\mbb{u}}(A^{|A|^2},A)$ is $Q$-stably commutative.
		\item $A$ admits an $a$-robust $Q$-commutativity gadget.
	\end{enumerate}
\end{theorem}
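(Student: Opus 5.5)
The plan is to follow the three-step structure of \cref{thm:basic-char}, replacing exact $\ast$-homomorphisms by $C$-homomorphisms of weighted algebras and exact commutativity by stable commutativity. The implication (i)$\Rightarrow$(ii) is immediate.

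For (ii)$\Rightarrow$(iii), set $k=|A|^2$ and take the gadget $(A^k,\mathbf{x},\mathbf{y})$ from \cref{lem:basic-converse}, where $\mathbf{x},\mathbf{y}\in A^k$ list the first and second coordinates of all pairs in $A^2$. Property (i) holds by the classical polymorphism argument, using \cref{lem:standard}. For property (ii) we need every exact $\ast$-homomorphism $\Mor^+(A^k,A)\to\mc{B}$ with $\mc{B}\in Q$ to have commuting images. Since $Q\subseteq qc$, such a $\mc{B}$ carries a tracial state $\tau$; composing gives a tracial state of defect $0$ on $\Mor^a_{\mbb{u}}(A^k,A)$. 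Stable commutativity then gives $\norm{[p_{\mathbf{x}a},p_{\mathbf{y}b}]}_\tau^2<\varepsilon$ for every $\varepsilon$, so these commutators vanish in the GNS representation.

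\textbf{The first subtle point} is that the commutator must vanish in $\mc{B}$ itself, not only in the GNS representation. I would handle this by reading the gadget definition with $\mc{B}\in Q$ understood through faithful traces, or by arguing that $\mc{B}$ embeds into a direct product of GNS representations of its tracial states. Robustness is then literally the hypothesis (ii) applied to the generators $p_{\mathbf{x}a},p_{\mathbf{y}b}$, via \cref{lem:generator-independence}, since the $p$'s form a finite generating set.

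For (iii)$\Rightarrow$(i), I would robustify \cref{lem:basic-nogo}. Let $(G,x,y)$ be an $a$-robust gadget, and let $\tau$ be a tracial state over $Q$ on $\Mor^a_{\mbb{u}}(A^k,A)$ with small defect. For each pair $\mathbf{a},\mathbf{b}\in A^k$, build the exact $Q$-morphism $\varphi_{\mathbf{a},\mathbf{b}}=\bigotimes_i\pi_{a_i,b_i}:\Mor^+(G,A^k)\to\bigotimes_i\mc{A}_{a_i,b_i}$. These algebras lie in $Q$ because $Q$ is a min-tensor monoid; we fix tracial states on them, which exist since $Q\subseteq qc$. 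We then show that the cocomposition $\Delta^{A^k}_{G,A}$ is a $C$-homomorphism from $\Mor^a_{\mbb{u}}(G,A)$ into $\Mor^+(G,A^k)\otimes\Mor^a_{\mbb{u}}(A^k,A)$, with the second factor carrying its weight. Concretely, for $\mathbf{z}\in R^G$ and $\mathbf{c}\notin R^A$,
\[
\Delta(p_{z_1c_1}\cdots p_{z_nc_n})=\sum_{\mathbbm{d}}p_{z_1\mathbf{d}_1}\cdots p_{z_n\mathbf{d}_n}\otimes p_{\mathbf{d}_1c_1}\cdots p_{\mathbf{d}_nc_n},
\]
where the left factor vanishes unless $\mathbbm{d}\in R^{A^k}$. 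In that case $(\mathbbm{d},\mathbf{c})$ is a weighted violating monomial of the assignment algebra for $A^k$. Using $\abs{\sum_i a_i}^2\leq N\sum_i\abs{a_i}^2$ and the fact that the left factors are contractions, this gives $\defect\le C\cdot\defect(\tau)$, where $C$ depends only on $|G|$, $|A|^k$ and the uniform distributions.

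The resulting tracial state $(\tr\otimes\tau)\circ\Delta$ is over $Q$. Robustness then gives that $\norm{1\otimes[p_{\mathbf{a}a},p_{\mathbf{b}b}]}^2$ is small, because $\varphi(p_{x\mathbf{a}})=\varphi(p_{y\mathbf{b}})=1$ collapses the sums exactly as in \cref{lem:basic-nogo}. This norm is exactly $\norm{[p_{\mathbf{a}a},p_{\mathbf{b}b}]}_\tau^2$. Taking $\delta=\delta_G(\varepsilon)/C$ uniformly over the finitely many pairs $(\mathbf{a},\mathbf{b})$ yields $Q$-stable commutativity for all $k$.

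\textbf{The main obstacle} is the $C$-homomorphism estimate: the non-oracular assignment weights on $A^k$ must dominate the image of $G$'s defect. The uniform distribution on $R^{A^k}$ differs from what the cocomposition produces, so I would carefully bound the number of terms and the reweighting constant.
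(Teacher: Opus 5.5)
Your proposal is correct and follows the paper's proof: (ii)$\Rightarrow$(iii) uses the same gadget $(A^{|A|^2},\mathbf{x},\mathbf{y})$, and in (iii)$\Rightarrow$(i) the paper's representation $\pi'(p^z_c)=\sum_{\mathbf{c}}\varphi(p_{z\mathbf{c}})\otimes\pi(p^{\mathbf{c}}_c)$ is exactly your $(\varphi\otimes\pi)\circ\Delta$, with the defect bounded by $M^2\defect(\tau)$ for $M=\sum_R|R^{A^k}|$ (the paper argues by contradiction, you argue directly). On your subtle point, which the paper passes over silently, only the faithful-trace reading works; the direct-product-of-GNS alternative fails for, e.g., $\C\oplus B(\ell^2)$, since all of its tracial states vanish on the second summand.
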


\begin{proof}\phantom{}
	
	\fbox{(i)$\Rightarrow$(ii)} Immediate.
	
	\fbox{(ii)$\Rightarrow$(iii)} Suppose $\Mor^a_{\mbb{u}}(A^{|A|^2},A)$ is $Q$-stably commutative. Following the notation of \cref{lem:basic-converse}, we claim $(A^{|A|^2},\mathbf{x},\mathbf{y})$ is an $a$-robust $Q$-commutativity gadget for $A$. Since $\Mor^a_{\mbb{u}}(A^{|A|^2},A)$ is $Q$-stably commutative, we know that every $\ast$-representation of the quantum homomorphism algebra $\Mor^+(A^{|A|^2},A)=C^\ast_u(\Mor^a_{\mbb{u}}(A^{|A|^2},A)/\angnormal{\supp\mu_{a,\mbb{u}}})$ over $Q$ is commutative, and hence $\mor^{Q}(A^{|A|^2},A)\subseteq\mor^c(A^{|A|^2},A)$. By \cref{lem:basic-converse}, $(A^{|A|^2},\mathbf{x},\mathbf{y})$ is a commutativity gadget for $A$. For the additional $a$-robustness property, it follows from the stable commutativity property that for any $\varepsilon>0$, there exists $\delta>0$ such that for any tracial state $\tau$ on $\Mor^a_{\mbb{u}}(A^{|A|^2},A)$ over $Q$, if $\defect(\tau)<\delta$, then $\norm{[p^{\mathbf{a}}_a,p^{\mathbf{b}}_b]}_\tau^2\leq\frac{\varepsilon}{|A|^2}$. So, this implies that $\sum_{a,b\in A}\norm{[p^{\mathbf{x}}_a,p^{\mathbf{y}}_b]}_\tau^2<\varepsilon$, and hence $(A^{|A|^2},\mathbf{x},\mathbf{y})$ is an $a$-robust commutativity gadget.
	
	\fbox{(iii)$\Rightarrow$(i)} As in \cref{lem:basic-nogo}, we show the contrapositive by contradiction. Suppose $(G,x,y)$ is an $a$-robust commutativity gadget. There exists $k\in\N$ such that $\Mor^a_{\mbb{u}}(A^k,A)$ is not $Q$-stably commutative. Write $m=\sum_{R\in\sigma}|R^G|$ and $M=\sum_{R\in\sigma}\abs{R^{A^k}}$. Then, there exists $\varepsilon>0$ such that for any $\delta>0$ there is a tracial state $\tau$ on $\Mor^a_{\mbb{u}}(A^k,A)$ over $Q$ such that $\defect(\tau)<\frac{\delta}{M^2}$ but $\norm{[p^{\mathbf{a}}_a,p^{\mathbf{b}}_b]}_\tau^2\geq\varepsilon$ for some $\mathbf{a},\mathbf{b}\in A^k$ and $a,b\in A$. Fix $\delta>0$, and the corresponding $\tau,\mathbf{a},\mathbf{b},a,b$. Let $\pi:\Mor^a_{\mbb{u}}(A^k,A)\rightarrow\mc{M}$ be the GNS representation of $\tau$, with a tracial state $\rho$ on $\mc{M}$ such that $\tau=\rho\circ\pi$. Let $\varphi$ be the representation of $\Mor^+(G,A^k)$ defined in \cref{lem:basic-nogo}. Write $\nu$ for the tracial state on $\im\varphi$. Now, define the $\ast$-representation $\pi'$ of $\Mor^{a}_{\mbb{u}}(G,A)$ on the generators by
	\begin{align*}
		\pi'(p^z_c)=\sum_{\mathbf{c}\in A^k}\varphi(p_{z\mathbf{c}})\otimes\pi(p^{\mathbf{c}}_c).
	\end{align*}
	This definition clearly extends to a $\ast$-representation as the $\pi'(p^z_c)$ are projections and
	$$\sum_{c\in A}\pi'(p^z_c)=\sum_{c\in A,\mathbf{c}\in A^k}\varphi(p_{z\mathbf{c}})\otimes\pi(p^{\mathbf{c}}_c)=1.$$
	Note also that $(\nu\otimes\rho)$ is a tracial state on $\im\pi'$, which induces a tracial state $\tau'=(\nu\otimes\rho)\circ\pi'$ on $\Mor^a_{\mbb{u}}(G,A)$. The defect of this state is upper-bounded as
	\begin{align*}
		\defect(\tau')&=\frac{1}{m}\sum_{R\in\sigma,\mathbf{z}\in R^G,\mathbf{c}\notin R^A}\norm{p^{z_1}_{c_1}\cdots p^{z_{\ar(R)}}_{c_{\ar(R)}}}_{\tau'}^2\\
		&=\frac{1}{m}\sum_{R\in\sigma,\mathbf{z}\in R^G,\mathbf{c}\notin R^A}\norm[\Big]{\sum_{\mathbf{c}_1,\ldots,\mathbf{c}_k\in A^k}\varphi(p_{z_1\mathbf{c}_1}\cdots p_{z_{\ar(R)}\mathbf{c}_{\ar(R)}})\otimes\pi(p^{\mathbf{c}_1}_{c_1}\cdots p^{\mathbf{c}_{\ar(R)}}_{c_{\ar(R)}})}_{\nu\otimes\rho}^2\\
		&=\frac{1}{m}\sum_{R\in\sigma,\mathbf{z}\in R^G,\mathbf{c}\notin R^A}\norm[\Big]{\sum_{\mathbbf{c}\in R^{A^k}}\varphi(p_{z_1\mathbf{c}_1}\cdots p_{z_{\ar(R)}\mathbf{c}_{\ar(R)}})\otimes\pi(p^{\mathbf{c}_1}_{c_1}\cdots p^{\mathbf{c}_{\ar(R)}}_{c_{\ar(R)}})}_{\nu\otimes\rho}^2\\
		&\leq\frac{1}{m}\sum_{R\in\sigma,\mathbf{z}\in R^G,\mathbf{c}\notin R^A,\mathbbm{c}\in R^{A^k}}\abs{R^{A^k}}\cdot\norm{p_{z_1\mathbf{c}_1}\cdots p_{z_{\ar(R)}\mathbf{c}_{\ar(R)}}}_{\nu\circ\varphi}^2\cdot\norm{p^{\mathbf{c}_1}_{c_1}\cdots p^{\mathbf{c}_{\ar(R)}}_{c_{\ar(R)}}}_{\tau}^2\\
		&\leq\frac{1}{m}\sum_{R\in\sigma,\mathbf{c}\notin R^A,\mathbbm{c}\in R^{A^k}}\abs{R^G}\cdot\abs{R^{A^k}}\cdot\norm{p^{\mathbf{c}_1}_{c_1}\cdots p^{\mathbf{c}_{\ar(R)}}_{c_{\ar(R)}}}_{\tau}^2\\
		&\leq M^2\defect(\tau)<\delta.
	\end{align*}
	On the other hand, since $\pi'(p^x_a)=\sum_{\mathbf{c}\in A^k}\varphi(p_{x\mathbf{c}})\otimes\pi(p^{\mathbf{c}}_a)=1\otimes\pi(p^{\mathbf{a}}_a)$ and similarly $\pi'(p^y_b)=1\otimes\pi(p^{\mathbf{b}}_b)$, so
	\begin{align*}
		\sum_{c,d\in A}\norm{[p^x_c,p^y_d]}_{\tau'}^2\geq\norm{[p^x_a,p^y_b]}_{\tau'}^2&=\norm{[p^{\mathbf{a}}_a,p^{\mathbf{b}}_b]}_\tau^2\geq\varepsilon.
	\end{align*}
	Hence $(G,x,y)$ is not an $a$-robust $Q$-commutativity gadget, giving the wanted contradiction.
\end{proof}

\ifthenelse{\boolean{anonymous}}{\clearpage}{}
\begin{theorem}\label{thm:c-v-robust-char}
	Let $A$ be a relational structure and $Q\subseteq qc$ be a min-tensor monoid. The following are equivalent.
	\begin{enumerate}[(i)]
		\item $\Mor^{c-v}_{\mbb{u}}(A^k,A)$ is $Q$-stably commutative for all $k\in\N$.
		\item $\Mor^{c-v}_{\mbb{u}}(A^{|A|^2},A)$ is $Q$-stably commutative.
		\item $A$ admits a $c$-$v$-robust $Q$-commutativity gadget.
		\item $\Mor^{c-c}_{\mbb{u}}(A^k,A)$ is $Q$-stably commutative for all $k\in\N$.
		\item $\Mor^{c-c}_{\mbb{u}}(A^{|A|^2},A)$ is $Q$-stably commutative.
		\item $A$ admits a $c$-$c$-robust $Q$-commutativity gadget.
	\end{enumerate}
\end{theorem}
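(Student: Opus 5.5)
We will mirror the proof of \cref{thm:a-robust-char}, proving the cycle (i)$\Rightarrow$(ii)$\Rightarrow$(iii)$\Rightarrow$(i) for the constraint-variable algebras. We then transfer to the constraint-constraint setting using the equivalence of $c$-$v$-robust and $c$-$c$-robust gadgets \cite[Lemma 4.3]{CDdBVZ25}, which gives (iii)$\Leftrightarrow$(vi). Finally, we relate (iv), (v) to (i), (ii) by comparing the two weighted algebras on the same structure $A^k$.

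The implication (i)$\Rightarrow$(ii) is immediate.

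For (ii)$\Rightarrow$(iii), take the gadget $(A^{|A|^2},\mathbf{x},\mathbf{y})$ of \cref{lem:basic-converse}. Property (i) of an oracular gadget holds via the classical polymorphism $f$ with $f(\mathbf{x})=a$, $f(\mathbf{y})=b$. We send $\Phi^{R,\mathbbm{c}}_{\mathbf{d}}\mapsto\delta_{\mathbf{d},f(\mathbbm{c})}$; this is an exact representation. For property (ii), any representation of $\Mor^{o+}(A^{|A|^2},A)$ over $Q$ gives a zero-defect tracial state on $\Mor^{c-v}_{\mbb{u}}$, with $\Phi^{R,\mathbbm{c}}_{\mathbf{d}}=\prod_i p_{\mathbbm{c}_i d_i}$, which is legitimate by the oracular commutation. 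Here we must be careful that $Q\subseteq qc$ supplies the trace. Stable commutativity then forces the $p$'s to commute, so the representation is classical; by \cref{thm:orac-basic-char} it is a gadget. Robustness follows exactly as before: the gadget's algebra \emph{is} $\Mor^{c-v}_{\mbb{u}}(A^{|A|^2},A)$, so the stable commutativity bound on the generators $p^{\mathbf{x}}_a,p^{\mathbf{y}}_b$, with $\varepsilon/|A|^2$, gives the required sum bound.

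For (iii)$\Rightarrow$(i), we argue by contrapositive. Suppose we have a tracial state $\tau$ on $\Mor^{c-v}_{\mbb{u}}(A^k,A)$ with tiny defect and $\norm{[p^{\mathbf{a}}_a,p^{\mathbf{b}}_b]}_\tau^2\ge\varepsilon$. We build $\varphi\in\mor^{oQ}(G,A^k)$ from the tensor product of the $\pi_{a_i,b_i}$ as in the proof of \cref{thm:orac-basic-char}. We then compose: define
\[
\pi'(p^z_c)=\sum_{\mathbf{c}}\varphi(p_{z\mathbf{c}})\otimes\pi(p^{\mathbf{c}}_c),\qquad
\pi'(\Phi^{R,\mathbf{z}}_{\mathbf{d}})=\sum_{\mathbbm{c}\in R^{A^k}}\varphi(p_{z_1\mathbf{c}_1}\cdots p_{z_n\mathbf{c}_n})\otimes\pi(\Phi^{R,\mathbbm{c}}_{\mathbf{d}}).
\]
These are PVMs. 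The first factors are commuting projections forming a PVM over $\mathbbm{c}\in R^{A^k}$, because $\varphi$ is oracular and kills non-relations. Then $\pi'(\Phi^{R,\mathbf{z}}_{\mathbf{d}}(1-p^{z_i}_{d_i}))$ expands to $\sum_{\mathbbm{c}}\varphi(\cdot)\otimes\pi(\Phi^{R,\mathbbm{c}}_{\mathbf{d}}(1-p^{\mathbbm{c}_i}_{d_i}))$. By orthogonality of the first-factor projections, its squared norm is a convex combination of terms of $\defect(\tau)$, up to the factor $M$ from reweighting by uniform distributions. Hence $\defect(\tau')\le C\,\defect(\tau)$ for a constant $C$ depending only on $G$ and $k$, while the commutator at $x,y$ equals $\norm{[p^{\mathbf{a}}_a,p^{\mathbf{b}}_b]}_\tau^2\ge\varepsilon$. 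This contradicts robustness.

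The same three steps apply verbatim to $\Mor^{c-c}$, giving (iv)$\Rightarrow$(v)$\Rightarrow$(vi)$\Rightarrow$(iv). In that case the composed representation is the $\Phi$ formula above, and the commutator in the $c$-$c$ definition involves $\Phi^{R,\mathbf{x},i}_a$. Here $\pi'$ of the marginal of $\Phi^{R,\mathbf{x}}$ on coordinate $i$ equals $1\otimes(\text{marginal of }\Phi^{R',\mathbbm{c}}\text{ at }\mathbf{a})$, which is close to $1\otimes p^{\mathbf{a}}_a$ only after using low defect. Alternatively, we avoid this and get (iv)$\Leftrightarrow$(i) directly by a $C$-homomorphism between $\Mor^{c-c}_{\mbb{u}}(A^k,A)$ and $\Mor^{c-v}_{\mbb{u}}(A^k,A)$ in both directions, as in \cite[Lemma 4.3]{CDdBVZ25}. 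We then transfer the commutator bounds via the fact that generators of one are close in $\norm{\cdot}_\tau$ to images of generators of the other when the defect is small, using \cref{lem:generator-independence}-style estimates.

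The main obstacle is this last transfer step together with the defect estimate in (iii)$\Rightarrow$(i). The defect estimate needs the first-factor projections to be orthogonal so that the cross terms vanish; this is where the oracular commutation of $\varphi$ is essential. For the transfer, I would first try the explicit $C$-homomorphisms from \cite{CDdBVZ25}, $\Phi^{R,\mathbf{a}}\mapsto\Phi^{R,\mathbf{a}}$ and $p^a_b\mapsto\Phi^{R(a),\mathbf{a}(a)}$ restricted to a chosen constraint containing $a$. I would check that they approximately preserve commutators up to $O(\sqrt{\defect})$.
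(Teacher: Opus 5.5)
Your route is the paper's: the cycle (i)$\Rightarrow$(ii)$\Rightarrow$(iii)$\Rightarrow$(i) for the constraint-variable algebras, using the same gadget $(A^{|A|^2},\mathbf{x},\mathbf{y})$, the same composed representation $\pi'$, and the same orthogonality argument for $\defect(\tau')\le M\defect(\tau)$. You also use \cite[Lemma 4.3]{CDdBVZ25} for (iii)$\Leftrightarrow$(vi). For (i)$\Leftrightarrow$(iv) and (ii)$\Leftrightarrow$(v) you use the two $C$-homomorphisms between $\Mor^{c-c}_{\mbb{u}}(A^k,A)$ and $\Mor^{c-v}_{\mbb{u}}(A^k,A)$ together with closeness of marginals to the $p$'s, which is the paper's \cref{lem:c-v-c-c-stable}. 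Two steps need repair.

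First, in (iii)$\Rightarrow$(i) you start from a low-defect $\tau$ with $\norm{[p^{\mathbf{a}}_a,p^{\mathbf{b}}_b]}_\tau^2\ge\varepsilon$. Failure of stable commutativity of $\Mor^{c-v}_{\mbb{u}}(A^k,A)$ may a priori be witnessed only by a commutator involving some $\Phi^{R,\mathbbm{c}}_{\mathbf{d}}$. Your composed state $\tau'$ only detects $p$-commutators at $(x,y)$, since that is all the robustness condition controls. You need the reduction the paper makes. By \cref{lem:generator-independence}, pass to the generating set consisting of the marginals $\Phi^{R,\mathbbm{c},i}_c$ and the $p^{\mathbf{c}}_c$. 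Then use $\norm{\Phi^{R,\mathbbm{c},i}_c-p^{\mathbf{c}_i}_c}_\tau^2\le C\defect(\tau)$ to turn a large commutator of any generator pair into a large $[p^{\mathbf{a}}_a,p^{\mathbf{b}}_b]$ once $\delta$ is small. This is the estimate you already plan to use for the transfer, so the patch is easy, but it has to be made.

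Second, the transfer (iv)$\Rightarrow$(i) (and (v)$\Rightarrow$(ii)) needs every element of $A^k$ to occur in some constraint of $A^k$. Your map sending $p^{\mathbf{a}}_b$ to a marginal of ``a chosen constraint containing $\mathbf{a}$'' presupposes this, and it is exactly the hypothesis of \cref{lem:c-v-c-c-stable}. It is not automatic, and without it the implication is false. If $\mathbf{a}$ lies in no constraint, $p^{\mathbf{a}}$ is a free PVM in $\Mor^{c-v}_{\mbb{u}}(A^k,A)$, while $\Mor^{c-c}_{\mbb{u}}(A^k,A)$ does not see $\mathbf{a}$ at all.

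Concretely, take $A=K_3$ plus an isolated vertex $\ast$. Then $\Mor^{c-c}_{\mbb{u}}(A^k,A)=\Mor^{c-c}_{\mbb{u}}(K_3^k,K_3)$, which is stably commutative because the quantum polymorphisms of $K_3$ are classical ($S_3^+=S_3$). Yet $A$ has no oracular $q$-commutativity gadget. To realise $p_{x\ast}=1$, $x$ must be isolated in $G$, so $p_{x\cdot}$ is free. It can therefore be chosen not to commute with $p_{y\cdot}$ in a direct sum of two of the representations from property (i) with different values at $y$.

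The paper's proof applies \cref{lem:c-v-c-c-stable} to $A^k$ without checking its hypothesis, so it has the same problem. The fix is to assume that some $R\in\sigma$ and $i\in[\ar(R)]$ satisfy $\{a_i:\mathbf{a}\in R^A\}=\dom(A)$, which is equivalent to every element of every $A^k$ occurring in a constraint. Under that assumption your transfer argument goes through.
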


The equivalence (iii) $\iff$ (vi) is proved for $Q=qa$ in \cite[Lemma 4.3]{CDdBVZ25}; the more general case is identical. We prove a stronger form of the equivalence (i) $\iff$ (iv) in the following lemma.

\begin{lemma}\label{lem:c-v-c-c-stable}
	Let $A$ and $B$ be relational structures over a signature $\sigma$. Suppose that for all $a\in A$, there exists $R\in\sigma$, $\mbf{a}\in R^A$, and $i\in[\ar(R)]$ such that $a_i=a$. Then $\Mor_{\mbb{u}}^{c-v}(A,B)$ is $Q$-stably commutative iff $\Mor_{\mbb{u}}^{c-c}(A,B)$ is.
\end{lemma}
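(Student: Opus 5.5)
The plan is to compare the two weighted algebras through explicit $\ast$-homomorphisms that transfer small defect in both directions. Stable commutativity will then be pulled back along these maps.

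Throughout, I take as generating sets the PVM elements $\Phi^{R,\mathbf{a}}_{\mathbf{b}}$ for $\Mor^{c-c}_{\mbb{u}}(A,B)$, and these together with the $p^a_b$ for $\Mor^{c-v}_{\mbb{u}}(A,B)$. Both algebras are generated by PVMs, hence archimedean, so \cref{lem:generator-independence} makes this choice harmless.

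Using the hypothesis, I fix for each $a\in A$ a triple $(R_a,\mathbf{a}^{(a)},i_a)$ with $a^{(a)}_{i_a}=a$. I then define two $\ast$-homomorphisms:
\begin{itemize}
  \item $\beta:\Mor^{c-c}_{\mbb{u}}(A,B)\rightarrow\Mor^{c-v}_{\mbb{u}}(A,B)$, the identity on the $\Phi$'s;
  \item $\alpha:\Mor^{c-v}_{\mbb{u}}(A,B)\rightarrow\Mor^{c-c}_{\mbb{u}}(A,B)$, given by $\alpha(\Phi^{R,\mathbf{a}}_{\mathbf{b}})=\Phi^{R,\mathbf{a}}_{\mathbf{b}}$ and $\alpha(p^a_b)=\sum_{\mathbf{b}'\in R_a^B:\,b'_{i_a}=b}\Phi^{R_a,\mathbf{a}^{(a)}}_{\mathbf{b}'}$.
\end{itemize}
The element $\alpha(p^a_b)$ is a marginal of a PVM, so $\{\alpha(p^a_b)\}_b$ is again a PVM and $\alpha$ is well defined.

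\textbf{($c$-$v$ stable $\Rightarrow$ $c$-$c$ stable).} Let $\tau$ be a tracial state on $\Mor^{c-c}$ over $Q$ with small defect, and consider $\tau\circ\alpha$, which is again over $Q$.
\begin{itemize}
  \item A $c$-$v$ defect term $\norm{\Phi^{R,\mathbf{a}}_{\mathbf{b}}(1-\alpha(p^{a_i}_{b_i}))}^2_\tau$ equals $\norm{\sum_{\mathbf{b}':\,b'_{i_{a_i}}\neq b_i}\Phi^{R,\mathbf{a}}_{\mathbf{b}}\Phi^{R_{a_i},\mathbf{a}^{(a_i)}}_{\mathbf{b}'}}^2_\tau$.
  \item By the inequality $\abs{\sum_{i=1}^k a_i}^2\leq k\sum_i\abs{a_i}^2$, this is at most $|B|^{\ar}$ times a sum of $c$-$c$ defect terms.
  \item Since $\mbb{u}$ is uniform, every pair of constraints carries weight bounded below by a constant depending only on $A$.
\end{itemize}
Hence $\defect(\tau\circ\alpha)\leq C\,\defect(\tau)$. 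By $c$-$v$ stable commutativity, all commutators of generators are small in $\tau\circ\alpha$. In particular the commutators $[\Phi,\Phi']$ are small, and these are exactly the $c$-$c$ generators, evaluated in $\tau$.

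\textbf{($c$-$c$ stable $\Rightarrow$ $c$-$v$ stable).} Let $\tau$ be a tracial state on $\Mor^{c-v}$ over $Q$ with small defect.

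First I show $\tau\circ\beta$ has small $c$-$c$ defect. Take a $c$-$c$ defect term $\Phi^{R,\mathbf{a}}_{\mathbf{b}}\Phi^{R',\mathbf{a}'}_{\mathbf{b}'}$ with $a_i=a'_j=:z$ and $b_i\neq b'_j$. I insert $1=p^z_{b_i}+(1-p^z_{b_i})$ between the two factors:
\begin{itemize}
  \item the piece $\Phi^{R,\mathbf{a}}_{\mathbf{b}}(1-p^z_{b_i})\Phi'$ is controlled by a $c$-$v$ defect term, since $\Phi'$ has norm at most $1$;
  \item in the piece $\Phi^{R,\mathbf{a}}_{\mathbf{b}}p^z_{b_i}\Phi'$ I use $p^z_{b_i}=p^z_{b_i}(1-p^z_{b'_j})$, because $b_i\neq b'_j$;
  \item taking adjoints and using traciality, $\norm{(1-p^z_{b'_j})\Phi^{R',\mathbf{a}'}_{\mathbf{b}'}}_\tau=\norm{\Phi^{R',\mathbf{a}'}_{\mathbf{b}'}(1-p^z_{b'_j})}_\tau$, which is again a $c$-$v$ defect term.
\end{itemize}
So $\defect(\tau\circ\beta)\leq C\,\defect(\tau)$, and $c$-$c$ stable commutativity makes all $\norm{[\Phi,\Phi']}_\tau$ small.

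Next I handle commutators involving the $p$'s. Since the $\Phi$ PVM sums to $1$, we have $p^a_b=\sum_{\mathbf{b}'}\Phi^{R_a,\mathbf{a}^{(a)}}_{\mathbf{b}'}p^a_b$, and therefore
\begin{align*}
	\norm{p^a_b-\alpha(p^a_b)}_\tau\leq\sum_{\mathbf{b}'}\norm{\Phi^{R_a,\mathbf{a}^{(a)}}_{\mathbf{b}'}(\delta_{b'_{i_a},b}-p^a_b)}_\tau,
\end{align*}
where $\alpha(p^a_b)$ is viewed inside $\Mor^{c-v}$ via $\beta$. Each summand is bounded by $c$-$v$ defect terms: when $b'_{i_a}=b$ it is $\Phi(1-p^a_{b'_{i_a}})$, and otherwise $p^a_b\leq 1-p^a_{b'_{i_a}}$.

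Combining these, any commutator involving $p$'s differs from a commutator of linear combinations of $\Phi$'s by terms controlled by $\sqrt{\defect(\tau)}$. The estimate $\norm{xy}_\tau\leq\norm{x}\norm{y}_\tau$ together with the splitting of commutators used in \cref{lem:generator-independence} handles this, since all norms involved are at most $1$.

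The hard step is this last one. The bookkeeping is routine in nature, but every constant must depend only on $A$, $B$ and $\sigma$, so that $\delta$ can be chosen uniformly in $\varepsilon$. The hypothesis that every $a\in A$ occurs in some constraint is used exactly to make $\alpha$ defined on all the $p^a_b$.
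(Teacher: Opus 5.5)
Your proposal is correct and follows essentially the paper's route. You transfer small defect along the inclusion $\Mor^{c-c}_{\mbb{u}}(A,B)\to\Mor^{c-v}_{\mbb{u}}(A,B)$ and along the map sending $p^a_b$ to a marginal of a constraint containing $a$; your $\beta$ and $\alpha$ are the paper's $\alpha$ and $\beta$. You then control commutators involving the $p$'s through $\norm{p^a_b-\alpha(p^a_b)}_\tau^2=O(\defect(\tau))$, which is where the hypothesis on $A$ is genuinely needed, and the only difference is that you verify the two defect-transfer bounds by hand, whereas the paper quotes them from \cite{CM24}.
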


Note that the condition on $A$ is not very strong. From the point of view of the constraint-variable game, it is simply asking that every variable in $A$ is asked in some constraint.

\begin{proof}
	Due to \cite{CM24}, the inclusion is a $4L$-homomorphism $\alpha:\Mor_{\mbb{u}}^{c-c}(A,B)\rightarrow\Mor_{\mbb{u}}^{c-v}(A,B)$, where $L=\max_{R\in\sigma}\ar(R)$; and there exists a $m$-homomorphism $\beta:\Mor_{\mbb{u}}^{c-v}(A,B)\rightarrow\Mor_{\mbb{u}}^{c-c}(A,B)$ that acts as identity on the generators $\Phi^{R,\mathbf{a}}_{\mathbf{b}}$, where $m=\sum_{R\in\sigma}|R^A|$. 
	
	First suppose $\Mor_{\mbb{u}}^{c-v}(A,B)$ is $Q$-stably commutative. Let $\varepsilon>0$. Then, there exists $\delta>0$ such that for any tracial state $\tau$ on $\Mor_{\mbb{u}}^{c-v}(A,B)$ over $Q$, if $\defect(\tau)<m\delta$, then $\norm{[p^a_b,p^{a'}_{b'}]}_\tau^2<\varepsilon$,  $\norm{[\Phi^{R,\mathbf{a}}_{\mathbf{b}},p^{a}_{b}]}_\tau^2<\varepsilon$, and  $\norm{[\Phi^{R,\mathbf{a}}_{\mathbf{b}},\Phi^{R',\mathbf{a}'}_{\mathbf{b}'}]}_\tau^2<\varepsilon$. Now, let $\tau$ be a tracial state on $\Mor_{\mbb{u}}^{c-c}(A,B)$ over $Q$ and suppose $\defect(\tau)<\delta$. By properties of $C$-homomorphisms, $\defect(\tau\circ\beta)<m\delta$, which implies that in $\Mor_{\mbb{u}}^{c-v}(A,B)$, $\norm{[\Phi^{R,\mathbf{a}}_{\mathbf{b}},\Phi^{R',\mathbf{a}'}_{\mathbf{b}'}]}_{\tau\circ\beta}^2<\varepsilon$. Since $\beta$ acts as the identity on these elements, $\norm{[\Phi^{R,\mathbf{a}}_{\mathbf{b}},\Phi^{R',\mathbf{a}'}_{\mathbf{b}'}]}_\tau^2<\varepsilon$ in $\Mor_{\mbb{u}}^{c-c}(A,B)$. Hence $\Mor_{\mbb{u}}^{c-c}(A,B)$ is $Q$-stably commutative.
	
	Now, suppose $\Mor_{\mbb{u}}^{c-c}(A,B)$ is $Q$-stably commutative. Let $\varepsilon>0$. Then, there exists $\delta_0>0$ such that for any tracial state $\tau$ on $\Mor_{\mbb{u}}^{c-c}(A,B)$ over $Q$, if $\defect(\tau)<\delta_0$, then $\norm{[\Phi^{R,\mathbf{a}}_{\mathbf{b}},\Phi^{R',\mathbf{a}'}_{\mathbf{b}'}]}_\tau^2<\frac{\varepsilon}{25S^4}$, where $S=\max_{R\in\sigma}|R^B|$. Let $\delta=\min\{\frac{\delta_0}{4L},\frac{\varepsilon}{50mL}\}$. Now, let $\tau$ be a tracial state on $\Mor_{\mbb{u}}^{c-v}(A,B)$ over $Q$ and suppose $\defect(\tau)<\delta$. By properties of $C$-homomorphisms, $\defect(\tau\circ\alpha)<4L\delta\leq\delta_0$, so $\norm{[\Phi^{R,\mathbf{a}}_{\mathbf{b}},\Phi^{R',\mathbf{a}'}_{\mathbf{b}'}]}_{\tau\circ\alpha}^2<\frac{\varepsilon}{25S^4}$ in $\Mor_{\mbb{u}}^{c-v}(A,B)$. Since $\alpha$ is the inclusion map, we get $\norm{[\Phi^{R,\mathbf{a}}_{\mathbf{b}},\Phi^{R',\mathbf{a}'}_{\mathbf{b}'}]}_\tau^2<\frac{\varepsilon}{25S^4}\leq\varepsilon$. To get the bounds on the remaining commutators, writing $\Phi^{R,\mathbf{a},i}_{b}=\sum_{\mathbf{b}\in R^B:\,b_i=b}\Phi^{R,\mathbf{a}}_{\mathbf{b}}$, we have that
	\begin{align*}
		\norm{\Phi^{R,\mathbf{a},i}_{b}-p^{a_i}_b}_\tau^2&=\tau(\Phi^{R,\mathbf{a},i}_{b}-p^{a_i}_b)+2\tau(\Phi^{R,\mathbf{a},i}_{b}(1-p^{a_i}_b))\\
		&\leq2\sum_{b\in B}\tau(\Phi^{R,\mathbf{a},i}_{b}(1-p^{a_i}_b))\\
		&=2\sum_{\mathbf{b}\in R^B}\norm{\Phi^{R,\mathbf{a}}_{\mathbf{b}}(1-p^{a_i}_{b_i})}_\tau^2\\
		&\leq2mL\defect(\tau)<2mL\delta\leq\frac{\varepsilon}{25}.
	\end{align*}
	Then, taking $\mathbf{a}'\in R'$ such that $a'_i=a$,
	\begin{align*}
		\norm{[\Phi^{R,\mathbf{a}}_{\mathbf{b}},p^{a}_{b}]}_\tau^2&=\norm[\Big]{\sum_{b_i'=b}[\Phi^{R,\mathbf{a}}_{\mathbf{b}},\Phi^{R',\mathbf{a}'}_{\mathbf{b}}]-[\Phi^{R,\mathbf{a}}_{\mathbf{b}},\Phi^{R',\mathbf{a}',i}_{b}-p^{a}_b]}_\tau^2\\
		&\leq3S\sum_{b_i'=b}\norm{[\Phi^{R,\mathbf{a}}_{\mathbf{b}},\Phi^{R',\mathbf{a'}}_{\mathbf{b'}}]}_\tau^2+6\norm{\Phi^{R',\mathbf{a}',i}_{b}-p^{a}_b}_\tau^2\\
		&<3S^2\frac{\varepsilon}{25S^4}+6\frac{\varepsilon}{25}\leq\varepsilon.
	\end{align*}
	Also, taking $\mathbf{a}\in R$ and $\mathbf{a}'\in R'$ such that $a_i=a$ and $a'_{i'}=a'$,
	\begin{align*}
		\norm{[p^a_b,p^{a'}_{b'}]}_\tau^2&=\norm[\Big]{\sum_{\substack{\mathbf{b}\in R^B:\,b_i=b\\\mathbf{b}'\in (R')^B:\,b_{i'}'=b}}[\Phi^{R,\mathbf{a}}_{\mathbf{b}},\Phi^{R',\mathbf{a}'}_{\mathbf{b}'}]-[\Phi^{R,\mathbf{a},i}_{\mathbf{b}},\Phi^{R',\mathbf{a}',i'}_{\mathbf{b}'}-p^{a'}_{b'}]-[\Phi^{R,\mathbf{a},i}_{\mathbf{b}}-p^a_b,p^{a'}_{b'}]}_\tau^2\\
		&\leq5S^2\sum_{\substack{\mathbf{b}\in R^B:\,b_i=b\\\mathbf{b}'\in (R')^B:\,b_{i'}'=b}}\norm{[\Phi^{R,\mathbf{a}}_{\mathbf{b}},\Phi^{R',\mathbf{a}'}_{\mathbf{b}'}]}_\tau^2+10\norm{\Phi^{R',\mathbf{a}',i'}_{\mathbf{b}'}-p^{a'}_{b'}}_\tau^2+10\norm{\Phi^{R,\mathbf{a},i}_{\mathbf{b}}-p^{a}_{b}}_\tau^2\\
		&<5S^4\frac{\varepsilon}{25S^4}+20\frac{\varepsilon}{25}=\varepsilon.\qedhere
	\end{align*}
\end{proof}

\begin{proof}[Proof of \cref{thm:c-v-robust-char}]
	Due \cite[Lemma 4.3]{CDdBVZ25}, we have the equivalence (iii) $\iff$ (iv) and due to \cref{lem:c-v-c-c-stable} we have the equivalences (i) $\iff$ (iv) and (ii) $\iff$ (v). Thus, it suffices to show (i) $\iff$ (ii) $\iff$ (iii).
	
	\fbox{(i)$\Rightarrow$(ii)} Immediate.
	
	\fbox{(ii)$\Rightarrow$(iii)} Suppose $\Mor^{c-v}_{\mbb{u}}(A^{|A|^2},A)$ is $Q$-stably commutative. Following the notation of \cref{lem:basic-converse}, we claim $(A^{|A|^2},\mathbf{x},\mathbf{y})$ is a $c$-$v$-robust $Q$-commutativity gadget for $A$. Since $\Mor^{c-v}_{\mbb{u}}(A^{|A|^2},A)$ is $Q$-stably commutative, we know that every representation of the oracular quantum homomorphism algebra $\Mor^{o+}(A^{|A|^2},A)=C^\ast_u(\Mor^{c-v}_{\mbb{u}}(A^{|A|^2},A)/\angnormal{\supp\mu_{c-v,\mbb{u}}})$ over $Q$ is commutative, and hence $\mor^{Q}(A^{|A|^2},A)\subseteq\mor^c(A^{|A|^2},A)$. By \cref{thm:orac-basic-char}, $(A^{|A|^2},\mathbf{x},\mathbf{y})$ is an oracular $Q$-commutativity gadget for $A$. For the additional $c$-$v$-robustness property, it follows from the stable commutativity property that for any $\varepsilon>0$, there exists $\delta>0$ such that for any tracial state $\tau$ on $\Mor^{c-v}_{\mbb{u}}(A^{|A|^2},A)$ over $Q$, if $\defect(\tau)<\delta$, then $\norm{[p^{\mathbf{a}}_{a},p^{\mathbf{b}}_{b}]}_\tau^2\leq\frac{\varepsilon}{|A|^2}$. So, this implies that $\sum_{a,b\in A}\norm{[p^{\mathbf{x}}_a,p^{\mathbf{y}}_b]}_\tau^2<\varepsilon$, and hence $(A^{|A|^2},\mathbf{x},\mathbf{y})$ is an $a$-robust $Q$-commutativity gadget.
	
	\fbox{(iii)$\Rightarrow$(i)} Note first that as in \cref{lem:c-v-c-c-stable}, we know that for any tracial state $\tau$ on $\Mor^{c-v}_{\mbb{u}}(A^k,A)$, $\norm{\Phi^{R,\mathbbf{c},i}_c-p^{\mathbf{c}_i}_c}_\tau^2\leq 2mL\defect(\tau)$. This implies that
	\begin{align*}
		&\norm{[\Phi^{R,\mathbbf{c},i}_c,p^{\mathbf{d}}_d]}_{\tau}^2\leq 3\norm{[p^{\mathbf{c}_i}_c,p^{\mathbf{d}}_d]}_\tau^2+12mL\defect(\tau)\\
		&\norm{[\Phi^{R,\mathbbf{c},i}_c,\Phi^{S,\mathbbf{d},j}_d]}_{\tau}^2\leq 5\norm{[p^{\mathbf{c}_i}_c,p^{\mathbf{d}_j}_d]}_\tau^2+40mL\defect(\tau)
	\end{align*}

	As in \cref{lem:basic-nogo}, we show the contrapositive by contradiction. Suppose $(G,x,y)$ is a $c$-$v$-robust $Q$-commutativity gadget. There exists $k\in\N$ such that $\Mor^{c-v}_{\mbb{u}}(A^k,A)$ is not $Q$-stably commutative. In particular that means that it is not $Q$-stably commutative with respect to the generating set $\{\Phi^{R,\mathbbf{c},i}_{c}\}_{R\in\sigma,\mathbbf{c}\in R^{A^k},i\in[\ar(R)],c\in A}\cup\{p^{\mathbf{c}}_c\}_{\mathbf{c}\in A^k,c\in A}$ via \cref{lem:generator-independence}. Write $m=\sum_{R\in\sigma}|R^G|$ and $M=\sum_{R\in\sigma}\abs{R^{A^k}}$. Then, there exists $\varepsilon>0$ such that for any $\delta>0$ there is a trace $\tau$ on $\Mor^{c-v}_{\mbb{u}}(A^k,A)$ over $Q$ such that $\defect(\tau)<\frac{\delta}{M}$ but $\norm{[\Phi^{R,\mathbbf{a},i}_{a},\Phi^{S,\mathbbf{b},j}_{b}]}_\tau^2\geq10\varepsilon$ for some $R,S\in\sigma$, $\mathbbf{a}\in R^{A^k}$, $\mathbbf{b}\in S^{A^k}$, $i\in[\ar(R)]$, $j\in[\ar(S)]$, and $a,b\in A$; or $\norm{[\Phi^{R,\mathbbf{a},i}_{a},p^{\mathbf{b}}_b]}_\tau^2\geq10\varepsilon$ for some $R\in\sigma$, $\mathbbf{a}\in R^{A^k}$, $i\in[\ar(R)]$, $\mathbf{b}\in A^k$, and $a,b\in A$; or $\norm{[p^{\mathbf{a}}_a,p^{\mathbf{b}}_b]}_\tau^2\geq10\varepsilon$ for some $\mathbf{a},\mathbf{b}\in A^k$ and $a,b\in A$. By the above argument, this implies that in either of the three cases, $\norm{[p^{\mathbf{a}}_a,p^{\mathbf{b}}_b]}_\tau^2\geq2\varepsilon-8mL\defect(\tau)$ for some $\mathbf{a},\mathbf{b}\in A^k$ and $a,b\in A$. We may assume without loss of generality that $\frac{\delta}{M}\leq\frac{\varepsilon}{8mL}$, giving $\norm{[p^{\mathbf{a}}_a,p^{\mathbf{b}}_b]}_\tau^2\geq\varepsilon$
	
	Fix $\delta>0$, and the corresponding $\tau,\mathbf{a},\mathbf{b},a,b$ as above. Let $\pi:\Mor^{c-v}_{\mbb{u}}(A^k,A)\rightarrow\mc{M}$ be the GNS representation of $\tau$, with a tracial state $\rho$ on $\mc{M}$ such that $\tau=\rho\circ\pi$. Let $\varphi$ be the representation of $\Mor^{o+}(G,A^k)$ defined in \cref{lem:basic-nogo}. Write $\nu$ for the tracial state on $\im\varphi$. Now, define the $\ast$-representation $\pi'$ of $\Mor^{c-v}_{\mbb{u}}(G,A)$ on the generators by
	\begin{align*}
		&\pi'(\Phi^{R,\mathbf{z}}_{\mathbf{c}})=\sum_{\mathbbf{c}\in R^{A^k}}\varphi(p_{z_1\mathbf{c}_1}\cdots p_{z_{\ar(R)}\mathbf{c}_{\ar(R)}})\otimes\pi(\Phi^{R,\mathbbf{c}}_\mathbf{c}),\\
		&\pi'(p^{z}_c)=\sum_{\mathbf{c}\in A^k}\varphi(p_{z\mathbf{c}})\otimes\pi(p^{\mathbf{c}}_c).
	\end{align*}
	To see that this is in fact a $\ast$-representation, it suffices to note that the relations of $\Mor^{c-v}_{\mbb{u}}(A^{|A|^2},A)$ are satisfied. First, $\pi'(\Phi^{R,\mathbf{z}}_{\mathbf{c}})$ and $\pi'(p^{z}_c)$ are projections as they are sums of orthogonal projections --- $p^{z_1}_{\mathbf{c}_1}\cdots p^{z_{\ar(R)}}_{\mathbf{c}_{\ar(R)}}$ are orthogonal projections that sum to $1$ over $\mathbbf{c}$ as the terms in the product commute. Also,
	\begin{align*}
		\sum_{\mathbf{c}\in R^A}\pi'(\Phi^{R,\mathbf{z}}_{\mathbf{c}})&=\sum_{\mathbf{c}\in R^A}\sum_{\mathbbf{c}\in R^{A^k}}\varphi(p_{z_1\mathbf{c}_1}\cdots p_{z_{\ar(R)}\mathbf{c}_{\ar(R)}})\otimes\pi(\Phi^{R,\mathbbf{c}}_\mathbf{c})\\
		&=\sum_{\mathbbf{c}\in R^{A^k}}\varphi(p_{z_1\mathbf{c}_1}\cdots p_{z_{\ar(R)}\mathbf{c}_{\ar(R)}})\otimes1\\
		&=1,
	\end{align*}
	and
	\begin{align*}
		\sum_{c\in A}\pi'(p^{z}_c)&=\sum_{c\in A}\sum_{\mathbf{c}\in A^k}\varphi(p_{z\mathbf{c}})\otimes\pi(p^{\mathbf{c}}_c)\\
		&=\sum_{\mathbf{c}\in A^k}\varphi(p_{z\mathbf{c}})\otimes1\\
		&=1.
	\end{align*}%
    \ifthenelse{\boolean{anonymous}}{}{\pagebreak\par\noindent}%
	Now, since $(\nu\otimes\rho)$ is a tracial state on $\im\pi'$, $\tau'=(\nu\otimes\rho)\circ\pi'$ is a tracial state on $\Mor^{c-v}_{\mbb{u}}(G,A)$. The defect of this tracial state is
	\begin{align*}
		\defect(\tau')&=\frac{1}{m}\sum_{\substack{R\in\sigma,\mathbf{z}\in R^G,\\i\in[\ar(R)],\mathbf{c}\in R^A}}\frac{1}{\ar(R)}\tau'(\Phi^{R,\mathbf{z}}_{\mathbf{c}}(1-p^{z_i}_{c_i}))\\
		&=\frac{1}{m}\sum_{\substack{R\in\sigma,\mathbf{z}\in R^G,\\i\in[\ar(R)],c\neq d\in A}}\frac{1}{\ar(R)}\tau'(\Phi^{R,\mathbf{z},i}_{c}p^{z_i}_{d})\\
		&=\frac{1}{m}\sum_{\substack{R\in\sigma,\mathbf{z}\in R^G,\\i\in[\ar(R)],c\neq d\in A}}\frac{1}{\ar(R)}\sum_{\mathbbf{c}\in R^{A^k},\mathbf{d}\in  A^k}(\nu\circ\varphi)(p_{z_1\mathbf{c}_1}\cdots p_{z_{\ar(R)}\mathbf{c}_{\ar(R)}}p_{z_i\mathbf{d}})\tau(\Phi^{R,\mathbf{c},i}_cp^{\mathbf{d}}_d)\\
		&=\frac{1}{m}\sum_{\substack{R\in\sigma,\mathbf{z}\in R^G,\\i\in[\ar(R)],c\neq d\in A}}\frac{1}{\ar(R)}\sum_{\mathbbf{c}\in R^{A^k}}(\nu\circ\varphi)(p_{z_1\mathbf{c}_1}\cdots p_{z_{\ar(R)}\mathbf{c}_{\ar(R)}})\tau(\Phi^{R,\mathbf{c},i}_cp^{\mathbf{c}_i}_d)\\
		&\leq\frac{1}{m}\sum_{\substack{R\in\sigma,\mathbf{z}\in R^G,\\i\in[\ar(R)],c\neq d\in A}}\frac{1}{\ar(R)}\sum_{\mathbbf{c}\in R^{A^k}}\tau(\Phi^{R,\mathbf{c},i}_cp^{\mathbf{c}_i}_d)\\
		&=M\defect(\tau)<M\frac{\delta}{M}=\delta
	\end{align*}
	On the other hand, since $\pi'(p^x_a)=\sum_{\mathbf{c}\in A^k}\varphi(p_{x\mathbf{c}})\otimes\pi(p^{\mathbf{c}}_a)=1\otimes\pi(p^{\mathbf{a}}_a)$ and similarly $\pi'(p^y_b)=1\otimes\pi(p^{\mathbf{b}}_b)$, so
	\begin{align*}
		\sum_{c,d\in A}\norm{[p^x_c,p^y_d]}_{\tau'}^2\geq\norm{[p^x_a,p^y_b]}_{\tau'}^2&=\norm{[p^{\mathbf{a}}_a,p^{\mathbf{b}}_b]}_\tau^2\geq\varepsilon.
	\end{align*}
	Hence $(G,x,y)$ is not an $c$-$v$-robust $Q$-commutativity gadget, giving the wanted contradiction.
\end{proof}

To finish this section, we use the characterisation of robust commutativity gadgets, as well the properties of stable commutativity from \cref{sec:stability} to show some implications and equivalences between different types of commutativity gadget.

\begin{theorem}\label{thm:implications-equivalences}
	Let $A$ be a relational structure.
	\begin{enumerate}[(i)]
        \item If $Q\subseteq Q'$ and $A$ admits a $Q'$-commutativity gadget, then $A$ admits a $Q$-commutativity gadget.
        \item If $Q\subseteq Q'$ and $A$ admits an oracular $Q'$-commutativity gadget, then $A$ admits an oracular $Q$-commutativity gadget.
        \item If $Q\subseteq Q'$ and $A$ admits an $a$-robust $Q'$-commutativity gadget, then $A$ admits an $a$-robust $Q$-commutativity gadget.
        \item If $Q\subseteq Q'$ and $A$ admits a $c$-$v$-robust $Q'$-commutativity gadget, then $A$ admits a $c$-$v$-robust $Q$-commutativity gadget.
		\item If $A$ admits $Q$-commutativity gadget, then $A$ admits an oracular $Q$-commutativity gadget.
		\item If $A$ admits an $a$-robust $Q$-commutativity gadget, then $A$ admits a $c$-$v$-robust $Q$-commutativity gadget.
		\item If $A$ admits a $C^\ast$-commutativity gadget, then $A$ admits an $a$-robust $Q$-commutativity gadget for all $Q$.
		\item If $A$ admits an oracular $C^\ast$-commutativity gadget, then $A$ admits a $c$-$v$-robust $Q$-commutativity gadget for all $Q$.
		\item $A$ admits a $qa$-commutativity gadget iff $A$ admits an $a$-robust $qa$-commutativity gadget iff $A$ admits an $a$-robust $q$-commutativity gadget.
		\item $A$ admits an oracular $qa$-commutativity gadget iff $A$ admits a $c$-$v$-robust $qa$-commutativity gadget iff $A$ admits an $c$-$v$-robust $q$-commutativity gadget.
		\item $A$ admits a $qc$-commutativity gadget iff $A$ admits an $a$-robust $qc$-commutativity gadget.
		\item $A$ admits an oracular $qc$-commutativity gadget iff $A$ admits a $c$-$v$-robust $qc$-commutativity gadget.
	\end{enumerate}
\end{theorem}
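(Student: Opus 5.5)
The plan is to reduce every item to the polymorphism characterisations (\cref{thm:basic-char,thm:orac-basic-char,thm:a-robust-char,thm:c-v-robust-char}) and then compare the relevant polymorphism algebras at arity $k=|A|^2$.

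\textbf{Monotonicity, items (i)--(iv).} If $Q\subseteq Q'$, then every representation or tracial state over $Q$ is also over $Q'$. Hence $\mor^{Q}(A^k,A)\subseteq\mor^{Q'}(A^k,A)$, and the oracular version holds likewise. Classicality at arity $|A|^2$ therefore transfers from $Q'$ to $Q$. For the robust items, $Q'$-stable commutativity implies $Q$-stable commutativity, as remarked after the definition.

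\textbf{Non-oracular to oracular, items (v) and (vi).} For (v), every oracular $Q$-morphism is in particular a $Q$-morphism, since $\Mor^{o+}$ is a quotient of $\Mor^+$. Classicality of $\mor^Q(A^{k},A)$ thus gives classicality of $\mor^{oQ}(A^k,A)$. For (vi), I will build a $C$-homomorphism $\gamma:\Mor^{c-v}_{\mbb u}(A^k,A)\to\Mor^a_{\mbb u}(A^k,A)$ sending
\[
p^{\mathbf c}_c\mapsto p^{\mathbf c}_c,\qquad \Phi^{R,\mathbbf c}_{\mathbf c}\mapsto \text{a projection rounded from } p^{\mathbbf c_1}_{c_1}\cdots p^{\mathbbf c_{\ar(R)}}_{c_{\ar(R)}}.
\]
This is awkward because the product is not a projection. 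An alternative is to argue directly: given a tracial state $\tau$ on the $c$-$v$ algebra with small defect, restrict it to the $p$-generators. Its assignment defect is then controlled, since $p^{a_1}_{b_1}\cdots p^{a_n}_{b_n}$ with $\mathbf b\notin R^A$ is close to $\sum_{\mathbf b'}\Phi_{\mathbf b'}\,p^{a_1}_{b_1}\cdots$, and each such term has small norm by the consistency defects. So we get $\defect_a\lesssim L^2 m\,\defect_{c-v}$. Assignment stable commutativity then bounds $\norm{[p,p]}_\tau$, and the estimates in the proof of \cref{thm:c-v-robust-char} ((iii)$\Rightarrow$(i)) bound the commutators involving $\Phi$. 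Making this bound precise is the main technical obstacle. I would first try the telescoping estimate $\norm{\Phi^{R,\mathbf a}_{\mathbf b}-\Phi^{R,\mathbf a}_{\mathbf b}p^{a_1}_{b_1}\cdots p^{a_n}_{b_n}}_\tau^2\le n\sum_i\norm{\Phi(1-p^{a_i}_{b_i})}_\tau^2$.

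\textbf{$C^\ast$ and equivalences, items (vii)--(xii).} For (vii) and (viii), a $C^\ast$-gadget means $\Mor^{+}(A^{k},A)$, or its oracular version, has only classical representations. Its universal $C^\ast$-algebra $C^\ast_u(\Mor^a_{\mbb u}/\angnormal{\supp\mu})$ is then commutative. \Cref{lem:quotient-commutativity} gives $Q$-stable commutativity, and the robust characterisations finish the argument. For (ix), the nontrivial directions are as follows.
\begin{itemize}
\item A $qa$-gadget means every $\mc R^\omega$-representation of $\Mor^+(A^k,A)$ is commutative. By \cref{lem:ungapping} the assignment algebra is then $q$-stably commutative, giving an $a$-robust $q$-gadget.
\item For $q$-robust implies $qa$-robust, I use \cref{lem:q-to-qa}. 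Each PVM $\{p^{\mathbf c}_c\}_c$ is encoded by an order-$|A|$ unitary $u_{\mathbf c}=\sum_c\omega^c p^{\mathbf c}_c$, so the assignment algebra is $\C\Z_{|A|}^{\ast A^k}$ with the same weight expressed in these generators. \Cref{lem:generator-independence} handles the change of generators.
\item An $a$-robust $qa$-gadget is in particular a $qa$-gadget, which closes the cycle.
\end{itemize}
Item (x) is identical, using the $c$-$v$ algebra, whose generators are likewise PVMs, hence free products of cyclic group algebras. Items (xi) and (xii) follow from \cref{lem:qc-ungapping} in the same manner.
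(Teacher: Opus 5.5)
Your proposal is correct. For items (i)--(iv) and (vii)--(xii) it is the paper's argument:

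\begin{itemize}
\item monotonicity through the canonical gadget $(A^{|A|^2},\mathbf{x},\mathbf{y})$, which has deterministic $\pi_{a,b}$ and which you reach via the characterisation theorems;
\item \cref{lem:quotient-commutativity} for the $C^\ast$ items;
\item \cref{lem:ungapping,lem:q-to-qa,lem:qc-ungapping} for the equivalences. This includes your correct observation that the assignment and $c$-$v$ algebras are free products of algebras $\C^d\cong\C\Z_d$, so \cref{lem:q-to-qa} applies.
\end{itemize}

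You genuinely differ on (v) and (vi). There the paper only checks that the canonical gadgets satisfy the hypotheses of \cite[Lemma 4.7]{CDdBVZ25}.

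For (v), your argument is already a complete proof: $\Mor^{o+}(A^k,A)$ is a quotient of $\Mor^+(A^k,A)$, and you combine this with \cref{thm:basic-char,thm:orac-basic-char}.

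For (vi), pulling a tracial state $\tau$ on $\Mor^{c-v}_{\mbb{u}}(A^k,A)$ back along the inclusion $\iota$ of the $p$-generators does work. Fix a constraint of $A^k$ and write $\Phi_{\mathbf{b}'}$ for its PVM. For $\mathbf{b}\notin R^A$, expand $P=p^{\mathbf{a}_1}_{b_1}\cdots p^{\mathbf{a}_n}_{b_n}=\sum_{\mathbf{b}'\in R^A}\Phi_{\mathbf{b}'}P$. Telescope each term up to the first index $i$ with $b'_i\neq b_i$, and kill it there using $p^{\mathbf{a}_i}_{b'_i}p^{\mathbf{a}_i}_{b_i}=0$. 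Together with $\norm{xy}_\tau\leq\norm{x}_\tau\norm{y}$, this gives $\norm{\Phi_{\mathbf{b}'}P}_\tau\leq\sum_j\norm{\Phi_{\mathbf{b}'}(1-p^{\mathbf{a}_j}_{b'_j})}_\tau$. Hence the assignment defect of $\tau\circ\iota$ is at most a constant times the $c$-$v$ defect of $\tau$, where the constant depends only on $|A|$ and the arities. This resolves the step you flagged as the main obstacle. The commutators involving $\Phi$ then follow from three ingredients:
\begin{itemize}
\item the marginal estimates in the proof of \cref{thm:c-v-robust-char};
\item the identity $\Phi_{\mathbf{b}}=\prod_i\Phi^{i}_{b_i}$ for the marginals;
\item \cref{lem:generator-independence}.
\end{itemize}

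Your route is longer but self-contained. It proves the algebra-level statement that $Q$-stable commutativity of $\Mor^a_{\mbb{u}}(A^k,A)$ implies that of $\Mor^{c-v}_{\mbb{u}}(A^k,A)$ at every arity. The paper's route is shorter but delegates this estimate to an external lemma about the gadget itself.
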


The results of this theorem are summarised in \cref{fig:implications}

\begin{proof}
    (i) to (iv) follow as the commutativity gadget constructed in \cref{lem:basic-converse} has deterministic morphisms $\pi_{a,b}$ and every $Q$-morphism is a $Q'$-morphism. (v) and (vi) follow as the commutativity gadgets constructed in \cref{thm:basic-char,thm:a-robust-char} satisfy the conditions of~\cite[Lemma 4.7]{CDdBVZ25}. (vii) and (viii) follow using \cref{lem:quotient-commutativity} to note that a commutative quantum polymorphism algebra implies $Q$-stable commutativity. (ix) and (x) follow by using \cref{lem:q-to-qa} to equate $q$-stable commutativity and $qa$-stable commutativity, and using \cref{lem:ungapping} to equate $q$-stable commutativity and commutativity under Connes-embeddable representations. Finally, (xi) and (xii) follow by \cref{lem:qc-ungapping}.
\end{proof}

\begin{figure}[b!]
    \centering
    \ifthenelse{\boolean{anonymous}}{\def\myscale{1}}{\def\myscale{.98}\vspace*{-7mm}}
    \begin{tikzpicture}[scale=\myscale]
        \node (c) at (0,0) {$C^\ast$};
        \node (qc) [right=of c]  {$qc$};
        \node (qa) [right=of qc]  {$qa$};
        \node (q) [right=of qa]  {$q$};
        \node (rqc) [below=of qc] {$qc$};
        \node (rqa) [below=of qa] {$qa$};
        \node (rq) [below=of q] {$q$};
        \node (roqc) [above right=1 and 0.5 of qc] {$qc$};
        \node (roqa) [above right=1 and 0.5 of qa] {$qa$};
        \node (roq) [above right=1 and 0.5 of q] {$q$};
        \node (oqc) [above=of roqc] {$qc$};
        \node (oqa) [above=of roqa] {$qa$};
        \node (oq) [above=of roq] {$q$};
        \node (oc) [left=of oqc] {$C^\ast$};
        \node [left=2.3 of oc] {Oracular};
        \node [left=4.1 of roqc] {$c$-$v$-robust};
        \node [left=1.1 of c] {Non-oracular};
        \node [left=2.9 of rqc] {$a$-robust};
        \draw[gray] (-1,3.7) -- (-1,-2.2);
        \draw[->] (oc) -- (oqc);
        \draw[->] (oqc) -- (oqa);
        \draw[->] (oqa) -- (oq);
        \draw[->] (roqc) -- (roqa);
        \draw[->] (roqa) -- (roq);
        \draw[->,densely dotted] (oc) -- (roqc);
        \draw[->,gray] (roqc) -- (oqc);
        \draw[->,gray] (roqa) -- (oqa);
        \draw[->,gray] (roq) -- (oq);

        \draw[blue, thick, rounded corners] ($(oqa.north west)+(-0.3,0.3)$) -- ($(roqa.south west)+(-0.3,-0.3)$) -- ($(roq.south east)+(0.3,-0.3)$) -- ($(roq.north east)+(0.3,0.3)$) -- ($(roqa.north east)+(0.3,0.3)$) -- ($(oqa.north east)+(0.3,0.3)$) -- cycle;
        \draw[blue, thick, rounded corners] ($(oqc.north west)+(-0.3,0.3)$) -- ($(roqc.south west)+(-0.3,-0.3)$) -- ($(roqc.south east)+(0.3,-0.3)$) -- ($(oqc.north east)+(0.3,0.3)$) -- cycle;

        \draw[->,loosely dashed] (c) -- (oc);
        \draw[->,loosely dashed] (qc) -- (oqc);
        \draw[->,loosely dashed] (qa) -- (oqa);
        \draw[->,loosely dashed] (q) -- (oq);
        \draw[->,loosely dashed] (rqc) -- (roqc);
        \draw[->,loosely dashed] (rqa) -- (roqa);
        \draw[->,loosely dashed] (rq) -- (roq);
        
        \draw[->] (c) -- (qc);
        \draw[->] (qc) -- (qa);
        \draw[->] (qa) -- (q);
        \draw[->] (rqc) -- (rqa);
        \draw[->] (rqa) -- (rq);
        \draw[->,densely dotted] (c) -- (rqc);
        \draw[->,gray] (rqc) -- (qc);
        \draw[->,gray] (rqa) -- (qa);
        \draw[->,gray] (rq) -- (q);

        \draw[blue, thick, rounded corners] ($(qa.north west)+(-0.3,0.3)$) -- ($(rqa.south west)+(-0.3,-0.3)$) -- ($(rq.south east)+(0.3,-0.3)$) -- ($(rq.north east)+(0.3,0.3)$) -- ($(rqa.north east)+(0.3,0.3)$) -- ($(qa.north east)+(0.3,0.3)$) -- cycle;
        \draw[blue, thick, rounded corners] ($(qc.north west)+(-0.3,0.3)$) -- ($(rqc.south west)+(-0.3,-0.3)$) -- ($(rqc.south east)+(0.3,-0.3)$) -- ($(qc.north east)+(0.3,0.3)$) -- cycle;
    \end{tikzpicture}
    \caption{Relationships between commutativity gadget classes shown by~\cref{thm:implications-equivalences}. Trivial implications are denoted by solid grey arrows, the implications (i) -- (iv) are denoted by solid black arrows, the implications (v) and (vi) are denoted by dashed black arrows, the implications (vii) and (viii) are denoted by dotted black arrows, and the equivalences (ix) -- (xii) are denoted by blue boxes.}
    \label{fig:implications}
\end{figure}
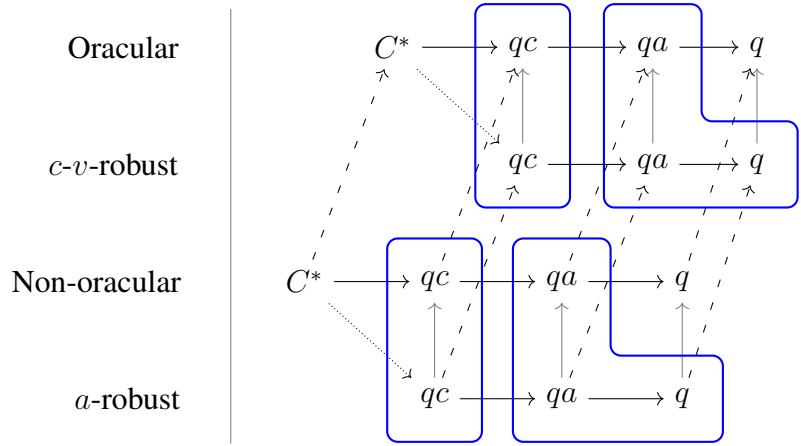

\ifthenelse{\boolean{anonymous}}{}{\FloatBarrier} 
\section{Polymorphisms of the complete graphs}\label{sec:complete-graphs}

In this section, we study the quantum polymorphisms of the complete graphs $K_n$, which we know to be nonclassical~\cite{CDdBVZ25}.

Before studying the quantum case, we mention the classical case, which will motivate our analysis. First, it is known that the polymorphisms $f:K_n^{k}\rightarrow K_n$ have the form $f=\sigma\circ\pi_i$, where $\pi_i:K_n^{k}\rightarrow K_n$ is the canonical projection $\pi_i(\mathbf{x})=x_i$ and $\sigma:K_n\rightarrow K_n$ is a permutation. As such, it follows that the commutative algebra $\Mor(K_n^{ k},K_n)$ is generated by the projections $p_{\mathbf{x}a}(\sigma\circ\pi_i)=\delta_{\sigma(x_i),a}$. To realise the bijection $\mor(K_n^{ k},K_n)=S_n^{\sqcup k}$ at the level of the algebra, let $\mathbf{x},\mathbf{x}'\in K_n^{ k}$ such that $x_i=x_i'$ but $x_j\neq x_j'$ for all $j\neq i$, and consider $\Pi_i=\sum_{a}p_{\mathbf{x}a}p_{\mathbf{x}'a}$. Then, we have that $\Pi_i(\sigma\circ\pi_j)=\delta_{i,j}$, which gives that $\Pi_i$ is independent of the choice of $\mathbf{x},\mathbf{x}'$ satisfying $x_j=x_j'\iff j=i$ for all $j$, and that $\{\Pi_i\}_{i\in[k]}$ forms a resolution of identity. By construction, $\Pi_i\Mor(K_n^{ k},K_n)$ admits the coproduct $\Delta(f)(\sigma\circ\pi_j,\tau\circ\pi_{j'})=\delta_{i,j,j'}f(\sigma\circ\tau\circ\pi_i)$, giving that $\Pi_i\Mor(K_n^{ k},K_n)\cong C(S_n)$. 

\begin{theorem}\label{thm:sn-decomposition}
	For $n\geq 3$, $\mor^+(K_n^{ k},K_n)$ is isomorphic to $(S_n^+)^{\sqcup k}$ in the following way: there exists a central PVM $\{\Pi_i\}_{i=1}^k$ in $\Mor^+(K_n^{ k},K_n)$ such that for each $i$, the $C^\ast$-algebra $$\Pi_i\Mor^+(K_n^{ k},K_n)\cong\Mor^+(K_n,K_n)=S_n^+$$ via the $\ast$-homomorphism extending $\Pi_ip_{\mathbf{x}a}\mapsto p_{x_i a}$.
\end{theorem}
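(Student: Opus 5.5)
The plan is to mimic the classical argument inside the $C^\ast$-algebra $\mathcal{B}=\Mor^+(K_n^k,K_n)$. We define $\Pi_i=\sum_{a}p_{\mathbf{x}a}p_{\mathbf{x}'a}$ for $\mathbf{x},\mathbf{x}'$ with $x_j=x'_j\iff j=i$. The proof then proceeds in three steps:
\begin{enumerate}[(1)]
\item show that $\Pi_i$ is independent of the choice of $(\mathbf{x},\mathbf{x}')$, is a central projection, and that $\sum_i\Pi_i=1$;
\item show that $\Pi_i p_{\mathbf{x}a}$ depends only on $(x_i,a)$, and that these elements satisfy the magic-unitary relations of $S_n^+$;
\item show that the resulting surjection $\Mor^+(K_n,K_n)\to\Pi_i\mathcal{B}$ is injective.
\end{enumerate}

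\textbf{Basic identities.} Since $K_n^k$ has an edge between $\mathbf{x}$ and $\mathbf{y}$ exactly when $x_j\ne y_j$ for all $j$, the defining relations give $p_{\mathbf{x}a}p_{\mathbf{y}a}=0$ whenever $\mathbf{x}\sim\mathbf{y}$. In particular, for adjacent $\mathbf{x},\mathbf{y}$, summing over the other colour gives $p_{\mathbf{x}a}=p_{\mathbf{x}a}\sum_{b\neq a}p_{\mathbf{y}b}$. The first goal is to show that for any two vertices $\mathbf{x},\mathbf{z}$, the projections $p_{\mathbf{x}a}$ and $p_{\mathbf{z}b}$ commute whenever $\mathbf{x}$ and $\mathbf{z}$ share a common neighbour in a suitable ``rich'' way. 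For $n\ge3$, any two vertices of $K_n^k$ have a common neighbour, and the local structure replicates the one used to show $\Mor^+(K_n,K_n)=C(S_n^+)$ rather than a larger algebra. The standard trick is the one showing that the rows of a quantum automorphism of $K_n$ form a magic unitary: for a fixed neighbour $\mathbf{y}$ of both, $\{p_{\mathbf{x}a}\}$ and $\{p_{\mathbf{z}b}\}$ are both refined by the complementary PVM $\{p_{\mathbf{y}c}\}$. I expect to show, via the $\Pi_i$, that $p_{\mathbf{x}a}p_{\mathbf{x}'b}=0$ for $a\ne b$ when $\mathbf{x},\mathbf{x}'$ agree in exactly one coordinate $i$ \emph{and} we are in the $\Pi_i$ sector, with an analogous orthogonality in the other sectors.

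\textbf{Centrality and resolution of identity.} I would prove $\sum_i\Pi_i=1$ by induction on the number of coordinates in which two vertices agree, using chains of common neighbours, the same way one proves that classical polymorphisms of $K_n$ are essentially unary (the Greenwell--Lovász argument). Independence of $\Pi_i$ from the choice of pair follows by moving one coordinate at a time through a common neighbour and using $p_{\mathbf{x}a}=p_{\mathbf{x}a}\sum_{b\ne a}p_{\mathbf{y}b}$. Centrality follows once we show $\Pi_i$ commutes with every $p_{\mathbf{z}c}$; here I would express $\Pi_i$ using a pair adjacent to $\mathbf{z}$ or containing $\mathbf{z}$, by independence. \textbf{I expect this step to be the main obstacle.} The noncommutative analogue of ``agreeing vertices receive equal colours'' must be established purely from orthogonality relations, and it is here that $n\ge3$ is essential, since $K_2$ is bipartite and the statement fails. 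I would try first the case $k=2$ by explicit computation, then induct on $k$ via the map $K_n^{k}\to K_n^{k-1}\times K_n$.

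\textbf{Isomorphism with $S_n^+$.} Given the central PVM, set $u_{xa}=\Pi_ip_{\mathbf{x}a}$ for any $\mathbf{x}$ with $x_i=x$; that this is well defined is the sector-$i$ version of step (1). The elements $u_{xa}$ are projections summing to $\Pi_i$ along rows. Edge relations give $u_{xa}u_{ya}=0$ for $x\ne y$. Summing along columns equals $\Pi_i$ by a counting/trace-free argument, using $p_{\mathbf{x}a}=p_{\mathbf{x}a}\sum_{b\neq a}p_{\mathbf{y}b}$ together with $n\ge3$. Thus $u$ is a magic unitary and defines a surjection $C(S_n^+)=\Mor^+(K_n,K_n)\to\Pi_i\mathcal{B}$. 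For injectivity, I would construct the inverse: the composite with the projection $\pi_i\colon K_n^k\to K_n$, i.e.\ $p_{\mathbf{x}a}\mapsto p_{x_ia}$, is a $\ast$-homomorphism $\mathcal{B}\to\Mor^+(K_n,K_n)$ killing $\Pi_j$ for $j\ne i$. Composing the two maps gives the identity on generators, which proves the isomorphism.
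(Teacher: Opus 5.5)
There is a genuine gap. Steps (1) and (2) are the whole content of the theorem, and you leave them as ``the main obstacle'' with only heuristics, one of which aims at a false statement.

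Your ``first goal'' is that $p_{\mathbf{x}a}$ and $p_{\mathbf{z}b}$ commute when $\mathbf{x},\mathbf{z}$ share a common neighbour. This fails for $n\ge4$. Already for $k=1$, any two vertices of $K_n$ have a common neighbour, yet $C(S_n^+)$ is noncommutative while the entries of a single row commute. So some $p_{xa},p_{zb}$ with $x\ne z$ do not commute. More generally, once the theorem holds, the $\Pi_i$-components of $p_{\mathbf{x}a},p_{\mathbf{z}b}$ are $p_{x_ia},p_{z_ib}$. Hence for $n\ge4$ no pair $\mathbf{x}\ne\mathbf{z}$ has $\{p_{\mathbf{x}a}\}_a$ commuting with $\{p_{\mathbf{z}b}\}_b$.

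The commutation that is true, and that you actually need for $\Pi_i=\sum_a p_{\mathbf{x}a}p_{\mathbf{x}'a}$ to be a projection, is between projections of the \emph{same} colour $a$.

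Your other tools do not supply the needed equalities either:
\begin{itemize}
\item Orthogonality $p_{\mathbf{x}a}p_{\mathbf{y}a}=0$ only says $p_{\mathbf{x}a}\le 1-p_{\mathbf{y}a}$. So ``moving one coordinate at a time through a common neighbour'' produces inequalities, not independence of $\Pi_i$.
\item An induction on $k$ via $K_n^{k}=K_n^{k-1}\times K_n$ would need a quantum splitting result for polymorphisms of products, which is essentially the statement itself.
\end{itemize}

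The missing idea is a linear identity for a fixed colour.
\begin{enumerate}
\item \textbf{Clique identity.} If $\mathbf{w}_1,\dots,\mathbf{w}_n$ pairwise disagree in every coordinate (an $n$-clique of $K_n^k$), then $\sum_j p_{\mathbf{w}_ja}=1$ for each $a$. The summands are pairwise orthogonal, so each sum $q_a$ is a projection with $q_a\le 1$. Summing over $a$ gives $n$, so $\sum_a(1-q_a)=0$ with every term positive, forcing $q_a=1$ in the $C^\ast$-algebra.
\item \textbf{Swap identity.} Any two fully disagreeing $\mathbf{x},\mathbf{x}'$ extend to such a clique by tuples $\mathbf{w}_3,\dots,\mathbf{w}_n$ avoiding $x_i,x'_i$ in every coordinate $i$. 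The same $\mathbf{w}_3,\dots,\mathbf{w}_n$ complete any pair $\mathbf{y},\mathbf{y}'$ obtained by swapping some coordinates of $\mathbf{x},\mathbf{x}'$. Subtracting gives $p_{\mathbf{x}a}+p_{\mathbf{x}'a}=p_{\mathbf{y}a}+p_{\mathbf{y}'a}$.
\item \textbf{Same-colour commutation.} From the swap identity, for fixed $\mathbf{x}$ the commutator $[p_{\mathbf{x}a},p_{\mathbf{y}a}]$ is the same for all $\mathbf{y}$ with a given agreement set $S$. Now compare with a third tuple $\mathbf{z}$ that agrees with both on $S$ and differs from both elsewhere; this needs $n\ge3$. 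This shows the commutator equals its own negative, so same-colour projections commute.
\item \textbf{The projections $\Pi_S$.} Together with this commutation, the swap identity shows that $\Pi_S=\sum_a p_{\mathbf{x}a}p_{\mathbf{y}a}$ (with $x_j=y_j\iff j\in S$) has the following properties:
  \begin{itemize}
  \item it is a projection, independent of the chosen pair;
  \item it is central: take the pair through the given vertex, so that $p_{\mathbf{x}a}\Pi_S=p_{\mathbf{x}a}p_{\mathbf{y}a}=\Pi_Sp_{\mathbf{x}a}$;
  \item $\Pi_Sp_{\mathbf{x}a}$ depends only on $\mathbf{x}|_S$;
  \item $\Pi_S=\Pi_T+\Pi_{S\setminus T}$ for $T\subseteq S$, whence $1=\Pi_{[k]}=\sum_i\Pi_{\{i\}}$.
  \end{itemize}
\end{enumerate}

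With this in place your step (3) is correct and is the intended argument. Your column-sum check is unnecessary, since $\Mor^+(K_n,K_n)$ is presented by the row sums and the relations $p_{xa}p_{ya}=0$ for $x\ne y$ alone.
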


In the following we always assume $n\geq 3$.

\begin{lemma}\label{lem:sn-diagonals}
	Let $\mathbf{x}_1,\ldots,\mathbf{x}_n\in K_n^{ k}$ be such that $x_{ij}\neq x_{i'j}$ for all $i\neq i'$. Then, for all $a\in K_n$, $p_{\mbf{x}_1a}+\ldots+p_{\mbf{x}_na}=1$ in $\Mor^+(K_n^{ k},K_n)$.	
\end{lemma}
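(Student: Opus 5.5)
The hypothesis says that $\mathbf{x}_1,\ldots,\mathbf{x}_n$ are pairwise adjacent in $K_n^k$: for $i\neq i'$ and every coordinate $j$ we have $x_{ij}\neq x_{i'j}$, so $(\mathbf{x}_i,\mathbf{x}_{i'})$ is an edge of $K_n^k$. The plan is to show that the $n\times n$ matrix $u=(p_{\mathbf{x}_i a})_{i,a}$ is a quantum permutation matrix. Then column sums are $1$, which is exactly the claim.

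Step one gives orthogonality within columns. For $i\neq i'$ the edge relation of $\Mor^+(K_n^k,K_n)$ forces $p_{\mathbf{x}_i a}p_{\mathbf{x}_{i'}b}=0$ whenever $(a,b)$ is not an edge of $K_n$, i.e.\ whenever $a=b$. So for fixed $a$ the projections $p_{\mathbf{x}_1 a},\ldots,p_{\mathbf{x}_n a}$ are pairwise orthogonal. Hence $q_a:=\sum_i p_{\mathbf{x}_i a}$ is a projection, and $q_a\le 1$.

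Step two uses the row sums. Each row satisfies $\sum_a p_{\mathbf{x}_i a}=1$, so summing over $i$ gives $\sum_a q_a=\sum_i\sum_a p_{\mathbf{x}_i a}=n$. Now each $1-q_a$ is a projection, hence positive, and $\sum_a(1-q_a)=n-n=0$. A sum of positive elements that vanishes forces each summand to vanish, so $q_a=1$ for every $a$. This argument is the standard proof that a magic-unitary-type matrix whose entries are orthogonal along one direction and sum to one along the other is a full magic unitary. Note that the argument does not need $n\ge 3$.

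The only delicate step is the positivity argument. It must be done in the $C^\ast$-algebra $\Mor^+(K_n^k,K_n)$, where positivity is meaningful. Alternatively, one can argue purely algebraically:
\[
0=\sum_a(1-q_a)=\sum_a(1-q_a)^\ast(1-q_a),
\]
and in a $C^\ast$-algebra a vanishing sum of terms $y_a^\ast y_a$ forces each $y_a=0$. I do not expect any real obstacle beyond checking carefully that the edge relations apply to every pair $i\neq i'$.
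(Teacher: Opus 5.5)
Your proposal is correct and follows the paper's own proof: the edge relations of $K_n^k$ make $p_{\mathbf{x}_1a},\ldots,p_{\mathbf{x}_na}$ pairwise orthogonal, so their sum is at most $1$, and summing over $a$ against the row relations $\sum_a p_{\mathbf{x}_ia}=1$ forces equality. Your positivity argument, that $\sum_a(1-q_a)=0$ with each $1-q_a\geq 0$ forces $q_a=1$, just spells out the paper's brief remark that ``the inequalities must all be equalities.''
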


\begin{proof}
	Since $\mbf{x}_i\sim_{K_n^{ k}}\mbf{x}_j$ for all $i\neq j$ by hypothesis, we have that $p_{\mathbf{x}_ia}p_{\mathbf{x}_ja}=0$. As such, the $p_{\mathbf{x}_ia}$ are orthogonal projections, giving $p_{\mbf{x}_1a}+\ldots+p_{\mbf{x}_na}\leq 1$. Summing over $a$ gives
	$$n=\sum_i1=\sum_{i,a}p_{\mathbf{x}_ia}\leq\sum_{a}1=n,$$
	and hence the inequalities must all be equalities.
\end{proof}

\begin{lemma}\label{lem:the-swapperrr}
	Let $\mathbf{x},\mathbf{x}'\in K_n^{ k}$ such that $x_i\neq x_i'$ for all $i$. Fix $S\subseteq[k]$ and define $\mathbf{y},\mathbf{y}'\in K_n^{ k}$ as $y_i=x_i$ and $y_i'=x_i'$ if $i\in S$, and $y_i=x_i'$ and $y_i'=x_i$ otherwise. Then, for all $a\in K_n$, $p_{\mathbf{x}a}+p_{\mathbf{x}'a}=p_{\mathbf{y}a}+p_{\mathbf{y}'a}$ in $\Mor^+(K_n^{ k},K_n)$.
\end{lemma}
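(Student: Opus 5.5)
We will derive the identity from two applications of \cref{lem:sn-diagonals}, using the hypothesis $n\geq 3$ to find a common completion. In $K_n^{k}$, two vertices are adjacent exactly when they differ in every coordinate. Since $x_i\neq x_i'$ for all $i$, the set $\{x_i,x_i'\}$ has exactly two elements in every coordinate, and by construction $\{y_i,y_i'\}=\{x_i,x_i'\}$ as well. Because $n\geq 3$, we can choose vertices $\mathbf{z}_3,\ldots,\mathbf{z}_n\in K_n^{k}$ such that, for each coordinate $i$, the values $z_{3,i},\ldots,z_{n,i}$ are pairwise distinct and list exactly the $n-2$ elements of $K_n\setminus\{x_i,x_i'\}$. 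For instance, in each coordinate we fix an arbitrary enumeration of the complement and set $z_{j,i}$ to be its $(j-2)$-th element.

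Then the family $\mathbf{x},\mathbf{x}',\mathbf{z}_3,\ldots,\mathbf{z}_n$ consists of $n$ vertices that are pairwise distinct in every coordinate. The family $\mathbf{y},\mathbf{y}',\mathbf{z}_3,\ldots,\mathbf{z}_n$ has the same property, because in coordinate $i$ the pair $\{y_i,y_i'\}$ is the same two-element set $\{x_i,x_i'\}$. Applying \cref{lem:sn-diagonals} to both families, for every $a\in K_n$ we get
\begin{align*}
	p_{\mathbf{x}a}+p_{\mathbf{x}'a}+\sum_{j=3}^np_{\mathbf{z}_ja}=1=p_{\mathbf{y}a}+p_{\mathbf{y}'a}+\sum_{j=3}^np_{\mathbf{z}_ja}.
\end{align*}
Subtracting the common sum gives $p_{\mathbf{x}a}+p_{\mathbf{x}'a}=p_{\mathbf{y}a}+p_{\mathbf{y}'a}$.

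There is no real obstacle here. The only point needing care is that the completion $\mathbf{z}_3,\ldots,\mathbf{z}_n$ is shared by both families, which works because the construction depends only on the unordered pairs $\{x_i,x_i'\}$ coordinatewise. When $n=3$ there is a single $\mathbf{z}_3$, which is forced in each coordinate. This is where $n\geq 3$ is used, since for $n=2$ the sum over the $\mathbf{z}_j$ would be empty.
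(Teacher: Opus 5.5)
Your proposal is correct and is essentially the paper's proof: you complete both $\{\mathbf{x},\mathbf{x}'\}$ and $\{\mathbf{y},\mathbf{y}'\}$ with a common family $\mathbf{z}_3,\ldots,\mathbf{z}_n$ that fills the remaining values coordinatewise, apply \cref{lem:sn-diagonals} to both families, and subtract. One minor aside: $n\geq 3$ is not actually needed for this lemma, because for $n=2$ the completion is empty and the same argument directly gives $p_{\mathbf{x}a}+p_{\mathbf{x}'a}=1=p_{\mathbf{y}a}+p_{\mathbf{y}'a}$.
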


\begin{proof}
	There exist $\mathbf{x}_3,\ldots,\mathbf{x}_n\in K_n^{ k}$ such that $x_{ij}\neq x_{i'j}$ for all $i\neq i'$, and $x_{ij}\neq x_j,x_j'$. Then, by \cref{lem:sn-diagonals}, we have that $p_{\mbf{x}a}+p_{\mbf{x}'a}+p_{\mbf{x}_3a}+\ldots+p_{\mbf{x}_na}=1$ and  $p_{\mbf{y}a}+p_{\mbf{y}'a}+p_{\mbf{x}_3a}+\ldots+p_{\mbf{x}_na}=1$. Putting these together gives the wanted result.
\end{proof}

\begin{lemma}\label{lem:lotsa-commutation}
	For all $\mathbf{x},\mathbf{y}\in K_n^{ k}$ and $a\in K_n$, $p_{\mathbf{x}a}$ and $p_{\mathbf{y}a}$ commute in $\Mor^+(K_n^{ k},K_n)$.
\end{lemma}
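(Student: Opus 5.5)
The plan is to reduce to the swap identity of \cref{lem:the-swapperrr} and then upgrade the resulting order relations between projections to actual commutation. If $\mathbf{x}=\mathbf{y}$ there is nothing to prove. If $\mathbf{x}\sim\mathbf{y}$ in $K_n^k$, i.e.\ $x_i\neq y_i$ for all $i$, then $p_{\mathbf{x}a}p_{\mathbf{y}a}=0$ by the defining relations, so the two projections commute. So assume that $S=\set*{i\in[k]}{x_i=y_i}$ is a nonempty proper subset of $[k]$.

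\textbf{Step 1 (a projection above both).} Define $\mathbf{x}'$ by $x'_i=y_i$ for $i\notin S$, and for $i\in S$ let $x'_i$ be any value different from $x_i$. Then $\mathbf{x}\sim\mathbf{x}'$. Applying \cref{lem:the-swapperrr} to $\mathbf{x},\mathbf{x}'$ and the set $S$ produces exactly our $\mathbf{y}$, together with the partner $\mathbf{y}'$ defined by $y'_i=x'_i$ for $i\in S$ and $y'_i=x_i$ otherwise. The lemma gives
\[
P:=p_{\mathbf{x}a}+p_{\mathbf{x}'a}=p_{\mathbf{y}a}+p_{\mathbf{y}'a}.
\]
Both sums are sums of orthogonal projections, since $\mathbf{x}\sim\mathbf{x}'$ and $\mathbf{y}\sim\mathbf{y}'$. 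Hence $P$ is a projection with $p_{\mathbf{x}a},p_{\mathbf{y}a}\leq P$. This alone does not give commutation, so we exploit the freedom in the choice of $x'_i$ for $i\in S$; since $n\geq3$ there are at least two choices in each coordinate of $S$.

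\textbf{Step 2 (varying the auxiliary vector).} Take two auxiliary vectors $\mathbf{x}'$ and $\mathbf{x}''$ as in Step 1 with $x'_i\neq x''_i$ for every $i\in S$. This gives projections $P'=p_{\mathbf{x}a}+p_{\mathbf{x}'a}$ and $P''=p_{\mathbf{x}a}+p_{\mathbf{x}''a}$, each dominating both $p_{\mathbf{x}a}$ and $p_{\mathbf{y}a}$. Since $p_{\mathbf{y}a}\leq P'$ we have $p_{\mathbf{y}a}=p_{\mathbf{y}a}P'=p_{\mathbf{y}a}p_{\mathbf{x}a}+p_{\mathbf{y}a}p_{\mathbf{x}'a}$, and similarly with $\mathbf{x}''$. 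Subtracting the two expressions yields $p_{\mathbf{y}a}(p_{\mathbf{x}'a}-p_{\mathbf{x}''a})=0$. The plan is to combine this with \cref{lem:sn-diagonals}: complete $\mathbf{x},\mathbf{x}',\mathbf{x}''$ (where possible) to a family of $n$ pairwise adjacent vectors, giving a resolution of identity in which $p_{\mathbf{x}a}$ is one term. From this we aim to derive $p_{\mathbf{y}a}p_{\mathbf{x}'a}=p_{\mathbf{x}'a}p_{\mathbf{y}a}$. Then $p_{\mathbf{y}a}p_{\mathbf{x}a}=p_{\mathbf{y}a}-p_{\mathbf{y}a}p_{\mathbf{x}'a}$ would be self-adjoint, which gives $[p_{\mathbf{x}a},p_{\mathbf{y}a}]=0$.

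\textbf{Main obstacle.} The difficulty is that $\mathbf{x}'$ and $\mathbf{x}''$ agree on the coordinates outside $S$, so they are not adjacent and \cref{lem:sn-diagonals} does not directly apply to them. I would first try an induction on $k-|S|$. If $\mathbf{x}$ and $\mathbf{y}$ differ in fewer coordinates, pass through an intermediate vector $\mathbf{z}$ with $\mathbf{z}\sim\mathbf{x}'$ obtained by a swap. Then use the already established commutation of $p_{\mathbf{x}'a}$ with $p_{\mathbf{z}a}$ and $p_{\mathbf{y}'a}$ (orthogonality, or the inductive hypothesis) together with the decomposition $P=p_{\mathbf{y}a}+p_{\mathbf{y}'a}$ to conclude that $p_{\mathbf{x}'a}$ commutes with $p_{\mathbf{y}a}$, and hence so does $p_{\mathbf{x}a}=P-p_{\mathbf{x}'a}$. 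The crucial point is that $P$ commutes with all four of $p_{\mathbf{x}a},p_{\mathbf{x}'a},p_{\mathbf{y}a},p_{\mathbf{y}'a}$, so everything reduces to commutation inside the corner $P\,\Mor^+(K_n^k,K_n)\,P$. I expect this is where the hypothesis $n\geq3$ is genuinely used.
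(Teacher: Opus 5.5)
Your Step~1 sets up the same configuration the paper uses, $P=p_{\mathbf{x}a}+p_{\mathbf{x}'a}=p_{\mathbf{y}a}+p_{\mathbf{y}'a}$. There is a genuine gap after that: the proposal never gets from this identity to commutation, and none of the routes you sketch can close.

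Step~2 carries no information. From $p_{\mathbf{y}a}\le P'$ you get $p_{\mathbf{y}a}p_{\mathbf{x}'a}=p_{\mathbf{y}a}-p_{\mathbf{y}a}p_{\mathbf{x}a}$, which is independent of the auxiliary vector. So $p_{\mathbf{y}a}(p_{\mathbf{x}'a}-p_{\mathbf{x}''a})=0$ holds automatically.

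The corner reduction is circular. Since $P$ commutes with $p_{\mathbf{y}a}$ and $p_{\mathbf{x}'a}=P-p_{\mathbf{x}a}$, you have $[p_{\mathbf{x}'a},p_{\mathbf{y}a}]=-[p_{\mathbf{x}a},p_{\mathbf{y}a}]$. Proving that $p_{\mathbf{x}'a}$ commutes with $p_{\mathbf{y}a}$ is therefore the original problem.

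The induction on $k-|S|$ has no decreasing measure. The pairs produced by the swap, $(\mathbf{x},\mathbf{y}')$ and $(\mathbf{x}',\mathbf{y})$, agree exactly on $[k]\setminus S$. The pair $(\mathbf{x}',\mathbf{y}')$ agrees exactly on $S$, so the commutation of $p_{\mathbf{x}'a}$ with $p_{\mathbf{y}'a}$ you want to invoke is the original case again. The relation this configuration yields, $[p_{\mathbf{x}a},p_{\mathbf{y}a}]=-[p_{\mathbf{x}a},p_{\mathbf{y}'a}]$, links the case of $d$ differing coordinates to the case of $k-d$. Only $d\in\{0,k\}$ is trivial, so this pairing never bottoms out; for $k=2$, $|S|=1$ it maps the case to itself.

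The missing idea is to read that relation as an independence statement rather than as a reduction. Expand
$[p_{\mathbf{x}a},p_{\mathbf{y}a}]=[p_{\mathbf{x}a},p_{\mathbf{x}a}+p_{\mathbf{x}'a}-p_{\mathbf{y}'a}]=-[p_{\mathbf{x}a},p_{\mathbf{y}'a}]$, using $p_{\mathbf{x}a}p_{\mathbf{x}'a}=0=p_{\mathbf{x}'a}p_{\mathbf{x}a}$. The vector $\mathbf{y}'$ equals $\mathbf{x}$ off $S$ and the freely chosen values on $S$, and does not involve the $y_i$ for $i\notin S$. Hence $C(\mathbf{x},\mathbf{y}):=[p_{\mathbf{x}a},p_{\mathbf{y}a}]$ is the same for every $\mathbf{y}$ agreeing with $\mathbf{x}$ exactly on $S$.

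Now use $n\ge 3$ to pick $\mathbf{z}$ with $z_i=x_i$ for $i\in S$ and $z_i\notin\{x_i,y_i\}$ for $i\notin S$. Any two of $\mathbf{x},\mathbf{y},\mathbf{z}$ agree exactly on $S$, so independence in the second argument plus antisymmetry give
\[
C(\mathbf{x},\mathbf{y})=C(\mathbf{x},\mathbf{z})=-C(\mathbf{z},\mathbf{x})=-C(\mathbf{z},\mathbf{y})=C(\mathbf{y},\mathbf{z})=C(\mathbf{y},\mathbf{x})=-C(\mathbf{x},\mathbf{y}),
\]
hence $C(\mathbf{x},\mathbf{y})=0$. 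This is the paper's argument. It is purely algebraic, with no order or positivity reasoning, and it is where $n\ge 3$ is genuinely used.
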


\begin{proof}
	Consider first the case that $x_i\neq y_i$ for all $i$. Then, as $\mbf{x}\sim_{K_n^{ k}}\mbf{y}$, $p_{\mbf{x}a}p_{\mbf{y}a}=0$, and hence they commute by the defining relations of $\Mor^+(K_n^{ k},K_n)$. In the other case, let $S=\set*{i}{x_i=y_i}\neq\varnothing$. Now, let $\mathbf{x}',\mathbf{y}'\in K_n^{ k}$ such that $x_i'=y_i'\neq x_i$ for $i\in S$, and $x_i'=y_i$ and $y_i'=x_i$ for $i\notin S$. Hence $\mbf{x},\mbf{x}',\mbf{y},\mbf{y}'$ satisfy the conditions of \cref{lem:the-swapperrr}, and hence $p_{\mbf{x}a}+p_{\mbf{x}'a}=p_{\mbf{y}a}+p_{\mbf{y}'a}$. Hence, we find that
	\begin{align*}
		[p_{\mbf{x}a},p_{\mbf{y}a}]=[p_{\mbf{x}a},p_{\mbf{x}a}+p_{\mbf{x}'a}-p_{\mbf{y}'a}]=-[p_{\mbf{x}a},p_{\mbf{y}'a}].
	\end{align*}
	In particular, this tells us that $[p_{\mbf{x}a},p_{\mbf{y}a}]$ does not depend on $y_i$ for $i\notin S$. Let $\mathbf{z}\in K_n^{ k}$ be such that $z_i=x_i=y_i$ for all $i\in S$, and $z_i\neq x_i$ and $z_i\neq y_i$ for all $i\notin S$. Therefore,
	$$[p_{\mathbf{x}a},p_{\mathbf{y}a}]=[p_{\mathbf{x}a},p_{\mathbf{z}a}]=-[p_{\mathbf{z}a},p_{\mathbf{x}a}]=-[p_{\mathbf{z}a},p_{\mathbf{y}a}]=-[p_{\mathbf{x}a},p_{\mathbf{y}a}],$$
	giving that $[p_{\mathbf{x}a},p_{\mathbf{y}a}]=0$.
\end{proof}

\begin{lemma}\label{lem:cool-projections}
	Let $S\subseteq[k]$ and define $\Pi_S=\sum_{a\in K_n}p_{\mathbf{x}a}p_{\mathbf{y}a}$ for some $\mathbf{x},\mathbf{y}\in K_n^{ k}$ such that $x_i=y_i\iff i\in S$. Then,
	\begin{enumerate}[(i)]
		\item $\Pi_S$ does not depend on the choice of $\mathbf{x},\mathbf{y}$ satisfying $x_i=y_i\iff i\in S$;
		\item $\Pi_S$ is a projection;
		\item $\Pi_S$ is central;
		\item $\Pi_Sp_{\mathbf{x}a}$ does not depend on $x_i$ for $i\notin S$; and
		\item For $T\subseteq S$, $\Pi_S=\Pi_T+\Pi_{S\backslash T}$.
	\end{enumerate}
\end{lemma}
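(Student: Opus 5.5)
The plan is to lean on \cref{lem:lotsa-commutation} and \cref{lem:the-swapperrr}, proving the five items roughly in the order (ii), (i), (iv), (iii), (v). For (ii): by \cref{lem:lotsa-commutation} the projections $p_{\mathbf{x}a}$ and $p_{\mathbf{y}a}$ commute, so each $p_{\mathbf{x}a}p_{\mathbf{y}a}$ is a projection. For $a\neq b$ the two products are orthogonal, since $p_{\mathbf{x}a}p_{\mathbf{x}b}=0$. Hence the sum $\Pi_S$ is a projection. (For $S=\varnothing$ the pair is adjacent, so $\Pi_\varnothing=0$; for $S=[k]$ we get $\mathbf{x}=\mathbf{y}$ and $\Pi_{[k]}=1$.)

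For (i), I would change one of $\mathbf{x},\mathbf{y}$ at a time, keeping the pattern $x_i=y_i\iff i\in S$, using the swap identity of \cref{lem:the-swapperrr}. Given $\mathbf{x},\mathbf{y}$, choose $\mathbf{x}'$ with $x'_i\ne x_i$ for all $i$ and $x'_i=y_i$ for $i\notin S$; this is possible since $y_i\neq x_i$ there. Swapping the coordinates outside $S$ gives $p_{\mathbf{x}a}+p_{\mathbf{x}'a}=p_{\mathbf{w}a}+p_{\mathbf{w}'a}$, where $\mathbf{w}$ agrees with $\mathbf{x}$ on $S$ and with $\mathbf{y}$ off $S$, so $\mathbf{w}=\mathbf{y}$. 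Multiplying by $p_{\mathbf{y}a}$, summing over $a$, and using orthogonality relations to kill cross terms (adjacent pairs give zero) should express $\Pi_S$ symmetrically.

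The cleaner route I would actually try is to show $p_{\mathbf{x}a}p_{\mathbf{y}a}=p_{\mathbf{x}a}p_{\mathbf{z}a}$ whenever $\mathbf{y},\mathbf{z}$ agree with $\mathbf{x}$ exactly on $S$. Mimic the proof of \cref{lem:lotsa-commutation}: $p_{\mathbf{x}a}p_{\mathbf{y}a}=p_{\mathbf{x}a}(p_{\mathbf{x}a}+p_{\mathbf{x}'a}-p_{\mathbf{y}'a})=p_{\mathbf{x}a}-p_{\mathbf{x}a}p_{\mathbf{y}'a}$, using $p_{\mathbf{x}a}p_{\mathbf{x}'a}=0$ when $\mathbf{x}'$ is adjacent to $\mathbf{x}$. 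The term $p_{\mathbf{x}a}p_{\mathbf{y}'a}$ has $\mathbf{y}'$ agreeing with $\mathbf{x}$ exactly off $S$, with the $S$-part free subject to differing. Iterating this once more shows the product depends only on $\mathbf{x}$ and $S$. Symmetrising in $\mathbf{x}\leftrightarrow\mathbf{y}$ (they commute) and moving each coordinate separately then gives full independence. The bookkeeping of which coordinates are free is the main obstacle, and $n\ge3$ is needed to choose the auxiliary tuples.

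For (iv), write $\Pi_Sp_{\mathbf{x}a}=\sum_b p_{\mathbf{x}b}p_{\mathbf{y}b}p_{\mathbf{x}a}=p_{\mathbf{x}a}p_{\mathbf{y}a}$, with $\mathbf{y}$ chosen via (i). By the step above this depends only on $x|_S$, because the roles of $\mathbf{x}$ and $\mathbf{y}$ are interchangeable and the pair can be moved off $S$ freely. Then (iii) follows: $\Pi_S p_{\mathbf{x}a}=p_{\mathbf{x}a}p_{\mathbf{y}a}=p_{\mathbf{y}a}p_{\mathbf{x}a}=p_{\mathbf{x}a}\Pi_S$, so $\Pi_S$ commutes with every generator.

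For (v), take $\mathbf{x},\mathbf{y}$ agreeing exactly on $S$, and let $\mathbf{y}'$ agree with $\mathbf{x}$ on $T$ and with $\mathbf{y}$ elsewhere. Pick a tuple $\mathbf{z}$ that agrees with $\mathbf{x}$ exactly on $T$ and with $\mathbf{y}$ exactly on $S\setminus T$, and apply \cref{lem:the-swapperrr} to the pair $(\mathbf{x},\mathbf{z})$ to get $p_{\mathbf{y}a}$-type identities. Multiplying $\sum_a p_{\mathbf{x}a}(\cdot)$ then gives $\Pi_S=\Pi_T+\Pi_{S\setminus T}$ after identifying each term via (i). As a sanity check, the formula is consistent with $\Pi_\varnothing=0$.
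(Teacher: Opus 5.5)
Your (ii), (iii) and (iv) are essentially the paper's arguments. However, (i) stops short of its key step. Your identity $p_{\mathbf{x}a}p_{\mathbf{y}a}=p_{\mathbf{x}a}-p_{\mathbf{x}a}p_{\mathbf{y}'a}$ in fact holds for \emph{every} $\mathbf{y}$ agreeing with $\mathbf{x}$ exactly on $S$ and \emph{every} $\mathbf{y}'$ agreeing with $\mathbf{x}$ exactly off $S$: take $x'_i=y'_i$ for $i\in S$ and $x'_i=y_i$ for $i\notin S$ in \cref{lem:the-swapperrr}. So independence of $\mathbf{y}$ is immediate and no iteration is needed.

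The real issue is independence of $\mathbf{x}$. Symmetrising $\mathbf{x}\leftrightarrow\mathbf{y}$ only moves the coordinates of $\mathbf{x}$ outside $S$. No per-$a$ argument can move the coordinates inside $S$, because $p_{\mathbf{x}a}p_{\mathbf{y}a}$ genuinely depends on $x|_S$: under the classical polymorphism $\mathbf{x}\mapsto x_i$ with $i\in S$ it evaluates to $\delta_{x_i,a}$. So ``moving each coordinate separately'' does not close the argument.

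The missing idea is to sum your identity over $a$. Let $\Pi_S(\mathbf{x})$ denote the now well-defined sum. Then $\Pi_S(\mathbf{x})=1-\Pi_{S^c}(\mathbf{x})$. By your symmetrisation, applied to $S$ and to $S^c$ (using $n\geq 3$ to pick a common partner), the left side depends only on $x|_S$ and the right side only on $x|_{S^c}$. Hence both are constant. This complementary switch is exactly what the paper's three-step chain, alternating between $\Pi_S$ and $1-\Pi_{S^c}$, accomplishes. With it added, your route is a tidier version of the same proof.

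Item (v) does not work as written, for three reasons:
\begin{enumerate}[(a)]
\item Since $\mathbf{x},\mathbf{y}$ agree on all of $S\supseteq T$, your $\mathbf{y}'$ is just $\mathbf{y}$.
\item A tuple $\mathbf{z}$ agreeing with $\mathbf{x}$ exactly on $T$ and with $\mathbf{y}$ on $S\setminus T$ cannot exist when $S\setminus T\neq\varnothing$, because $x_i=y_i$ there.
\item \cref{lem:the-swapperrr} cannot be applied to $(\mathbf{x},\mathbf{z})$, which agree on $T$, since the lemma needs tuples differing in every coordinate.
\end{enumerate}

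A correct version is as follows. Pick $\mathbf{w}$ differing from both $\mathbf{x}$ and $\mathbf{y}$ in every coordinate (possible as $n\geq 3$). Apply \cref{lem:the-swapperrr} to $(\mathbf{y},\mathbf{w})$, keeping the coordinates in $T$. This gives $p_{\mathbf{y}a}+p_{\mathbf{w}a}=p_{\mathbf{z}a}+p_{\mathbf{z}'a}$, where:
\begin{itemize}
\item $\mathbf{z}$ ($\mathbf{y}$ on $T$, $\mathbf{w}$ elsewhere) agrees with $\mathbf{x}$ exactly on $T$;
\item $\mathbf{z}'$ ($\mathbf{w}$ on $T$, $\mathbf{y}$ elsewhere) agrees with $\mathbf{x}$ exactly on $S\setminus T$.
\end{itemize}
Multiply on the left by $p_{\mathbf{x}a}$ and sum over $a$. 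The $\mathbf{w}$-term vanishes because $\mathbf{w}$ is adjacent to $\mathbf{x}$, and (i) then gives $\Pi_S=\Pi_T+\Pi_{S\setminus T}$. The paper instead starts from a pair agreeing exactly on $T$ and swaps on $S\setminus T$, which amounts to the same computation.
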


\begin{proof}

\begin{enumerate}[(i)]
	\item Let $\tilde{\mathbf{x}},\tilde{\mathbf{y}}\in K_n^{ k}$ such that $\tilde{x}_i=\tilde{y}_i\iff i\in S$ Define $\mathbf{x}',\mathbf{y}'\in K_n^{ k}$ such that $x_i'=y_i'\neq x_i,\tilde{x}_i$ for all $i\in S$, and $x_i'=y_i$ and $y_i'=x_i$ for all $i\notin S$. Then,
	\begin{align*}
		\Pi_S&=\sum_{a\in K_n}p_{\mathbf{x}a}p_{\mathbf{y}a}=\sum_{a\in K_n}p_{\mathbf{x}a}(p_{\mbf{x}a}+p_{\mbf{x}'a}-p_{\mbf{y}'a})=1-\sum_{a\in K_n}p_{\mathbf{x}a}p_{\mathbf{y}'a}
	\end{align*}
	Next, let $\mathbf{x}'',\mathbf{y}''\in K_n^{ k}$ such that $x_i''=y_i'$ and $y_i''=x_i$ for all $i\in S$, and $x_i''=y_i''\neq x_i,\tilde{x}_i$ for all $i\notin S$. Then,
	\begin{align*}
		\Pi_S&=1-\sum_{a\in K_n}(p_{\mathbf{y}'a}+p_{\mathbf{y}''a}-p_{\mathbf{x}''a})p_{\mathbf{y}'a}=\sum_{a\in K_n}p_{\mathbf{x}''a}p_{\mathbf{y}'a}.
	\end{align*}
	Then, let $\mathbf{x}''',\mathbf{y}'''\in K_n^{ k}$ such that $x_i'''=y_i'''\neq y_i,\tilde{y}_i$ for all $i\in S$, and $x_i'''=y_i'$ and $y_i'''=x_i''$ for all $i\notin S$. Then,
	\begin{align*}
		\Pi_S&=\sum_{a\in K_n}p_{\mathbf{x}''a}(p_{\mbf{x}''a}+p_{\mbf{x}'''a}-p_{\mbf{y}'''a})=1-\sum_{a\in K_n}p_{\mathbf{x}''a}p_{\mathbf{y}'''a}.
	\end{align*}
	By construction, $\mathbf{x}''$ is arbitrary such that $x_i''\neq x_i,\tilde{x}_i$ for all $i$, and $\mathbf{y}'''$ is arbitrary such that $y_i'''\neq y_i,\tilde{y}_i$ for all $i$. As such, we can reverse the process to get
	\begin{align*}
		\Pi_S&=1-\sum_{a\in K_n}p_{\mathbf{x}''a}p_{\mathbf{y}'''a}=\sum_{a\in K_n}p_{\tilde{\mathbf{x}}a}p_{\tilde{\mathbf{y}}a},
	\end{align*}
	giving the wanted independence.
	
	\item As $p_{\mathbf{x}a}$ and $p_{\mathbf{y}a}$ commute by \cref{lem:lotsa-commutation}, $p_{\mathbf{x}a}p_{\mathbf{y}a}$ is a projection. Further, if $a\neq a'$,
	$$(p_{\mathbf{x}a}p_{\mathbf{y}a})(p_{\mathbf{x}a'}p_{\mathbf{y}a'})=p_{\mathbf{y}a}p_{\mathbf{x}a}p_{\mathbf{x}a'}p_{\mathbf{y}a'}=0,$$
	so $\Pi_S$ is a sum of orthogonal projections, and hence it is a projection.
	
	\item We have that $p_{\mathbf{x}a}\Pi_S=p_{\mathbf{x}a}p_{\mathbf{y}a}=p_{\mathbf{y}a}p_{\mathbf{x}a}=\Pi_Sp_{\mathbf{x}a}$.
	
	\item Let $\mathbf{x},\mathbf{x}'\in K_n^{ k}$ such that $x_i=x_i'$ for all $i\in S$. Then, let $\mathbf{y}\in K_n^{ k}$ such that $y_i=x_i$ for all $i\in S$ and $y_i\neq x_i,x_i'$ for all $i\notin S$. Then,
	$$\Pi_Sp_{\mathbf{x}a}=p_{\mathbf{x}a}p_{\mathbf{y}a}=\Pi_Sp_{\mathbf{y}a}=\Pi_Sp_{\mathbf{x}'a}.$$
	
	\item Let $\mathbf{x},\mathbf{y}\in K_n^{ k}$ such that $x_i=y_i\iff i\in T$. Define $\mathbf{y}',\mathbf{z},\mathbf{z}'\in K_n^{ k}$ such that $y'_i=z'_i\neq x_i$ and $z_i=x_i$ for $i\in T$, $y'_i=z_i=x_i$ and $z'_i=y_i$ for $i\in S\backslash T$, and $y_i'=z'_i\neq x_i,y_i$ and $z_i=y_i$ for $i\notin S$. By construction, \cref{lem:the-swapperrr} implies that $p_{\mathbf{y}a}+p_{\mathbf{y}'a}=p_{\mathbf{z}a}+p_{\mathbf{z}'a}$, and
	\begin{align*}
		\Pi_T=\sum_{a\in K_n}p_{\mathbf{x}a}p_{\mathbf{y}a}=\sum_{a\in K_n}p_{\mathbf{x}a}(p_{\mathbf{z}a}+p_{\mathbf{z}'a}-p_{\mathbf{y}'a})=\Pi_S-\Pi_{S\backslash T}.
	\end{align*}
\end{enumerate}

\end{proof}

\begin{proof}[Proof of \cref{thm:sn-decomposition}]
	Let $\Pi_i=\Pi_{\{i\}}$, as defined in \cref{lem:cool-projections}. Via \cref{lem:cool-projections} (ii) and (iii) these are central projections; and via \cref{lem:cool-projections} (v), $1=\Pi_{[k]}=\Pi_{1}+\Pi_{[k]\backslash\{1\}}=\ldots=\Pi_{1}+\ldots+\Pi_k$, so we get a PVM as wanted. Now, fix $i\in[k]$, and let $\varphi:\End^+(K_n)\rightarrow \Pi_i\Mor^+(K_n^{ k},K_n)$ be the $\ast$-homomorphism defined on the generators $\varphi(p_{xa})=\Pi_ip_{\mathbf{x}a}$ for $\mathbf{x}$ with $x_i=x$. First, due to \cref{lem:cool-projections} (iv), this is well-defined on the generators. To see that it extends to a $\ast$-homomorphism, it suffices to see that it satisfies the relations. First, since $\Pi_i$ is a central projection, $\Pi_ip_{\mathbf{x}a}$ is a projection. Next,
	\begin{align*}
		\sum_a\varphi(p_{xa})=\sum_a\Pi_ip_{\mathbf{x}a}=\Pi_i=\varphi(1).
	\end{align*}
	Finally, for all $x\neq y$, there exists some $\mathbf{x},\mathbf{y}$ such that $x_i=x$, $y_i=y$, and $x_j\neq y_j$ for all $j$. Then
	\begin{align*}
		\varphi(p_{xa})\varphi(p_{ya})=\Pi_ip_{\mbf{x}a}p_{\mbf{y}a}=0.
	\end{align*}
	So $\varphi$ is a $\ast$-homomorphism.
	
	For the other direction, let $\psi:\Mor^+(K_n^{ k},K_n)\rightarrow\End^+(K_n)$ be the $\ast$-homomorphism defined on the generators as $\psi(p_{\mathbf{x}a})=p_{x_ia}$. To see that it is in fact a $\ast$-homomorphism, we verify the relations: $p_{x_ia}$ is a projection;
	$$\sum_{a}\psi(p_{\mathbf{x}a})=\sum_ap_{x_ia}=1;$$
	and if $x_j\neq y_j$ for all $j$, then
	$$\psi(p_{\mathbf{x}a})\psi(p_{\mathbf{y}a})=p_{x_ia}p_{y_ia}=0.$$
	Also, since $\psi(\Pi_i)=\sum_ap_{x_ia}p_{x_ia}=1$, $\psi$ restricts to a $\ast$-homomorphism $\bar{\psi}:\Pi_i\Mor^+(K_n^{ k},K_n)\rightarrow\End^+(K_n)$. It is clear that $\bar{\psi}=\varphi^{-1}$, giving the wanted isomorphism.
\end{proof}

\begin{definition}
	Let $f\in\mor(A,B)$. Define the character $\hat{f}:\Mor^+(A,B)\rightarrow\C$ on the generators $\hat{f}(p_{ab})=\delta_{f(a),b}$.
\end{definition}

The map $f\mapsto\hat{f}$ induces the isomorphism $\mc{R}(\sigma)\cong\mc{R}^d(\sigma)$, so it is well-defined~\cite[Lemma 3.7]{CDdBVZ25}.

\begin{corollary}
	Let $\pi$ be an irreducible representation of $\Mor^+(K_n^{ k},K_n)$. Then, there exists $i\in[k]$ and a representation $\pi'$ of $\Aut(K_n)$ such that $\pi=\pi'\circ\hat{\pi}_i.$
\end{corollary}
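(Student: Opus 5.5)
Interpreting the notation: $\pi_i\in\mor(K_n^k,K_n)$ is the $i$-th coordinate projection, so $\hat{\pi}_i:\Mor^+(K_n^k,K_n)\rightarrow\C$ is its character. The composition $\pi'\circ\hat{\pi}_i$ is the composition of quantum morphisms from \cref{sec:prelims}, i.e.\ $(\hat{\pi}_i\otimes\pi')\Delta^{K_n}_{K_n^k,K_n}$. Here $\pi'$ is a representation of $\Mor^+(K_n,K_n)=\End^+(K_n)$, which by \cref{thm:sn-decomposition} is $C(S_n^+)$; I read ``representation of $\Aut(K_n)$'' as a representation of this quantum automorphism algebra. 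Concretely the claim is that
\begin{align*}
\pi(p_{\mathbf{x}a})=\sum_{b\in K_n}\hat{\pi}_i(p_{\mathbf{x}b})\,\pi'(p_{ba})=\pi'(p_{x_ia})
\end{align*}
for all $\mathbf{x}\in K_n^k$ and $a\in K_n$.

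The plan uses the central PVM $\{\Pi_j\}_{j=1}^k$ of \cref{thm:sn-decomposition}. Since $\pi$ is irreducible, Schur's lemma makes the commutant of $\pi(\Mor^+(K_n^k,K_n))$ trivial. Each $\pi(\Pi_j)$ is a central projection, so it lies in this commutant and is therefore $0$ or $1$. The $\pi(\Pi_j)$ sum to $1$, so exactly one index $i$ has $\pi(\Pi_i)=1$, and $\pi(\Pi_j)=0$ for $j\neq i$. Consequently $\pi$ kills $(1-\Pi_i)\Mor^+(K_n^k,K_n)$ and factors through the corner $\Pi_i\Mor^+(K_n^k,K_n)$.

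Let $\bar\pi$ be the restriction of $\pi$ to this corner, and set $\pi'=\bar\pi\circ\varphi$. Here $\varphi:\End^+(K_n)\rightarrow\Pi_i\Mor^+(K_n^k,K_n)$ is the isomorphism from the proof of \cref{thm:sn-decomposition}, with $\varphi(p_{xa})=\Pi_ip_{\mathbf{x}a}$ for any $\mathbf{x}$ with $x_i=x$. Then
\begin{align*}
\pi(p_{\mathbf{x}a})=\pi(\Pi_i)\pi(p_{\mathbf{x}a})=\pi(\Pi_ip_{\mathbf{x}a})=\pi'(p_{x_ia}),
\end{align*}
which is exactly the composition formula above. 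The map $\pi'$ is irreducible because it has the same image as $\pi$.

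The main obstacle is only the bookkeeping: matching the order and tensor legs of the composition convention, $\varphi\circ\pi=(\pi\otimes\varphi)\Delta$, with $\hat\pi_i$ acting first. Since $\hat\pi_i$ is one-dimensional, the tensor product collapses and the formula reduces to the display. In the $C^\ast$-algebraic (not von Neumann) setting, Schur's lemma still applies to irreducible representations on Hilbert space, so that step is fine.
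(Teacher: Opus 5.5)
Your proposal is correct and is essentially the paper's argument: irreducibility forces exactly one central projection $\Pi_i$ to act as the identity (the paper argues via reducibility onto the supports of the $\pi(\Pi_j)$, you via Schur's lemma), then one sets $\pi'=\pi\circ\varphi$ and the composition formula collapses using $\hat{\pi}_i(p_{\mathbf{x}b})=\delta_{x_i,b}$. Your reading of ``representation of $\Aut(K_n)$'' as a representation of $\Mor^+(K_n,K_n)$ also matches what the paper's proof actually constructs.
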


\begin{proof}
	Since $\pi$ is irreducible, there exists a unique $i\in[k]$ such that $\pi(\Pi_i)\neq 0$ --- else $\pi$ is reducible into representations on the $\supp\pi(\Pi_i)$. Now, let $\pi'=\pi\varphi$. To finish, we have that
	\begin{align*}
		(\pi'\circ\hat{\pi}_i)(p_{\mathbf{x}a})&=(\hat{\pi}_i\otimes\pi')\Delta_{K_n^{ k},K_n}^{K_n}(p_{\mathbf{x}a})=\sum_{b\in K_n}\hat{\pi}_i(p_{\mathbf{x}b})\otimes\pi'(p_{ba})=\sum_{b\in K_n}\delta_{x_i,b}\pi\varphi(p_{ba})\\
		&=\pi\varphi(p_{x_ia})=\pi(\Pi_ip_{\mathbf{x}a})=\pi(p_{\mathbf{x}a}).\qedhere
	\end{align*}
\end{proof}

\section{Separating classes of commutativity gadgets}
\label{sec:separations}

In this section, we show the following result, which relies on the full characterisation of the quantum polymorphisms of the complete graph studied in the previous section.

\begin{theorem}\label{thm:separations}
    There exists a relational structure that admits a $q$-commutativity gadget but no $qa$-commutativity gadget.

    If there exists a non-hyperlinear group, then there exists a relational structure that admits a $qa$-commutativity gadget but no $qc$-commutativity gadget.
\end{theorem}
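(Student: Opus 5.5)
We plan to build, from a finitely presented group $\Gamma$, a binary relational structure $A_\Gamma$ whose quantum polymorphisms are governed by $\Gamma$ in the same way that those of $K_n$ are governed by $S_n^+$ (\cref{thm:sn-decomposition}). By \cref{thm:basic-char}, a $Q$-commutativity gadget exists iff every $Q$-representation of $\Mor^+(A_\Gamma^{k},A_\Gamma)$ is commutative. So it suffices to arrange two things. First, the $Q$-representations of these polymorphism algebras should be commutative iff $\Gamma$ has only abelian-image unitary representations into algebras of type $Q$. Second, we should pick $\Gamma$ so that this property holds for $Q=q$ but fails for $Q=qa$, or holds for $Q=qa$ but fails for $Q=qc$.

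For the first separation, take a linear system (via a solution group, as in \cite{Slo19,Slo20}) whose solution group $\Gamma$ has every finite-dimensional representation with commuting images of the generators, while some element $[x_a,x_b]\neq 1$ survives in an approximate (Connes-embeddable) representation. A Slofstra-type group that is not residually finite but is hyperlinear, embedded into a solution group, should work. For the second separation, take a non-hyperlinear group $H$ and use amalgamated products (see the preliminaries) to embed it into a solution group $\Gamma$. Arrange that every Connes-embeddable representation kills $H$'s noncommutative part and forces commutation, while some tracial (qc) representation, e.g.\ the left regular one, does not.

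To encode $\Gamma$, I would combine $K_n$ with constraints, as in the linear relational structures with their relations $L_{b,n}$, whose quantum homomorphism algebra is $C^\ast\Gamma$. Concretely, $A_\Gamma$ would contain a copy of $K_n$ (for $n\geq 3$) for rigidity together with gadget relations, so that $\mor^+(A_\Gamma^k,A_\Gamma)$ decomposes along a central PVM $\{\Pi_i\}$. This decomposition would mirror the proof of \cref{lem:cool-projections}, using \cref{lem:sn-diagonals,lem:the-swapperrr}. Each block would then be identified with a quantum endomorphism algebra $\End^+(A_\Gamma)$, which in turn is built from $S_n^+$ and $C^\ast\Gamma$.

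The key step is to show that $\End^+(A_\Gamma)$ is commutative in model $Q$ iff $\Gamma$ is abelian in model $Q$. One direction uses the fact that a noncommuting representation of $\Gamma$ yields a noncommuting quantum endomorphism. For the other direction we must rule out the noncommutativity that $S_n^+$ itself contributes, since $S_n^+$ is noncommutative for $n\geq4$. To handle this, I would add unary or rigidifying relations, for instance singleton constants pinning each vertex of $K_n$. These force the $K_n$ part of any quantum polymorphism block to be classical, so that only $\Gamma$ contributes. I expect this rigidification, while keeping the $\Pi_i$ decomposition valid, to be the main obstacle.

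Once this is done, the conclusion follows. For the first separation, $\Gamma$ gives commutative finite-dimensional representations, so \cref{thm:basic-char} with $Q=q$ produces a gadget. At the same time $\Gamma$ has a noncommutative representation into $\mathcal{R}^\omega$, so there is no $qa$-gadget. The conditional statement is identical with $(q,qa)$ replaced by $(qa,qc)$.
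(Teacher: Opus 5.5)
Your reduction via \cref{thm:basic-char} and the plan to route through the $K_n$ block decomposition and a Slofstra-type group match the paper's strategy. However, both substantive steps are left open, and the rigidification you propose for the first one would fail.

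The decomposition of \cref{thm:sn-decomposition} transfers to a structure only when the all-distinct edge relation lives on the \emph{whole} domain, i.e.\ for the completion $A_K$ (\cref{cor:cheating}). The proofs of \cref{lem:sn-diagonals,lem:the-swapperrr} use that the target has exactly $n$ vertices. In that setting each block is $\Mor^+(A_K)$. Adding singleton unary relations at every vertex forces $p_{xx}=1$, so $\Mor^+(A_K)=\C$, all contribution of $\Gamma$ disappears, and the structure even has a $C^\ast$-gadget. If instead $K_n$ is only part of $A_\Gamma$, you have no decomposition at all.

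The missing idea is that the $S_n^+$ noncommutativity should not be removed but cut down to exactly $\Gamma$. The paper uses the coloured bipartite encoding $G(A,\mathrm{LIN})$, whose unary relations are colour \emph{classes} ($\{(a,0),(a,1)\}$ and $\{(R,\mathbf{a},\mathbf{b})\}_{\mathbf{b}}$), not singletons. It then proves $\Mor^+(G(A,\mathrm{LIN})_K)\cong\Mor^{o+}(A_H,\mathrm{LIN})\cong C^\ast\Gamma_H(A)$ (\cref{thm:graph-to-lin}). Some such transfer is unavoidable: the algebra naturally attached to a linear system is the \emph{oracular} homomorphism algebra into $\mathrm{LIN}$, not the quantum endomorphism algebra of a single structure.

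The group-theoretic input is also missing. Slofstra's theorem (\cref{thm:william}) only gives $J\mapsto 1$ in every finite-dimensional (resp.\ Connes-embeddable) representation; it does not make those representations abelian. Failure of residual finiteness does not either: a direct product of such a group with $S_3$ is still hyperlinear and not residually finite, yet has non-abelian finite-dimensional representations. Embedding into a larger solution group can likewise create new non-abelian finite-dimensional representations.

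The paper closes this with an amplification step (\cref{lem:josse,lem:h-triv,lem:gamma-triv}). It embeds $\Gamma(A_0)$ into a homogeneous $\Gamma_H(B)$ with $J\mapsto x_j$ and $\Gamma_H(B)/\langle\!\langle x_j\rangle\!\rangle$ trivial. This is done by amalgamating over finite subgroups with copies of the perfect triple cover of $A_7$ and the magic-square group, so hyperlinearity is preserved. Then any finite-dimensional representation of $\Gamma_H(B)$ restricts to one of $\Gamma(A_0)$, kills $x_j$, and is therefore trivial. Meanwhile $\Gamma_H(B)$ is noncommutative and hyperlinear, which gives a noncommutative $qa$-representation of $\Mor^+(G^k,G)$. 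Your second separation needs exactly the same amplification, with Connes-embeddable in place of finite-dimensional, rather than an unspecified arrangement that such representations ``kill $H$''.
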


To construct these, we need some preliminary results.

\begin{definition}
    Let $A$ be a relational structure. We define the \emph{completion} of $A$ to be the relational structure $A_K$ with domain $\dom(A_K)=\dom(A)$ and relations $\rel(A_K)=\rel(A)\cup\{E\}$, where $E=\set*{(a,b)\in A^2}{a\neq b}$ is the (arity-$2$) edge relation of the complete graph.
\end{definition}

\begin{lemma}\label{cor:cheating}
	Let $A$ be a relational structure with $|A|\geq 3$. Then, the quantum polymorphism space of the completion is $\mor^+(A_K^{k}, A_K)\cong\mor^+(A_K)^{\sqcup k}$.
\end{lemma}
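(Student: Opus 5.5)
The plan is to reuse the machinery of \cref{thm:sn-decomposition}, observing that $\Mor^+(A_K^k,A_K)$ is a quotient of $\Mor^+(K_n^k,K_n)$, where $n=|A|\geq 3$. The generators of both algebras are the same, indexed by $\mathbf{x}\in A^k$ and $a\in A$. The relations of $\Mor^+(A_K^k,A_K)$ contain all relations coming from the edge relation $E$, and these are exactly the defining relations of $\Mor^+(K_n^k,K_n)$. Hence all the identities established in \cref{lem:sn-diagonals,lem:the-swapperrr,lem:lotsa-commutation,lem:cool-projections} continue to hold in $\Mor^+(A_K^k,A_K)$. In particular, we obtain a central PVM $\{\Pi_i\}_{i=1}^k$ such that $\Pi_ip_{\mathbf{x}a}$ depends only on $x_i$.

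Next, I fix $i$ and construct mutually inverse $\ast$-homomorphisms
\[
\varphi:\Mor^+(A_K,A_K)\rightarrow\Pi_i\Mor^+(A_K^k,A_K),\qquad \varphi(p_{xa})=\Pi_ip_{\mathbf{x}a}\ \ (x_i=x),
\]
and $\bar\psi$ induced by $\psi(p_{\mathbf{x}a})=p_{x_ia}$. The checks for projections, sums to one, and $E$-relations go through exactly as in the proof of \cref{thm:sn-decomposition}.

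The new work is verifying the relations for $R\in\rel(A)$. For $\psi$ this is easy. If $\mathbbf{x}\in R^{A^k}$ and $\mathbf{c}\notin R^A$, then the $i$-th coordinate tuple $(x_{1i},\ldots,x_{\ar(R)i})$ lies in $R^A$. So $\psi(p_{\mathbf{x}_1c_1}\cdots p_{\mathbf{x}_{\ar(R)}c_{\ar(R)}})=p_{x_{1i}c_1}\cdots p_{x_{\ar(R)i}c_{\ar(R)}}=0$ in $\Mor^+(A_K,A_K)$.

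For $\varphi$, take $\mathbf{z}\in R^A$ and $\mathbf{c}\notin R^A$. I need
\[
\Pi_ip_{\mathbf{x}_1c_1}\cdots p_{\mathbf{x}_{\ar(R)}c_{\ar(R)}}=0
\]
for suitable lifts $\mathbf{x}_j$ with $(\mathbf{x}_j)_i=z_j$. By \cref{lem:cool-projections}(iv), each factor $\Pi_ip_{\mathbf{x}_ja}$ is independent of the coordinates other than $i$. Using centrality, $\Pi_i\prod_j p_{\mathbf{x}_jc_j}=\prod_j \Pi_ip_{\mathbf{x}_jc_j}$, so I may choose the lifts freely. The plan is to choose them so that $(\mathbf{x}_1,\ldots,\mathbf{x}_{\ar(R)})\in R^{A^k}$. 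This requires each coordinate tuple to lie in $R^A$, and the natural choice is to put the tuple $\mathbf{z}$ itself in every coordinate, i.e.\ $\mathbf{x}_j=(z_j,\ldots,z_j)$. Since $\mathbf{z}\in R^A$, this is allowed, and the defining relation of $\Mor^+(A_K^k,A_K)$ then kills the product.

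I expect this lift-choice step to be the only real obstacle. Its resolution is exactly the freedom given by \cref{lem:cool-projections}(iv), so the diagonal lift should suffice. Once both maps are $\ast$-homomorphisms, they are inverse on generators: $\bar\psi(\Pi_i)=1$, and $\varphi\bar\psi(\Pi_ip_{\mathbf{x}a})=\Pi_ip_{\mathbf{x}a}$. This gives $\Pi_i\Mor^+(A_K^k,A_K)\cong\Mor^+(A_K,A_K)$ for each $i$, and hence $\mor^+(A_K^k,A_K)\cong\mor^+(A_K)^{\sqcup k}$.
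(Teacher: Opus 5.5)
Your proposal is correct and follows essentially the same route as the paper. The paper also views both algebras as quotients of the $K_n$-algebras and transports the isomorphisms of \cref{thm:sn-decomposition} through the quotient maps. It checks the extra $A$-relations coordinatewise in one direction, and in the other uses exactly your diagonal lift $\mathbf{x}_j=(z_j,\ldots,z_j)$ together with centrality of $\Pi_i$. The only cosmetic difference is that the paper handles all $k$ summands at once via $p_{\mathbf{x}a}\mapsto p_{x_1a}\oplus\cdots\oplus p_{x_ka}$, rather than treating each corner $\Pi_i\Mor^+(A_K^k,A_K)$ separately.
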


\begin{proof}
	Let $n=|A|$. We can see $\Mor^+(A_K^{k}, A_K)$ as the quotient of $\Mor^+(K_n^{k},K_n)$ by the additional relations coming from $A$; and similarly for $\Mor^+(A_K)$ with regards to $\Mor^+(K_n)$. Recall the isomorphisms from \cref{thm:sn-decomposition}: $\Psi:\Mor^+(K_n^{k},K_n)\rightarrow\Mor^+(K_n)^{\oplus k}$ and  $\Phi:\Mor^+(K_n)^{\oplus k}\rightarrow \Mor^+(K_n^{k},K_n)$ defined on the generators as $\Psi(p_{\mathbf{x}a})=p_{x_1 a}\oplus\ldots\oplus p_{x_ka}$ and $\Phi((p_{xa})_i)=\Pi_ip_{\mathbf{x}a}$ where $(p_{xa})_i=0\oplus \ldots 0\oplus p_{xa}\oplus 0\oplus\ldots\oplus 0$, for $p_{xa}$ in the $i$-th component, and $\mathbf{x}$ is such that $x_i=x$. Also, let $\iota_1:\Mor^+(K_n^{k},K_n)\rightarrow\Mor^+(A_K^{ k},A_K)$ and $\iota_2:\Mor^+(K_n)^{\oplus k}\rightarrow\Mor^+(A_K)^{\oplus k}$ be the quotient maps. First, we show that $\iota_2\Psi$ satisfies the relations of $\Mor^+(A_K^{ k},A_K)$. In fact, let $R\in\rel(A)$, $\mathbf{x}_1,\ldots,\mathbf{x}_{k}\in R$, and $\mathbf{a}\notin R$. Then,
	\begin{align*}
		&\iota_2\Psi(p_{(x_{11},\ldots,x_{k1})a_1})\cdots\iota_2\Psi(p_{(x_{1\ar(R)},\ldots,x_{k\ar(R)})a_{\ar(R)}})\\
		&=(p_{x_{11}a_1}\oplus\ldots\oplus p_{x_{k1}a_1})\cdots(p_{x_{1\ar(R)}a_{\ar(R)}}\oplus\ldots\oplus p_{x_{k\ar(R)}a_{\ar(R)}})\\
		&=(p_{x_{11}a_1}\cdots p_{x_{1\ar(R)}a_{\ar(R)}})\oplus\ldots\oplus(p_{x_{k1}a_1}\cdots p_{x_{k\ar(R)}a_{\ar(R)}})=0,
	\end{align*}
	as $p_{x_{i1}a_1}\cdots p_{x_{i\ar(R)}a_{\ar(R)}}=0$ in $\Mor^+(A_K)$. Therefore, $\Psi$ induces a $\ast$-homomorphism $\bar{\Psi}:\Mor^+(A_K^{ k},A')\rightarrow\Mor^+(A_K)^{\otimes k}$.
	
	For the other direction, we want to show that $\iota_1\Phi$ respects the relations of $\Mor^+(A_K)^{\oplus k}$. In fact, let $i\in[k]$, $R\in\rel(A)$, $\mathbf{x}\in R$, and $\mathbf{a}\notin R$. Then, letting $x_{ij}=x_i$ for all $i\in[k]$,
	\begin{align*}
		\iota_1\Phi((p_{x_1a_1})_i)\cdots\iota_1\Phi((p_{x_{\ar(R)}a_{\ar(R)}})_i)&=\Pi_ip_{\mathbf{x}_1a_1}\cdots\Pi_ip_{\mathbf{x}_{\ar(R)}a_{\ar(R)}}\\
		&=\Pi_i(p_{\mathbf{x}_1a_1}\cdots p_{\mathbf{x}_{\ar(R)}a_{\ar(R)}})=0.
	\end{align*}
	Therefore, $\Phi$ induces a $\ast$-homomorphism $\bar{\Phi}:\Mor^+(A_K)^{\oplus k}\rightarrow\Mor^+(A_K^{ k},A_K)$. To finish, note that $\bar{\Phi}=\bar{\Psi}^{-1}$, which follows since $\Phi=\Psi^{-1}$.
\end{proof}

Next, we consider a graph encoding the structure of a pair of relational structures from~\cite{DdBRS25}.

\begin{definition}
    Let $A$ and $B$ be relational structures over $\sigma$. $G(A,B)$ is the bipartite coloured graph with vertices
    $$V(G(A,B))=\set{(a,b)}{a\in A,\,b\in B}\cup\set{(R,\mbf{a},\mbf{b})}{R\in\sigma,\,\mbf{a}\in R^A,\,\mbf{b}\in R^B:\;a_i=a_j\Rightarrow b_i=b_j},$$
    edges $(a,b)\sim_{G(A,B)}(R,\mbf{a},\mbf{b})$ iff there exists there exists $i\in[\ar(R)]$ such that $a_i=a$ and $b_i=b$, and unary colour relations $C_a=\set{(a,b)}{b\in B}$ for all $a\in A$ and $C_{(R,\mbf{a})}=\set{(R,\mbf{a},\mbf{b})}{\mbf{b}\in R^B}$ for all $R\in\sigma$ and $\mbf{a}\in R^A$.
\end{definition}

The for linear constraint systems, the completion of the graph encoding has quantum transformations given by the solution group.

\begin{theorem}\label{thm:graph-to-lin}
     Let $A$ be a linear relational structure. Then $\mor^+(G(A,\mathrm{LIN})_K)\cong\mor^{o+}(A_H,\mathrm{LIN})$.
\end{theorem}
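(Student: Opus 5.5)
The plan is to make both algebras explicit and write down mutually inverse $\ast$-homomorphisms on generators. Write $G=G(A,\mathrm{LIN})$ and $\mathcal{G}=\Mor^+(G_K)=\Mor^+(G_K,G_K)$. On the other side, $\Mor^{o+}(A_H,\mathrm{LIN})\cong C^\ast\Gamma_H(A)$ via $p_{ab}\mapsto\frac12(1+(-1)^bx_a)$.

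\textbf{Step 1: reduce $\mathcal{G}$ to explicit generators.} Because of the edge relation $E$ of the completion, $p_{uv}p_{u'v}=0$ for $u\neq u'$. Together with $\sum_v p_{uv}=1$ and finiteness, this makes $(p_{uv})$ a magic unitary. So $\mathcal{G}$ is the algebra of the quantum automorphism group of the coloured bipartite graph $G$.

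The unary colour relations force $p_{uv}=0$ whenever $u,v$ have different colours, so the magic unitary is block diagonal. Each variable block $C_a=\{(a,0),(a,1)\}$ is a $2\times2$ magic unitary. It is therefore determined by the single projection $q_a=p_{(a,0)(a,1)}=p_{(a,1)(a,0)}$, and I set $x_a=1-2q_a$. For the constraint blocks $C_{(R,\mathbf a)}$, with $R=\LR{b,n}$, every vertex in the block has the form $(R,\mathbf a,\mathbf b)$, where $\mathbf b$ has parity $b$ and $\mathbf b$ is consistent on repeated variables.

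\textbf{Step 2: express constraint-block entries through the $q_a$.} Fix $u=(R,\mathbf a,\mathbf b)$ and $v=(R,\mathbf a,\mathbf b')$. The only neighbour of $v$ of colour $a_i$ is $(a_i,b'_i)$. The edge relations give $p_{uv}p_{(a_i,b_i)w}=0$ for every $w\neq(a_i,b'_i)$ of colour $a_i$. Summing over $w$ gives
\[
p_{uv}=p_{uv}\,P_i^{b_i\oplus b'_i},\qquad\text{where } P_i^1=q_{a_i},\ P_i^0=1-q_{a_i}.
\]
Hence $p_{uv}\le P_i^{b_i\oplus b_i'}$ for every $i$.

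Now fix $i$ and $c\in\Z_2$. The projection $\sum_{v:\,b'_i\oplus b_i=c}p_{uv}$ lies below $P_i^c$, and these two sums over $c$ add to $1=P_i^0+P_i^1$. It follows that
\[
P_i^c=\sum_{v:\,b_i\oplus b'_i=c}p_{uv}.
\]
The right-hand side is a sum from one orthogonal family $\{p_{uv}\}_v$. Therefore all $q_{a_i}$ with $a_i$ in a common constraint commute, and $p_{uv}=\prod_i P_i^{b_i\oplus b'_i}$.

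The parity relation comes out of the same formula. We have $x_{a_1}\cdots x_{a_n}=\sum_v(-1)^{\sum_i(b_i\oplus b'_i)}p_{uv}$. Since $\mathbf b$ and $\mathbf b'$ both have parity $b$, every sign is $+1$, so $x_{a_1}\cdots x_{a_n}=1$. This is exactly the relation of $\Gamma_H(A)$ (the $J$-relations collapse, matching the homogenisation $A_H$). Hence $x_a\mapsto 1-2q_a$ defines a $\ast$-homomorphism $C^\ast\Gamma_H(A)\to\mathcal{G}$. It is surjective because the $q_a$ generate $\mathcal{G}$ by the product formula.

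\textbf{Step 3: the inverse map.} Given $C^\ast\Gamma_H(A)$, I define $q_a=\frac12(1-x_a)$ on the variable blocks and $p_{uv}=\prod_iP_i^{b_i\oplus b'_i}$ on the constraint blocks. This is well defined because the $x_{a_i}$ in a constraint commute, and repeated variables give consistent factors thanks to the condition $a_i=a_j\Rightarrow b_i=b_j$. I then verify the relations of $\mathcal{G}$:
\begin{enumerate}[(a)]
\item The constraint blocks are magic unitaries. Summing the product over all $\mathbf b'$ without the parity restriction gives $1$. The terms with wrong parity vanish, because the product relation makes $\prod_i x_{a_i}=1$, and the wrong-parity products are its $-1$ eigenprojection.
\item The edge and colour relations follow from $p_{uv}\le P_i^{b_i\oplus b'_i}$.
\item The $E$-relations are automatic for block-diagonal magic unitaries.
\end{enumerate}
The two maps are inverse on generators, which gives the isomorphism.

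The main obstacle I expect is Step 2: deriving the equality $P_i^c=\sum p_{uv}$, and hence the commutation, purely from the edge relations. This needs care when a constraint repeats variables, and when $R^A$ contains odd-parity constraints, where the vertices are odd assignments but the induced relation must still be the homogeneous one.
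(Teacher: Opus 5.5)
Your proposal is correct and is essentially the paper's proof. You use the same block-diagonal reduction, the same product formula $p_{uv}=\prod_i P_i^{b_i\oplus b_i'}$, and the same pair of mutually inverse $\ast$-homomorphisms (up to the identification $\Mor^{o+}(A_H,\mathrm{LIN})\cong C^\ast\Gamma_H(A)$); the only real difference is that you get commutation, the product formula and the homogeneous parity relation at once from $P_i^c=\sum_{v:\,b_i\oplus b_i'=c}p_{uv}$, whereas the paper gets commutation from $[P_i,P_j]\,p_{uv}=0$ summed over $v$ and refers to a vertex $(R,\mathbf{a},\mathbf{0})$ that only exists for even constraints.

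One caveat applies equally to the paper: Step~2 needs the block $C_{(R,\mathbf{a})}$ to be nonempty. This fails when $R=\LR{1,n}$ and every variable occurs in $\mathbf{a}$ an even number of times. For example, take $\mathbf{a}=(a,b,a,b)$ with $a,b$ occurring nowhere else: then $\Mor^+(G(A,\mathrm{LIN})_K)\cong C^\ast(\Z_2*\Z_2)$, while $\Mor^{o+}(A_H,\mathrm{LIN})\cong C^\ast(\Z_2\times\Z_2)$. So the statement tacitly needs such constraints excluded, which is automatic for homogeneous $A$, the only case used later.
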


\begin{proof}
    First, note that due to the unary relations on $G(A,\mathrm{LIN})$, in $\Mor^+(G(A,B)_K)$, $p_{(a,b)(R,\mbf{a},\mbf{b})}=p_{(R,\mbf{a},\mbf{b})(a,b)}=p_{(a,b)(a',b')}=p_{(R,\mbf{a},\mbf{b})(R',\mbf{a}',\mbf{b}')}=0$ if $a\neq a'$ and $(R,\mbf{a})\neq(R',\mbf{a}')$. Then, define $p_{(a,b)b'}:=p_{(a,b)(a,b')}$ and $p_{(R,\mbf{a},\mbf{b})\mbf{b}'}:=p_{(R,\mbf{a},\mbf{b})(R,\mbf{a},\mbf{b}')}$. These generate $\Mor^+(G(A,\mathrm{LIN})_K)$ subject to the relations $\sum_{b'\in \Z_2}p_{(a,b)b'}=\sum_{b\in \Z_2}p_{(a,b)b'}=\sum_{\mbf{b}'\in R^{\mathrm{LIN}}}p_{(R,\mbf{a},\mbf{b})\mbf{b}'}=\sum_{\mbf{b}\in R^{\mathrm{LIN}}}p_{(R,\mbf{a},\mbf{b})\mbf{b}'}=1$, and $p_{(a,b)b'}p_{(R,\mbf{a},\mbf{b})\mbf{b}'}=0$ if there exists $i\in[\ar(R)]$ such that $a_i=a$ and for all such $i$, $b_i=b$ but $b_i'\neq b'$. First, note that the first two relations are $p_{(a,0)0}+p_{(a,0)1}=p_{(a,1)0}+p_{(a,1)1}=p_{(a,0)0}+p_{(a,1)0}=p_{(a,0)1}+p_{(a,1)1}=1$. This implies that $p_{(a,0)0}=p_{(a,1)1}$ and $p_{(a,0)1}=p_{(a,1)0}$. Next, we have that 
    \begin{align*}
        p_{(R,\mbf{a},\mbf{b})\mbf{b}'}=p_{(a_i,b_i)b_i'}p_{(a_j,b_j)b_j'}
        p_{(R,\mbf{a},\mbf{b})\mbf{b}'}=p_{(a_j,b_j)b_j'}p_{(a_i,b_i)b_i'}
        p_{(R,\mbf{a},\mbf{b})\mbf{b}'},
    \end{align*}
    so $[p_{(a_i,b_i)b_i'},p_{(a_j,b_j)b_j'}]p_{(R,\mbf{a},\mbf{b})\mbf{b}'}=0$. But, since $\mathrm{LIN}$ is boolean, this implies that for all $c,c'\in \Z_2$, $[p_{(a_i,b_i)c},p_{(a_j,b_j)c'}]p_{(R,\mbf{a},\mbf{b})\mbf{b}'}=0$, so we can sum over $\mbf{b}'$ and get $[p_{(a_i,b_i)c},p_{(a_j,b_j)c'}]=0$. By the relations on the $p_{(a,b)b'}$, this gives in fact that $[p_{(a_i,b)c},p_{(a_j,b')c'}]=0$ for all $b,b',c,c'\in \Z_2$. Then, we have that 
    \begin{align*}
        p_{(R,\mbf{a},\mbf{b})\mbf{b}'}=p_{(a_1,b_1)b_1'}\cdots p_{(a_{\ar(R)},b_{\ar(R)})b_{\ar(R)}'}p_{(R,\mbf{a},\mbf{b})\mbf{b}'}=p_{(a_1,b_1)b_1'}\cdots p_{(a_{\ar(R)},b_{\ar(R)})b_{\ar(R)}'}.
    \end{align*}
    In particular, this implies that $p_{(R,\mbf{a},\mbf{b})\mbf{b}'}=p_{(R,\mbf{a},\mbf{0})(\mbf{b}+\mbf{b}')}$.

    Now, define the map $\varphi:\Mor^+(G(A,\mathrm{LIN})_K)\rightarrow\Mor^{o+}(A_H,\mathrm{LIN})$ on the generators by $\varphi(p_{(a,b)b'})=p_{a(b+b')}$ and $\varphi(p_{(R,\mbf{a},\mbf{b})\mbf{b}'})=p_{a_1(b_1+b_1')}\cdots p_{a_{\ar(R)}(b_{\ar(R)}+b_{\ar(R)}')}$. To see that this extends to a $\ast$-homomorphism, it remains to see that the relations of $\Mor^+(G(A,\mathrm{LIN})_K)$ remain satisfied. In fact, the $\varphi(p_{(a,b)b'})$ and $\varphi(p_{(R,\mbf{a},\mbf{b})\mbf{b}'})$ are projections;
    \begin{align*}
        &\sum_{b'\in\Z_2}\varphi(p_{(a,b)b'})=\sum_{b'\in\Z_2}p_{a(b+b')}=1,\\
        &\sum_{b\in\Z_2}\varphi(p_{(a,b)b'})=\sum_{b\in\Z_2}p_{a(b+b')}=1,\\
        &\sum_{\mbf{b}'\in R^{\mathrm{LIN}}}\varphi(p_{(R,\mbf{a},\mbf{b})\mbf{b}'})=\sum_{\mbf{b}'\in R^{\mathrm{LIN}}}p_{a_1(b_1+b_1')}\cdots p_{a_{\ar(R)}(b_{\ar(R)}+b'_{\ar(R)})}=\sum_{\mbf{b}\in\Z_2^{\ar(R)}}p_{a_1b_1}\cdots p_{a_{\ar(R)}b_{\ar(R)}}=1,\\
        &\sum_{\mbf{b}\in R^{\mathrm{LIN}}}\varphi(p_{(R,\mbf{a},\mbf{b})\mbf{b}'})=\sum_{\mbf{b}\in R^{\mathrm{LIN}}}p_{a_1(b_1+b_1')}\cdots p_{a_{\ar(R)}(b_{\ar(R)}+b'_{\ar(R)})}=\sum_{\mbf{b}\in\Z_2^{\ar(R)}}p_{a_1b_1}\cdots p_{a_{\ar(R)}b_{\ar(R)}}=1;
    \end{align*}
    and for $\mathbf{a}\in R^A$, $\mathbf{b},\mathbf{b}'\in R^{\mrm{LIN}}$, and $b'\neq b_i'$,
    \begin{align*}
        \varphi(p_{(a_i,b_i)b'})\varphi(p_{(R,\mbf{a},\mbf{b})\mbf{b}'})&=p_{a_i(b_i+b')}p_{a_1(b_1+b_1')}\cdots p_{a_{\ar(R)}(b_{\ar(R)}+b_{\ar(R)'})}\\
        &=p_{a_i(b_i+b')}p_{a_i(b_i+b_i')}p_{a_1(b_1+b_1')}\cdots p_{a_{\ar(R)}(b_{\ar(R)}+b_{\ar(R)'})}=0.
    \end{align*}

    Next, define the map $\psi:\Mor^{o+}(A_H,\mrm{LIN})\rightarrow\Mor^+(G(A,\mrm{LIN})_K)$ on the generators by $\psi(p_{ab})=p_{(a,0)b}$. Again, to see that this extends to a $\ast$-homomorphism, it suffices to see that the relations of $\Mor^{o+}(A_H,\mrm{LIN})$ remain satisfied. In fact, the $\psi(p_{ab})$ are projectors;
    \begin{align*}
        \sum_{b\in\Z_2}\psi(p_{ab})=\sum_{b\in\Z_2}p_{(a,0)b}=1;
    \end{align*}
    if $\mbf{a}\in R^{A_H}$ but $\mbf{b}\notin R^{\mrm{LIN}}$,
    \begin{align*}
        \psi(p_{a_1b_1})\cdots\psi(p_{a_{\ar(R)}b_{\ar(R)}})=p_{(a_1,0)b_1}\cdots p_{(a_{\ar(R)},0)b_{\ar(R)}}=0,
    \end{align*}
    as $\sum_{\mbf{b}\in R^{\mrm{LIN}}}p_{(a_1,0)b_1}\cdots p_{(a_{\ar(R)},0)b_{\ar(R)}}=\sum_{\mbf{b}\in R^{\mrm{LIN}}}p_{(R,\mbf{a},\mbf{0})\mbf{b}}=1$; and if $\mbf{a}\in R^{A_H}$, we have that
    \begin{align*}
        [\psi(p_{a_ib}),\psi(p_{a_jb'})]=[p_{(a_i,0)b},p_{(a_j,0)b'}]=0.
    \end{align*}

    Finally, to finish the proof, note that $\varphi=\psi^{-1}$ since they are inverses on the generators.
\end{proof}

Next, we need some prior results.

\begin{theorem}[\cite{Slo19,Slo20}]\label{thm:william}
    There exists a linear relational structure $A$ such that $\Gamma(A)$ is noncommutative and hyperlinear, and $J\mapsto 1$ in every finite-dimensional representation.

    If there exists a non-hyperlinear group, then there exists a linear relational structure $A$ such that $\Gamma(A)$ is noncommutative and $J\mapsto 1$ in every Connes-embeddable representation.
\end{theorem}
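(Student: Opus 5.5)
The plan is to derive both statements from Slofstra's embedding theorem for solution groups~\cite{Slo19}. That theorem says the following. Let $G$ be a finitely-presented group and $z\in G$ a central element of order $2$. Then there is a linear relational structure $A$ and an injective homomorphism $\iota:G\rightarrow\Gamma(A)$ with $\iota(z)=J$. Moreover, every unitary representation of $G$ in which $z\mapsto-1$ extends, up to amplification, to a representation of $\Gamma(A)$ in which $J\mapsto-1$, and this extension stays in the same class (finite-dimensional, resp.\ Connes-embeddable) as the original.

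The key point is the converse direction of that last property, which I will read off from how $\Gamma(A)$ is built. The construction passes from $G$ to $\Gamma(A)$ through a finite sequence of amalgamated free products and HNN-type extensions over finite subgroups (of exponent $2$). Such operations preserve hyperlinearity, and in fact soficity. They also give a correspondence between representations of $\Gamma(A)$ with $J\mapsto-1$ and representations of $G$ with $z\mapsto-1$, within the same model.

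With this in hand, each statement reduces to exhibiting a suitable pair $(G,z)$.

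For the first statement, I need $G$ finitely presented, nonabelian and hyperlinear, containing a central involution $z\neq1$ such that $z\mapsto1$ in every finite-dimensional unitary representation. Since $G$ is finitely presented, every finite-dimensional representation has residually finite image by Malcev's theorem. So it suffices that $z$ lies in every finite-index subgroup, i.e.\ $z$ is killed in every finite quotient, while $G$ stays hyperlinear. I would take the standard example used by Slofstra: a central extension of a Higman/Baumslag--Solitar-type group that has no nontrivial finite quotients but is amenable-by-amenable or otherwise sofic. An alternative is a central $\Z_2$-extension of a non-residually-finite lattice-type group in which the central element dies in all finite quotients.

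Applying the embedding gives $A$ with the following properties:
\begin{itemize}
\item $J=\iota(z)\neq1$ in $\Gamma(A)$;
\item $\Gamma(A)\supseteq\iota(G)$ is noncommutative;
\item $\Gamma(A)$ is hyperlinear by preservation;
\item any finite-dimensional representation of $\Gamma(A)$ with $J\mapsto-1$ would restrict to one of $G$ with $z\mapsto-1$, which is impossible. Hence $J\mapsto1$ in every finite-dimensional representation.
\end{itemize}

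For the second statement, let $H$ be a non-hyperlinear group. First I reduce to the finitely-presented case, by the standard argument that hyperlinearity is a local property (if all finitely-presented groups were hyperlinear, then all groups would be). This yields a finitely-presented non-hyperlinear $H$ and a nontrivial $h\in H$ that maps to $1$ in every Connes-embeddable (i.e.\ $\mc{R}^\omega$-valued) representation. I then build a finitely-presented $G$ with central involution $z$ such that $z=1$ in every Connes-embeddable representation but $z\neq1$ in $G$. For this I would use Slofstra's ``$w\mapsto J$'' trick: form an amalgam/HNN extension in which $h$ is conjugated to and identified with a central involution, so that triviality of $h$ forces triviality of $z$. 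The embedding theorem then yields $A$ with $\Gamma(A)$ noncommutative and $J\mapsto1$ in every Connes-embeddable representation.

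The main obstacle is the converse transfer of representations: showing that a representation of $\Gamma(A)$ with $J\mapsto-1$ genuinely yields one of $G$ with $z\mapsto-1$ in the same model. This requires tracking Slofstra's construction through the intermediate groups. I would handle it by citing the precise form of~\cite[Theorem 3.1]{Slo19} and~\cite{Slo20}, rather than reproving the combinatorial group theory.
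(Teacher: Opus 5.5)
Your outline follows the route behind the cited result. The paper gives no argument of its own and simply invokes Slofstra, whose statements come from feeding a suitable pair $(G,z)$ into his embedding theorem $G\hookrightarrow\Gamma(A)$, $z\mapsto J$. One correction: the ``converse transfer'' you call the main obstacle is immediate, because any finite-dimensional or Connes-embeddable representation of $\Gamma(A)$ restricts along $\iota$ to one of $G$ of the same kind, exactly as your own bullet uses. So the forward extension property is not needed here, and the real inputs are the injectivity of $\iota$ with $\iota(z)=J$, preservation of hyperlinearity, and the existence of an admissible input pair $(G,z)$ — all of which you, like the paper, take from Slofstra.
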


Before passing to the main theorems of this section, we need some technical lemmata, building on a technique from \cite{DRS26-in-preparation}.

\begin{lemma}[{\cite{DRS26-in-preparation}}]\label{lem:josse}
    There exists a homogeneous linear relational structure $A$ with two distinguished elements $a_1,a_2\in A$ such that $\Gamma_H(A)$ is hyperlinear, $[x_{a_1},x_{a_2}]\neq 1$, $[x_{a_1},x_{a_2}]^2=1$, and $\Gamma_H(A)/\angnormal*{[x_{a_1},x_{a_2}]}$ is trivial. 
\end{lemma}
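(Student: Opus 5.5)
The proof constructs $A$ from a finite group. First, a reduction: the element $c=[x_{a_1},x_{a_2}]$ cannot be central. If it were, the normal subgroup $\angnormal*{c}$ would equal $\{1,c\}$. Triviality of $\Gamma_H(A)/\angnormal*{c}$ would then force $\Gamma_H(A)\cong\Z_2$, which is abelian, contradicting $c\neq 1$. So the Pauli-type solution groups (Mermin--Peres magic square), where the commutator is the central element $J$, cannot work directly. We need a group that is generated by involutions, in which the commutator of two generating involutions has order exactly $2$, and in which $c$ normally generates everything.

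\textbf{Step 1 (finite group).} Take a finite nonabelian simple group $G$ generated by involutions and containing involutions $u,v$ with $uv$ of order $4$. For example, $\mathrm{PSL}(2,7)$ works, or $A_6$, where $(12)(34)\cdot(13)(56)$ has order $4$. Then $[u,v]=(uv)^2$ has order $2$, and $[u,v]\neq 1$. By simplicity, $\angnormal*{[u,v]}=G$. Since $G$ is finite, it is hyperlinear.

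\textbf{Step 2 (realise $G$ as a homogeneous solution group).} Use the embedding technique of \cite{Slo19} in its homogeneous, $J$-free form, as developed in \cite{DRS26-in-preparation}. Starting from a presentation of $G$ by involutions, this produces a homogeneous linear structure $A$ with distinguished variables $a_1,a_2$ satisfying two properties. First, there is an injective homomorphism $G\hookrightarrow\Gamma_H(A)$ with $u\mapsto x_{a_1}$ and $v\mapsto x_{a_2}$. Second, $\Gamma_H(A)$ is obtained from $G$ (and from finite or abelian pieces) by iterated amalgamated free products and HNN-type extensions over finite subgroups. Injectivity gives $[x_{a_1},x_{a_2}]\neq 1$ and $[x_{a_1},x_{a_2}]^2=1$. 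For hyperlinearity, I would use that hyperlinear groups are closed under amalgamated products over amenable (here finite) subgroups.

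\textbf{Step 3 (triviality of the quotient).} The key extra property to extract from the construction is that every auxiliary generator $x_b$ of $\Gamma_H(A)$ lies in the normal closure of the image of $G$. For instance, each auxiliary variable could be forced to equal a product of conjugates of images of generators of $G$. If this holds, then killing $[x_{a_1},x_{a_2}]$ kills $\angnormal*{[u,v]}=G$, which in turn kills all generators, so $\Gamma_H(A)/\angnormal*{[x_{a_1},x_{a_2}]}$ is trivial. If the off-the-shelf construction leaves auxiliary generators outside this normal closure, I would add, for each such $b$, homogeneous equations of the form $x_b=x_{g}x_{g'}$ with $g,g'$ in the $G$-part. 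These would be added only where this is consistent with the existing relations of $\Gamma_H(A)$, so that the embedding of $G$ is unaffected.

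\textbf{Main obstacle.} The main difficulty is Steps 2--3 together. We must ensure that the homogeneous embedding keeps $G$ injective, keeps the whole group hyperlinear, and has the normal-generation property, all at once. Checking injectivity of $G$ after the extra equations in Step 3 is the delicate part; I would verify it via the normal forms for amalgamated products \cite{KS70}.
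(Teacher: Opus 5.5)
There is a genuine gap in Step~3, which is where the whole content of the lemma lies. Steps~1 and~2 are reasonable, if you treat the homogeneous form of Slofstra's embedding and its preservation of hyperlinearity as a black box. Step~3, however, is asserted rather than proved, and the requirement is strong. Since $[x_{a_1},x_{a_2}]$ is a commutator, triviality of $\Gamma_H(A)/\angnormal*{[x_{a_1},x_{a_2}]}$ forces $\Gamma_H(A)$ to be perfect. Because $\Gamma_H(A)^{\mathrm{ab}}$ is $\Z_2^{\dom(A)}$ modulo the span of the equations, the homogeneous $\Z_2$-system must then have no nonzero classical solution. More to the point, $\Gamma_H(A)/\angnormal{G}$ must be trivial. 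This is not a property you can simply extract from a Slofstra-type embedding. The auxiliary variables in its gadgets are new involutions, not a priori expressible in terms of $G$, and you give no argument that killing the $G$-variables forces them.

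Your fallback does not repair this. Adding a constraint $(b,g,g')\in\LR{0,3}^A$ imposes $x_b=x_gx_{g'}$ and also pairwise commutation of $x_b,x_g,x_{g'}$. For exactly the variables you want to fix, those with $x_b\notin\angnormal{G}$, this relation cannot already hold in $\Gamma_H(A)$. So you always pass to a proper quotient. Nothing then controls whether $G$ still embeds, whether hyperlinearity survives, or whether the group collapses. ``Consistent with the existing relations'' is not a criterion that guarantees any of this, and the normal form theorem for amalgamated products says nothing about such quotients. Moreover, $x_b\in\angnormal{G}$ would not in general make $x_b$ a product of two commuting involutions of $G$.

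The missing idea is to avoid auxiliary variables altogether. The paper takes $\dom(A)$ to be the set of involutions of $A_7$, with $\LR{0,3}^A$ the triples $(a,b,c)$ satisfying $abc=1$, so every variable is a group element. By \cite[Remark 6.9]{DdBRS25}, $\Gamma_H(A)$ is the perfect central extension of $A_7$ by $\Z_3$. It is finite, hence hyperlinear, and $\phi\colon x_a\mapsto a$ maps it onto $A_7$ with central kernel of odd order. Take $a_1=(1\,2)(3\,4)$ and $a_2=(2\,3)(5\,6)$. Then $z=x_{a_1a_2a_1}^{-1}x_{a_1}x_{a_2}x_{a_1}\in\ker\phi$ satisfies $z^2=1$, so $z=1$. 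The constraint $(a_2,a_1a_2a_1,c)$ with $c=[a_1,a_2]$ then gives $[x_{a_1},x_{a_2}]=x_c$, which is an involution and is nontrivial because $\phi(x_c)=c\neq1$. Normal generation follows in two steps. First, $\angnormal*{x_c}\ker\phi=\Gamma_H(A)$ by simplicity of $A_7$. Second, the quotient $\Gamma_H(A)/\angnormal*{x_c}\cong\ker\phi/(\ker\phi\cap\angnormal*{x_c})$ is abelian, so it vanishes because $\Gamma_H(A)$ is perfect. Your Step~1 is the right group-theoretic input. What you lack is a linear system whose solution group is a small central extension of the finite simple group itself, rather than a large group into which that group embeds.
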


In particular, we have that the subgroup generated by $x_{a_1}$ and $x_{a_2}$ is isomorphic to the dihedral group $D_4$.

\begin{proof}
    Let $A$ be the linear relational structure defined as follows: $\dom(A)$ is the set of order-two elements of the alternating group $A_7$, $\LR{0,3}^A$ is the set of tuples $(a,b,c)$ such that $abc=1$ in $A_7$,\footnote{Note that the two conditions together imply that the subgroup $\langle a,b,c \rangle \subseteq A_7$ is abelian: if $abc = 1$ and $a,b,c$ have order $2$, then $ab = c^{-1} = c = (ab)^{-1} = b^{-1}a^{-1}  = ba $.} and $R^A=\varnothing$ for all the remaining relations. Now, due to \cite[Remark 6.9]{DdBRS25}, $\Gamma_H(A)$ is isomorphic to the unique perfect central extension of $A_7$ by $\Z_3$; thus it is finite and therefore hyperlinear. Also, the map $\phi:\Gamma_H(A)\rightarrow A_7$, $x_a\mapsto a$ is a surjective group homomorphism.
    
    Next, let $a_1=(1\,2)(3\,4)$ and $a_2=(2\,3)(5\,6)$. In $A_7$, $[a_1,a_2]=(1\,4)(2\,3)$ which is a nontrivial order-two element. By construction $x_{a_1}x_{a_2}x_{a_1}$ is an order-two element, whose image under the homomorphism $\phi:\Gamma_H(A)\rightarrow A_7$ is $a_1a_2a_1=(1\,3)(2\,4)$. As such, there exists $z\in\ker\phi$ such that $x_{a_1}x_{a_2}x_{a_1}=x_{a_1a_2a_1}z$. But, squaring both sides gives that $1=(x_{a_1}x_{a_2}x_{a_1})^2=x_{a_1a_2a_1}^2z^2=z^2$, the kernel is central. But since $\ker\phi\cong\Z_3$, we must have $z=1$ and thus $x_{a_1}x_{a_2}x_{a_1}=x_{a_1a_2a_1}$. Since $a=a_2$, $b=a_1a_2a_1$, and $c=a_1a_2a_1a_2$ are all order-two elements of $A_7$ satisfying $abc=1$, we have that $x_ax_bx_c=1$ in $\Gamma_H(A)$. This implies that
    $$x_c=x_bx_a=x_{a_1}x_{a_2}x_{a_1}x_{a_2}=[x_{a_1},x_{a_2}],$$
    which is therefore a non-identity order-two element of $\Gamma_H(A)$.
    
    It remains to show that the normal subgroup generated by $x_c$ is the whole of $\Gamma_H(A)$. Consider the subgroup $H=\angnormal*{x_c}\ker\phi$. Since $A_7\cong\Gamma_H(A)/\ker\phi$ is simple, $H/\ker\phi$ is either trivial or isomorphic to $A_7$. The first case would imply that $x_c\in\ker\phi$, which is impossible as it has order two. As such, the second case must hold, giving that $H=\Gamma_H(A)$. If $\angnormal*{x_c}\cap\ker\phi=\{1\}$, then $\Gamma_H(A)/\angnormal*{x_c}\cong\ker\phi$. Since $\ker\phi$ is abelian and $\Gamma_H(A)$ is perfect, this is a contradiction. This implies that $\ker\phi\subseteq\angnormal*{x_c}$, and hence $\angnormal*{x_c}=H=\Gamma_H(A)$.
\end{proof}

\begin{lemma}\label{lem:homogenise}
    Let $A$ be a linear relational structure. Then, there exists a homogeneous linear relational structure $B$ such that $\Gamma(A)\cong\Gamma_H(B)$.
\end{lemma}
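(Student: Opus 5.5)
The idea is to eliminate the central generator $J$ by introducing a fresh variable that plays its role. Let $\dom(B)=\dom(A)\sqcup\{j\}$, where $j$ is a new element. Every relation of $A$ is moved into the homogeneous part, with $j$ absorbing the parity:
\begin{itemize}
    \item for each $\mbf{a}\in\LR{0,n}^A$, put $\mbf{a}\in\LR{0,n}^B$;
    \item for each $\mbf{a}\in\LR{1,n}^A$, put $(a_1,\ldots,a_n,j)\in\LR{0,n+1}^B$.
\end{itemize}
To force $x_j$ to commute with every $x_a$, also put $(a,j,a,j)\in\LR{0,4}^B$ for every $a\in\dom(A)$. In $\Gamma_H(B)$ this constraint gives the relation $x_ax_jx_ax_j=1$ together with pairwise commutation of the tuple entries, so $[x_a,x_j]=1$. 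If repeated entries in a tuple are a concern, the same effect can be obtained with auxiliary variables: use $(a,j,c_a)\in\LR{0,3}^B$ for a fresh $c_a$, since the commutation relations within that constraint already include $[x_a,x_j]=1$. I would use the auxiliary-variable version to be safe, and then check that $x_{c_a}=x_ax_j$ is redundant.

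Next, define group homomorphisms in both directions and check relations.
\begin{itemize}
    \item $\alpha:\Gamma(A)\to\Gamma_H(B)$ by $x_a\mapsto x_a$ and $J\mapsto x_j$. Every generator image is an involution, and $x_j$ is central in the image because of the commutation constraints. A relation $x_{a_1}\cdots x_{a_n}=J^b$ maps to a relation that follows from the constraint $x_{a_1}\cdots x_{a_n}x_j^b=1$ using $x_j^2=1$. The pairwise commutations within a constraint are carried over directly.
    \item $\beta:\Gamma_H(B)\to\Gamma(A)$ by $x_a\mapsto x_a$, $x_j\mapsto J$, and $x_{c_a}\mapsto x_aJ$. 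Each defining relation of $\Gamma_H(B)$ then holds in $\Gamma(A)$. The relations from $\LR{1,n}$ constraints hold because $J$ commutes with everything and $J^2=1$. The auxiliary constraints hold because $x_aJ$ is an involution commuting with $x_a$ and $J$, and $x_a\cdot J\cdot x_aJ=1$.
\end{itemize}

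Finally, $\beta\alpha$ and $\alpha\beta$ are the identity on generators; for $x_{c_a}$ this uses $x_{c_a}=x_ax_j$, which follows from its defining constraint. Hence $\Gamma(A)\cong\Gamma_H(B)$.

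The main obstacle I expect is bookkeeping about the exact relations of $\Gamma_H$. First, $\Gamma_H(B)$ is still defined with a $J$ that is then quotiented out, so I need $J=1$ to cause no collapse. It does not, because every constraint of $B$ is homogeneous, so $J$ appears in no relation other than its centrality and $J^2=1$. Second, I must make sure that the commutation relations of $\Gamma(A)$ between variables in a common $\LR{1,n}$ constraint are reproduced in $B$. They are, since $B$ contains the enlarged tuple, which contains all the original pairs.
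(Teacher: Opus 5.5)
Your construction matches the paper's proof: adjoin a fresh $j$, copy the $\LR{0,n}$ tuples, append $j$ to each $\LR{1,n}$ tuple, add $(a,j,a,j)\in\LR{0,4}^B$ so that $x_j$ is central, and take the isomorphism $x_a\mapsto x_a$, $J\mapsto x_j$. Your explicit inverse $\beta$ spells out what the paper calls ``clear''; the auxiliary-variable variant is unnecessary, since repeated entries are harmless under the paper's definition of $\Gamma$, but it is also correct.
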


\begin{proof}
    Define $B$ as follows. Let $\dom(B)=\dom(A)\cup\{j\}$, for every $\mbf{a}\in\LR{0,n}^A$ let $\mbf{a}\in\LR{0,n}^B$, for every $\mbf{a}\in\LR{1,n}^A$ let $(a_1,\ldots,a_n,j)\in\LR{0,n+1}^B$, and for every $a\in A$, let $(a,j,a,j)\in \LR{0,4}^B$. Then it is clear that $x_a\mapsto x_a$ and $J\mapsto x_j$ induces an isomorphism $\Gamma(A)\rightarrow\Gamma_H(B)$.
\end{proof}

\begin{lemma}\label{lem:h-triv}
    There exists a homogeneous linear relational structure $A$ with a distinguished elements $a\neq j\in A$ such that $\Gamma_H(A)$ is hyperlinear, $x_a\neq 1$ in $\Gamma_H(A)$, and $\Gamma_H(A)/\angnormal*{x_j}$ is trivial.
\end{lemma}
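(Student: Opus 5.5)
The plan is to take the structure from \cref{lem:josse} and adjoin one new variable $j$ that is forced to equal the commutator $[x_{a_1},x_{a_2}]$. Concretely, let $A_0$ with $a_1,a_2$ be as in \cref{lem:josse}, so $\Gamma_H(A_0)$ is hyperlinear, $c:=[x_{a_1},x_{a_2}]$ is a non-identity involution, and $\Gamma_H(A_0)/\angnormal*{c}$ is trivial. In the proof of that lemma we already found elements $b=a_1a_2a_1$ and $c=a_1a_2a_1a_2$ of $\dom(A_0)$ with $x_c=[x_{a_1},x_{a_2}]$ in $\Gamma_H(A_0)$. Hence we can take $A=A_0$, $j=c$, and any $a\in\dom(A_0)$ with $x_a\neq 1$; for instance $a=a_1$, whose image in $A_7$ is nontrivial.

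If one prefers not to rely on $c$ being a domain element, one can instead add a fresh element $j$. The commutator $x_{a_1}x_{a_2}x_{a_1}x_{a_2}$ is a product of involutions that do not pairwise commute, so it cannot be imposed by a single homogeneous linear relation. Instead, introduce auxiliary variables $u,v$ together with the $\LR{0,3}$ relations $(a_1,a_2,u)$-type constraints. These would force commutation, however, so in this route I would again fall back to the fact that $b$ and $c$ are realised inside $A_0$ via the central-kernel argument. The first route is therefore the one I commit to.

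The verification then has three steps.
\begin{itemize}
\item Hyperlinearity of $\Gamma_H(A)$ is immediate, since the group is unchanged: it is the finite perfect central extension of $A_7$.
\item For $x_a\neq 1$, use the surjection $\phi\colon\Gamma_H(A)\to A_7$ with $\phi(x_{a_1})=a_1\neq 1$.
\item For triviality of $\Gamma_H(A)/\angnormal*{x_j}$, note that $x_j=x_c=[x_{a_1},x_{a_2}]$, so this quotient equals $\Gamma_H(A_0)/\angnormal*{[x_{a_1},x_{a_2}]}$, which is trivial by \cref{lem:josse}.
\end{itemize}
Finally, $a\neq j$ holds because $a_1=(1\,2)(3\,4)\neq(1\,4)(2\,3)=c$ in $A_7$.

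The main obstacle is only bookkeeping: making sure the identification $x_c=[x_{a_1},x_{a_2}]$ from the proof of \cref{lem:josse} is used, rather than just the statement of that lemma. If this should be made self-contained, I would restate that step as follows. Because $\ker\phi\cong\Z_3$ is central, the kernel element $z$ with $x_{a_1}x_{a_2}x_{a_1}=x_{b}z$ satisfies $z^2=1$, hence $z=1$. Then the relation $(a_2,b,c)\in\LR{0,3}^A$ gives $x_c=x_bx_{a_2}=[x_{a_1},x_{a_2}]$.
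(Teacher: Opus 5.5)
Your proof is correct, and it takes a genuinely different and much shorter route than the paper.

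\textbf{The paper's route.} The paper homogenises the Mermin--Peres magic square, so that $J$ becomes a variable $x_j$ with $[x_1,x_5]=[x_2,x_4]=x_j$. It then glues two copies of the structure from \cref{lem:josse} onto it using $\LR{0,2}$ equalities. To finish, it needs three facts: $\Gamma_H(A)$ is an amalgamated free product over $D_4$; the factors embed, which gives $x_1\neq 1$; and hyperlinearity survives amalgamation over a finite group. Throughout, it uses \cref{lem:josse} only through its statement and the $D_4$ remark that follows it.

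\textbf{Your route.} You notice that the commutator $[x_{a_1},x_{a_2}]$ is already a generator $x_c$ of $\Gamma_H(A_0)$. This identification is established inside the proof of \cref{lem:josse}. Hence $A_0$ itself works with $j=c$ and $a=a_1$. The group is then the finite group $3.A_7$, so no amalgamation theory is needed. All the checks are right: $\phi(x_{a_1})=a_1\neq 1$, and $a_1\neq c$ in $A_7$.

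\textbf{A further shortcut.} You can even drop the commutator identification. The argument at the end of the proof of \cref{lem:josse} applies to any $d\in\dom(A_0)$. It uses only that $\Gamma_H(A_0)$ is perfect and that $\ker\phi$ is central with simple quotient $A_7$. So every $x_d$ normally generates $\Gamma_H(A_0)$.

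\textbf{The downstream requirement.} \cref{lem:gamma-triv} uses this lemma via an amalgamation over $\Z_2^2\cong\anggroup*{x_{\alpha_a},x_{j_a}}$. This requirement is not part of the statement, but your specific choice meets it. In $A_7$, $a_1=(1\,2)(3\,4)$ and $c=(1\,4)(2\,3)$ commute, and their product $(1\,3)(2\,4)$ is an involution. So they lie in a common $\LR{0,3}$ constraint, hence $x_{a_1}$ and $x_c$ commute. Moreover $x_{a_1}$, $x_c$ and $x_{a_1}x_c$ all have nontrivial image under $\phi$, so $\anggroup*{x_{a_1},x_c}\cong\Z_2^2$, exactly as $\anggroup*{x_1,x_j}$ is in the paper's construction. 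If you use the shortcut, you would need to pick $j$ commuting with $a$ to keep this property.

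Your second paragraph, about adding a fresh variable $j$, is a dead end and can simply be deleted.
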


\begin{proof}
    Let $B$ be linear relational structure corresponding to the Mermin-Peres magic square, \textit{i.e.} $\dom(B)=[9]$, and relations $$\LR{0,3}^B=\set*{(1,2,3),(4,5,6),(7,8,9),(1,4,7),(2,5,8)}$$ and $\LR{1,3}^B=\set*{(3,6,9)}$. We know $\Gamma(B)$ is finite, and therefore hyperlinear, and that $[x_1,x_5]=[x_2,x_4]=J$. Let $B'$ be the homogeneous linear relational structure such that $\Gamma(B)\cong\Gamma_H(B')$ constructed in \cref{lem:homogenise}. So, in $\Gamma_H(B')$, $[x_1,x_5]=[x_2,x_4]=x_j$. Now, let $A_1$ and $A_2$ be two distinct copies of the relational structure from \cref{lem:josse} with distinguished elements $a_{11},a_{12}\in A_1$ and $a_{21},a_{22}\in A_2$. Take $A$ to be the homogeneous linear relational structure with domain $\dom(A)=\dom(B')\sqcup\dom(A_1)\sqcup\dom(A_2)$ and relations $R^{A}=R^{B'}\cup R^{A_1}\cup R^{A_2}$, except for $R=\LR{0,2}$ where $\LR{0,2}^A=\LR{0,2}^{B'}\cup \LR{0,2}^{A_1}\cup \LR{0,2}^{A_2}\cup\{(a_{11},1),(a_{12},5),(a_{21},2),(a_{22},4)\}$. Let the distinguished element $a=1$.
    
    By construction, $\Gamma_H(A)$ is the amalgamated free product of $\Gamma_H(B')$, $\Gamma_H(A_1)$, and $\Gamma_H(A_2)$ over the dihedral groups $D_4\cong\anggroup*{x_1,x_5}\cong\anggroup*{x_{a_{11}},x_{a_{12}}}$ and $D_4\cong\anggroup*{x_2,x_4}\cong\anggroup*{x_{a_{21}},x_{a_{22}}}$. Since we know that $\Gamma_H(B')$, $\Gamma_H(A_1)$, and $\Gamma_H(A_2)$ are hyperlinear, we have that $\Gamma_H(A)$ is hyperlinear. This also implies that $x_a=x_1\neq 1$, and that $[x_{a_{11}},x_{a_{12}}]=[x_1,x_5]=x_j$ and $[x_{a_{21}},x_{a_{22}}]=[x_2,x_4]=x_j$ in $\Gamma_H(A)$. Now, we take the quotient by the relation $x_j=1$. Since $[x_{a_{11}},x_{a_{12}}]=x_j$, we take the quotient of $\Gamma_H(A_1)$ by $[x_{a_{11}},x_{a_{12}}]$. By \cref{lem:josse}, this implies that the image of $\Gamma_H(A_1)$ in $\Gamma_H(A)/\angnormal*{x_j}$ is trivial, and hence $x_{a_{11}}=x_{a_{12}}=1$, which gives $x_1=x_5=1$. Proceeding identically with $\Gamma_H(A_2)$, we also get $x_2=x_4=1$. Now, the relations of $\Gamma(B)$ imply that $x_3=x_1x_2=1$, $x_6=x_4x_5=1$, $x_7=x_1x_4=1$, $x_8=x_2x_5=1$, and $x_9=x_3x_6x_j=1$. Therefore $\Gamma_H(A)/\angnormal*{x_j}$ is trivial.
\end{proof}

\begin{lemma}\label{lem:gamma-triv}
    Let $A$ be a linear relational structure. Then, there exists a homogeneous linear relational structure $B$ with a distinguished element $j\in B$ such that there exists an embedding $\Gamma(A)\hookrightarrow\Gamma_H(B)$ such that $J\mapsto x_j$, and $\Gamma_H(B)/\angnormal*{x_j}$ is trivial. Further, if $\Gamma_H(A)$ is hyperlinear, so is $\Gamma_H(B)$.
\end{lemma}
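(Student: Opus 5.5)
The plan is to glue a copy of the solution-group structure of $A$ to the structure from \cref{lem:h-triv} along their central elements, using an amalgamated free product. First, by \cref{lem:homogenise}, replace $A$ by a homogeneous linear relational structure $A'$ with a distinguished element $j_A$ such that $\Gamma(A)\cong\Gamma_H(A')$ and $J\mapsto x_{j_A}$. Next, take the structure $C$ from \cref{lem:h-triv}, with distinguished elements $a\neq j_C$, such that $\Gamma_H(C)$ is hyperlinear, $x_a\neq 1$ and $\Gamma_H(C)/\angnormal*{x_{j_C}}$ is trivial.

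The gluing step has two cases. If $J=1$ in $\Gamma(A)$, the element $j$ can simply be added with no constraint. Otherwise $x_{j_A}$ generates a subgroup isomorphic to $\Z_2$, and I also need $x_{j_C}$ to have order exactly $2$. I would check this from the construction in \cref{lem:h-triv}: there $x_j$ equals the image of $J$ from the magic square, which survives in the amalgam because the factors embed. In fact $x_{j_C}\neq 1$, since otherwise $\Gamma_H(C)$ would be trivial, contradicting $x_a\neq 1$.

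Define $B$ with domain $\dom(A')\sqcup\dom(C)$, relations the unions, and one extra equation $(j_A,j_C)\in\LR{0,2}^B$, which forces $x_{j_A}=x_{j_C}$. Set $j:=j_C$. Because $\Gamma_H$ of a disjoint union of homogeneous systems is the free product, and adding the equation identifies the two order-$2$ subgroups, we get
\[
\Gamma_H(B)\cong\Gamma_H(A')\ast_{\Z_2}\Gamma_H(C),
\]
exactly as in the proof of \cref{lem:h-triv}. The required embedding is then $\Gamma(A)\cong\Gamma_H(A')\hookrightarrow\Gamma_H(B)$, which is injective by the amalgamated product fact from the preliminaries and sends $J\mapsto x_{j_A}=x_j$.

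For triviality, quotienting $\Gamma_H(B)$ by $\angnormal*{x_j}$ kills the image of $\Gamma_H(C)$, because $\Gamma_H(C)/\angnormal*{x_{j_C}}$ is trivial. It also kills $J$ in the $A'$ factor, but that alone does not trivialise $\Gamma_H(A')$. So the statement as written would fail unless the $A'$ generators are also forced to die. I would fix this by additionally amalgamating each generator: for every $b\in\dom(A')$, attach a fresh copy $C_b$ of the \cref{lem:h-triv} structure and impose $x_b\cdot x_{j_{C_b}}$ commuting relations, or more simply identify $x_b$ with an order-$2$ element of $C_b$ that dies modulo its $x_{j}$.

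Concretely, I would identify $x_b$ (when it is nontrivial and of order $2$) with $x_{a_{C_b}}$, and identify all the $x_{j_{C_b}}$ with $x_j$. This is an iterated amalgamation over $\Z_2$ and over $\Z_2\times\Z_2$, the latter only where compatible. Modulo $x_j$ every $C_b$ collapses, so $x_b=1$, and hence the whole quotient is trivial. This uses that $\langle x_a,x_j\rangle\cong\Z_2^2$ in $C$, or at least that the subgroups match.

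For hyperlinearity, amalgamated free products of hyperlinear groups over finite (amenable) subgroups are hyperlinear. This is the fact already used in \cref{lem:h-triv}, so hyperlinearity of $\Gamma_H(A)\cong\Gamma(A)/\angnormal{J}$, or rather of $\Gamma(A)$, transfers to $\Gamma_H(B)$.

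The main obstacle is this amalgamation bookkeeping: verifying that the identified subgroups really are isomorphic (nontrivial of order $2$, or degenerate cases where $x_b=1$ or $x_b=J$ are handled separately), and that the extra equations yield exactly the amalgam.
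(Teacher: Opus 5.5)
Your final construction is the paper's. For each generator $b$ you attach a fresh copy $C_b$ of the \cref{lem:h-triv} structure via $(b,\alpha_b),(j,j_b)\in\LR{0,2}^B$. Then $\Gamma_H(B)$ is an iterated amalgam of $\Gamma_H(A')\cong\Gamma(A)$ with the $\Gamma_H(C_b)$ over $\langle x_b,x_j\rangle\cong\langle x_{\alpha_b},x_{j_b}\rangle\cong\Z_2^2$. Injectivity, triviality modulo $x_j$, and hyperlinearity follow exactly as in the paper. You are right that the hyperlinearity hypothesis should be on $\Gamma(A)$ rather than $\Gamma_H(A)$; that is also what the paper's proof uses. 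Your first copy $C$, glued only along $j$, is redundant and can be dropped.

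Two points in your case analysis need correcting.

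First, your treatment of $J=1$ in $\Gamma(A)$ is wrong. Adding an unconstrained $j$ cannot work, because $J=1$ would have to be sent to $x_j\neq 1$. In fact no $B$ can handle this case unless $\Gamma(A)$ is trivial. Any homomorphism with $J\mapsto x_j$ forces $x_j=1$, so triviality of $\Gamma_H(B)/\angnormal*{x_j}$ makes $\Gamma_H(B)$ itself trivial, which contradicts injectivity. The lemma therefore has to be read with the extra hypothesis $J\neq 1$ in $\Gamma(A)$.

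Second, for degenerate generators with $x_b\in\{1,x_j\}$, the correct handling is simply to omit $C_b$, since such $x_b$ already die modulo $x_j$. This is necessary, not cosmetic. In the \cref{lem:h-triv} group the distinguished elements satisfy $x_j=[x_a,x_5]$ (with $a=1$), so $x_j\in\angnormal*{x_a}$ and $x_ax_j=x_5x_ax_5$. Hence imposing either $x_{\alpha_b}=1$ or $x_{\alpha_b}=x_{j_b}$ forces $x_{j_b}=1$, hence $x_j=1$, which destroys the embedding. The paper attaches a copy for every $a\in A$ and asserts the $\Z_2^2$ amalgamation without checking this nondegeneracy. Your caution is therefore justified, and your skip-fix repairs the gap. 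The cost is that $B$ now depends on non-effective information about $\Gamma(A)$, which is harmless for this existence statement.
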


\begin{proof}
    Let $A'$ be the homogeneous linear relational structure such that $\Gamma(A)\cong\Gamma_H(A')$ constructed in \cref{lem:homogenise}. For each $a\in A$, let $A_a$ be a copy of the linear relational structure from \cref{lem:h-triv} with distinguished elements $\alpha_a\neq j_a\in A_a$. Now, define $B$ as the homogeneous linear relational structure with domain $\dom(B)=\dom(A')\sqcup\bigsqcup_{a\in A}\dom(A_a)$, and relations $R^B=R^{A'}\cup\bigcup_{a\in A}R^{A_a}$ except for $R=\LR{0,2}$ where $\LR{0,2}^B=\LR{0,2}^{A'}\cup\bigcup_{a\in A}\parens*{\LR{0,2}^{A_a}\cup\{(a,\alpha_a),(j,j_a)\}}$.

    Now, $\Gamma_H(B)$ is the amalgamated free product of $\Gamma_H(A')$ and $\Gamma_H(A_a)$ over $\Z_2^2\cong\anggroup*{x_a,x_j}\cong\anggroup*{x_{\alpha_a},x_{j_a}}$. By construction, if $\Gamma(A)$ is hyperlinear, so is $\Gamma_H(B)$. Next, we can take $\iota$ to be the homomorphism extending $x_a\mapsto x_a$ for all $a\in A$, and $J\mapsto x_j$, since $x_{\alpha_a}\neq1$ for all $a$ this is an injection.

    Next, consider the quotient of $\Gamma_H(B)$ by the relation $x_j=1$. Since $x_{j_a}=1$ in the image of $\Gamma_H(A_a)$, we have that that image is trivial by \cref{lem:h-triv}. In particular, $x_{\alpha_a}=x_a=1$, so $\Gamma_H(B)$ is trivial.
\end{proof}

The main theorem follows naturally from the following two theorems.

\begin{theorem}\label{thm:no-finite}
    There exists a homogeneous linear relational structure $A$ such that $\Gamma_H(A)$ is noncommutative and hyperlinear but $\Gamma_H(A)$ is trivial under any finite-dimensional representation.
\end{theorem}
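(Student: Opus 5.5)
Take the linear relational structure $A_0$ from the first part of \cref{thm:william}: $\Gamma(A_0)$ is noncommutative and hyperlinear, and $J\mapsto 1$ in every finite-dimensional representation. Apply \cref{lem:gamma-triv} to $A_0$. This gives a homogeneous linear relational structure $A$ (the $B$ of that lemma) with a distinguished element $j$ and an embedding $\iota:\Gamma(A_0)\hookrightarrow\Gamma_H(A)$ with $\iota(J)=x_j$ and $\Gamma_H(A)/\angnormal*{x_j}$ trivial. This $A$ is the structure for the theorem.

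\textbf{Hyperlinearity and noncommutativity.} Noncommutativity is immediate, since $\Gamma_H(A)$ contains the noncommutative group $\iota(\Gamma(A_0))$. For hyperlinearity, \cref{lem:gamma-triv} asks for $\Gamma_H(A_0)$ hyperlinear, whereas \cref{thm:william} gives that $\Gamma(A_0)$ is hyperlinear. I will bridge this in one of two ways. The first is to note that $\Gamma_H(A_0)$ is a quotient of $\Gamma(A_0)$ by the central order-two subgroup $\langle J\rangle$, and pass hyperlinearity to it. The second, which I prefer, is to reread the proof of \cref{lem:gamma-triv}: it applies \cref{lem:homogenise} to get $A_0'$ with $\Gamma_H(A_0')\cong\Gamma(A_0)$, and the amalgamation step only uses hyperlinearity of $\Gamma_H(A_0')\cong\Gamma(A_0)$. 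That step is an amalgamated free product of hyperlinear groups (the factors from \cref{lem:h-triv} are hyperlinear) over finite, hence amenable, subgroups $\Z_2^2$, and such products are hyperlinear. So $\Gamma_H(A)$ is hyperlinear.

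\textbf{Triviality in finite dimensions.} Let $\rho$ be any finite-dimensional unitary representation of $\Gamma_H(A)$. Then $\rho\circ\iota$ is a finite-dimensional representation of $\Gamma(A_0)$, so by \cref{thm:william} it sends $J$ to $1$, i.e.\ $\rho(x_j)=1$. Hence $\rho$ factors through $\Gamma_H(A)/\angnormal*{x_j}$, which is trivial, so $\rho$ is trivial.

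\textbf{Main obstacle.} The only delicate point is the hyperlinearity hypothesis: the mismatch between ``$\Gamma(A_0)$ hyperlinear'' in \cref{thm:william} and ``$\Gamma_H(A_0)$ hyperlinear'' in \cref{lem:gamma-triv}, together with the standard fact that amalgamation over finite subgroups preserves hyperlinearity. I would resolve it by the second route above, citing hyperlinearity of amalgamated products over amenable subgroups. Everything else is a direct composition of the earlier results.
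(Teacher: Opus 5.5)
Your argument is correct and is essentially the paper's proof. Take $A_0$ from \cref{thm:william}, apply \cref{lem:gamma-triv}, and use $\iota(J)=x_j$ to make every finite-dimensional representation of $\Gamma_H(A)$ factor through the trivial quotient by the normal closure of $x_j$.

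Your explicit handling of noncommutativity and hyperlinearity fills in what the paper leaves implicit, and your preferred second route is the right one. Your first route would rely on the literal statement of \cref{lem:gamma-triv}, whereas its proof actually uses hyperlinearity of $\Gamma(A_0)\cong\Gamma_H(A_0')$ together with amalgamation over finite subgroups.
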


\begin{proof}
    Let $A_0$ be the linear relational structure such that $\Gamma(A_0)$ is noncommutative and hyperlinear, and $J\mapsto 1$ in every finite-dimensional representation from \cref{thm:william}. Next, let $A$ be the relational structure constructed from $A_0$ in \cref{lem:gamma-triv}. Suppose $\pi$ is a finite-dimensional representation of $\Gamma_H(A)$. Then, $\pi\circ\iota$ is a finite-dimensional representation of $\Gamma(A_0)$. Then, $\pi(x_j)=1$. As such, $\pi$ factors through to a representation of $\Gamma_H(A)/\angnormal*{x_j}$. But we know, this quotient is trivial, so $\pi$ is trivial.
\end{proof}

\begin{theorem}\label{thm:no-connes}
    Suppose there exists a non-hyperlinear group. Then, there exists a homogeneous linear relational structure $A$ such that $\Gamma_H(A)$ is noncommutative but $\Gamma_H(A)$ is trivial under any Connes-embeddable representation.
\end{theorem}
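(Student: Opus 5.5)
This is the Connes-embeddable counterpart of the proof of \cref{thm:no-finite}, with the second half of \cref{thm:william} replacing the first. Assuming a non-hyperlinear group exists, \cref{thm:william} gives a linear relational structure $A_0$ such that $\Gamma(A_0)$ is noncommutative and $J\mapsto 1$ in every Connes-embeddable representation. Apply \cref{lem:gamma-triv} to $A_0$. This yields a homogeneous linear relational structure $A$ with a distinguished element $j$ and an embedding $\iota:\Gamma(A_0)\hookrightarrow\Gamma_H(A)$ with $\iota(J)=x_j$, such that $\Gamma_H(A)/\angnormal*{x_j}$ is trivial. The hyperlinearity clause of \cref{lem:gamma-triv} is not needed here.

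For noncommutativity: $\iota$ is injective and $\Gamma(A_0)$ is noncommutative, so $\Gamma_H(A)$ contains a noncommutative subgroup and is itself noncommutative.

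For triviality, let $\pi$ be a Connes-embeddable representation of $\Gamma_H(A)$. Concretely, $\pi$ is a unitary representation whose generated von Neumann algebra (or the corresponding $C^\ast$-algebra) embeds into $\mc{R}^\omega$, in the sense of the monoid $qa$. Then $\pi\circ\iota$ is a representation of $\Gamma(A_0)$, and its image lies in the image of $\pi$. I need $\pi\circ\iota$ to be Connes-embeddable, so that \cref{thm:william} applies and gives $\pi(x_j)=\pi\circ\iota(J)=1$. It follows that $\pi$ factors through $\Gamma_H(A)/\angnormal*{x_j}$, which is trivial, so $\pi$ is trivial.

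The main obstacle is that step: checking that Connes-embeddability passes to $\pi\circ\iota$. The von Neumann algebra generated by $\pi\circ\iota(\Gamma(A_0))$ is a von Neumann subalgebra of the one generated by $\pi(\Gamma_H(A))$. A subalgebra of a Connes-embeddable finite von Neumann algebra is again Connes-embeddable: restrict the trace and compose the inclusion with the embedding into $\mc{R}^\omega$. One further subtlety is whether \cref{thm:william} is stated for representations generating a Connes-embeddable algebra or for representations into $\mc{R}^\omega$ itself. These are equivalent up to composing with the embedding, so I will phrase both notions as $\ast$-representations into $\mc{R}^\omega$, and the argument above goes through unchanged.
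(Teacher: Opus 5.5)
Your proposal is correct and follows the paper's proof. The paper likewise takes $A_0$ from the second part of \cref{thm:william}, applies \cref{lem:gamma-triv}, and argues as in \cref{thm:no-finite} that every Connes-embeddable $\pi$ satisfies $\pi(x_j)=\pi\circ\iota(J)=1$, so $\pi$ factors through the trivial quotient $\Gamma_H(A)/\angnormal*{x_j}$. Your explicit checks that noncommutativity follows from the injectivity of $\iota$, and that $\pi\circ\iota$ is again Connes-embeddable, are details the paper leaves implicit.
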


\begin{proof}
    We proceed in the same way as in \cref{thm:no-finite} starting from the relational structure $A_0$ such that $\Gamma(A_0)$ is noncommutative and $J\mapsto 1$ in every Connes-embeddable representation from \cref{thm:william}.
\end{proof}

\begin{proof}[Proof of \cref{thm:separations}]
    Let $A$ be the homogeneous linear relational structure constructed in \cref{thm:no-finite}. Let $G=G(A,\mathrm{LIN})_K$. Then, by \cref{cor:cheating,thm:graph-to-lin} $\Mor^+(G^k,G)\cong\Mor^+(G)^{\oplus k}\cong\Mor^{o+}(A,\mathrm{LIN})\cong C^\ast(\Gamma_H(A)^{ k})$. Now due to \cref{thm:no-finite}, every $\ast$-representation of $\Mor^+(G^k,G)$ into an element of $q$ is trivial, but there is a non-commutative $\ast$-representation into an element of $qa$.

    For the second part, we proceed the same way using the homogeneous linear relational structure constructed in \cref{thm:no-connes}.
\end{proof}

\begin{question}\label{que:one}
    Does there exist a separation between $C^\ast$- and $qc$-commutativity gadgets, or an unconditional separation between $qc$- and $qa$-commutativity gadgets? Do there exist similar separations in the oracular framework?
\end{question}

\begin{figure}

\centering

\begin{tikzpicture}
    \node (c) at (0,0) {$C^\ast$};
    \node (qc) [right=of c] {$qc$};
    \node (qa) [right=of qc] {$qa$};
    \node (q) [right=of qa] {$q$};
    \node (oc) [above=of c] {$oC^\ast$};
    \node (oqc) [above=of qc] {$oqc$};
    \node (oqa) [above=of qa] {$oqa$};
    \node (oq) [above=of q] {$oq$};
    \draw[->] (c) -- (qc);
    \draw[->] (qc) -- (qa);
    \draw[->] (qa) -- (q);
    \draw[->] (oc) -- (oqc);
    \draw[->] (oqc) -- (oqa);
    \draw[->] (oqa) -- (oq);
    \draw[->, dashed] (c) -- (oc);
    \draw[->, dashed] (qc) -- (oqc);
    \draw[->, dashed] (qa) -- (oqa);
    \draw[->, dashed] (q) -- (oq);

    \draw[red, thick] (-0.5,0.75) -- (5.5,0.75) (4.25,0.75) -- (4.25,-0.5);
    \draw[red, thick, densely dashed] (2.5,0.75) -- (2.5,-0.5);
\end{tikzpicture}

\caption{Separations between commutativity gadget classes; robust commutativity gadgets are not shown in this figure, due to the equivalences in~\cref{thm:implications-equivalences}. The separation between oracular and non-oracular commutativity gadgets is via $4$-colouring~\cite{CDdBVZ25}, and the remaining separations are due to~\cref{thm:separations}. Dashed line indicates a separation relying the existence of a non-hyperlinear group.}
\label{fig:separations}
\end{figure}
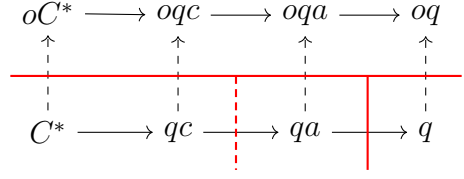

\section{Complexity of deciding the existence of commutativity gadgets}\label{sec:gadget-complexity}

\begin{theorem}\label{thm:gadget-existence-hard}
    \begin{enumerate}[(i)]
        \item The problem of deciding if a given relational structure admits a $q$-commutativity gadget is undecidable.

        \item The problem of deciding if a given relational structure admits a $qa$-commutativity gadget is undecidable.

        \item The problem of deciding if a given relational structure admits a $qc$-commutativity gadget is $\tsf{coRE}$-hard.
        
        \item The problem of deciding if a given relational structure admits a $C^\ast$-commutativity gadget is $\tsf{coRE}$-hard.
    \end{enumerate}
\end{theorem}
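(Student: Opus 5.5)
We reduce from known hard problems about solution groups of linear systems, using the same pipeline as in the proof of \cref{thm:separations}. Given a linear relational structure $A_0$, we pass to a homogeneous structure and then to the completed graph encoding $G=G(A,\mathrm{LIN})_K$. By \cref{cor:cheating} and \cref{thm:graph-to-lin}, this gives
\[
\Mor^+(G^k,G)\cong\Mor^+(G)^{\oplus k}\cong C^\ast(\Gamma_H(A))^{\oplus k}.
\]
By \cref{thm:basic-char}, $G$ admits a $Q$-commutativity gadget if and only if every representation of $C^\ast(\Gamma_H(A))$ into an algebra in $Q$ is commutative, that is, iff every such representation of $\Gamma_H(A)$ has abelian image. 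The map $A_0\mapsto G$ is computable, so it remains to choose hard source problems about solution groups whose answers are preserved by this translation.

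The source problems come from Slofstra~\cite{Slo19,Slo20}. It is undecidable whether $J=1$ in every finite-dimensional representation of $\Gamma(A_0)$, and likewise for Connes-embeddable representations. It is $\tsf{coRE}$-hard whether $J=1$ in $\Gamma(A_0)$ itself, which is equivalent to $J\mapsto 1$ in every unitary representation, and also to $J\mapsto1$ in every tracial (qc) representation. From $A_0$ we build $B$ as in \cref{lem:gamma-triv}, with $\Gamma(A_0)\hookrightarrow\Gamma_H(B)$, $J\mapsto x_j$, and $\Gamma_H(B)/\angnormal*{x_j}$ trivial.

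The reduction is then checked in two directions.
\begin{enumerate}[(i)]
\item Suppose $J\mapsto 1$ in every $Q$-representation of $\Gamma(A_0)$. Then any $Q$-representation $\pi$ of $\Gamma_H(B)$ restricts along the embedding to a $Q$-representation of $\Gamma(A_0)$, so $\pi(x_j)=1$. Hence $\pi$ factors through $\Gamma_H(B)/\angnormal*{x_j}$, which is trivial, so $\pi$ is trivial and in particular commutative. Therefore a $Q$-gadget exists. This is exactly the argument of \cref{thm:no-finite}.
\item Suppose some $Q$-representation of $\Gamma(A_0)$ has $J\mapsto -1$. Then $\Gamma(A_0)$ has that representation with noncommuting image, since $J$ is a product of commutators in these examples; alternatively, one can first add a Mermin--Peres block so that $J$ is a commutator. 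We then need a $Q$-representation of $\Gamma_H(B)$ in which $x_j\neq 1$, whose image is therefore nonabelian.
\end{enumerate}

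The main obstacle is step (ii): extending a $Q$-representation from the subgroup $\Gamma(A_0)$ to the amalgamated product $\Gamma_H(B)$ while staying inside the model $Q$. For the $C^\ast$ case this is easy: take the left regular representation, or induce the representation up to $\Gamma_H(B)$. For $q$, $qa$, and $qc$ it is much harder.

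The approach I would try first is to avoid extension altogether and simplify the source problem to ``is $\Gamma_H(A)$ nonabelian in some $Q$-representation''. Slofstra's embedding theorem lets us arrange that the solution group contains a prescribed finitely presented group $H$ together with a central $J$ of order $2$. For $H=\Gamma(A_0)$ itself, with the homogenisation of \cref{lem:homogenise} and no triviality gadget, a $Q$-representation is abelian iff it kills all commutators. For the Mermin--Peres-augmented $A_0$ this holds iff $J\mapsto 1$ in that representation, because in the augmented group all commutators lie in $\angnormal*{J}$ and $J$ is itself a commutator. This still requires $\Gamma(A_0)/\angnormal*{J}$ to be abelian, which can be arranged by adding, for each generator $x_a$, the \cref{lem:h-triv}-style gadget $(a,\alpha_a)$, $(j,j_a)$.

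Membership in the claimed classes is not required, since the statement asks only for undecidability and $\tsf{coRE}$-hardness. The amalgamation steps preserve the relevant properties (finite-dimensional representability, hyperlinearity) when applied to nontrivial representations by the same free-product arguments as in \cref{lem:h-triv}, and verifying this is where I expect the technical work to lie.
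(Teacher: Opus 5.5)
Your pipeline is the paper's: Slofstra's source problems (\cref{thm:hardness-william}), the structure $B$ of \cref{lem:gamma-triv}, the encoding $G=G(B,\mathrm{LIN})_K$ via \cref{cor:cheating} and \cref{thm:graph-to-lin}, and \cref{thm:basic-char}. Your direction (i) is exactly the argument of \cref{thm:no-finite}, which is all the paper's proof invokes. But direction (i) only shows that instances with $J\mapsto 1$ in every $Q$-representation of $\Gamma(A_0)$ go to structures admitting a $Q$-gadget, and that alone gives no hardness. The converse is your step (ii): a $Q$-representation of $\Gamma(A_0)$ with $J\mapsto -1$ must yield a noncommutative $Q$-representation of $\Gamma_H(B)$. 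You leave this open. Your detour does not avoid it. ``All commutators lie in $\angnormal*{J}$'' holds only if $\Gamma_H(A_0)$ is abelian, and forcing that by attaching \cref{lem:h-triv}-gadgets along $(a,\alpha_a)$ and $(j,j_a)$ is literally the construction of \cref{lem:gamma-triv}. So you are back to extending representations across the same amalgamated product. The paper's proof does not supply this step either: it asserts the needed ``iff'' ``in the same way as \cref{thm:no-finite}'', which yields only your direction (i). You have located the right missing piece, but the proposal does not fill it.

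Nor is this a routine check, because the gadgets also constrain representations in the reverse direction. In each attached copy of \cref{lem:h-triv}, $x_{j_a}=[x_{\alpha_a},x_5]$ with $[x_5,x_{j_a}]=1$. Hence any representation $\pi$ of $\Gamma_H(B)$ with $\pi(x_a)=1$ or $\pi(x_a)=\pi(x_j)$ has $\pi(x_j)=1$, and is therefore trivial. For example, let $A_0$ be the Mermin--Peres system plus a fresh variable $e$ with the unary constraint $(e)\in\LR{0,1}^{A_0}$. Then $J\mapsto -1$ in a $4$-dimensional representation of $\Gamma(A_0)$. Yet $x_e=1$ forces $x_j=1$, so $\Gamma_H(B)$ is trivial and $G$ admits a $C^\ast$-gadget.

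So even the embedding $\Gamma(A_0)\hookrightarrow\Gamma_H(B)$ behind your $C^\ast$/$qc$ case needs each $\anggroup*{x_a,J}$ to be a genuine $\Z_2^2$; ``$x_{\alpha_a}\neq 1$'' does not give injectivity. For $q$ and $qa$ you also need the two sides to restrict to the same representation (resp.\ the same trace) on each $C^\ast(\anggroup*{x_a,x_j})$.

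One way to close the gap:
\begin{itemize}
\item Add a fresh variable $m$ and constraints $(a,m,y_a)\in\LR{0,3}$. The central vertex group becomes $\Gamma(A_0)\times\Z_2$. Attach the gadgets at $m$ and at each $y_a=x_am$ instead of at $x_a$; these pairs generate genuine copies of $\Z_2^2$.
\item If $\rho(J)=-1$, let $\Gamma(A_0)$ act by $(\rho\oplus\mathrm{triv}^{\oplus\dim\rho})\otimes 1$ and $m$ by $1\otimes\mathrm{diag}(1,-1)$. This restricts to a multiple of the regular representation on each attached $\Z_2^2$.
\item The \cref{lem:h-triv} groups are amalgams of finite groups, hence residually finite. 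Regular representations of suitable finite quotients give the same restrictions. After taking multiples to match dimensions, conjugate and glue; this handles $q$.
\item For $qa$ and $qc$, the same balancing with equal trace weights lets you form the tracial amalgamated free product over the finite-dimensional algebra $C^\ast(\Z_2^2)$. This remains Connes-embeddable when its factors are.
\end{itemize}
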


To show this, we make use of the hardness results for linear system games due to~\cite{Slo19}, as well as the reductions of the previous section.

\begin{theorem}[\cite{Slo19,Slo20}]\label{thm:hardness-william}
    Given a linear relational structure $A$ as input it is
    \begin{enumerate}[(i)]
        \item Undecidable to decide if there exists a finite-dimensional representation of $\Gamma(A)$ where $J$ is non-trivial.
        \item Undecidable to decide if there exists a Connes-embeddable representation of $\Gamma(A)$ where $J$ is non-trivial.
        \item $\tsf{coRE}$-complete to decide if $J$ is non-trivial in $\Gamma(A)$.
    \end{enumerate}
\end{theorem}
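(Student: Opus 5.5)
Since this statement collects results of~\cite{Slo19,Slo20}, my plan is to reassemble it from Slofstra's embedding theorem rather than argue from scratch. The tool is the embedding theorem of~\cite{Slo19}. Given a finitely presented group $K=\gen*{S}{R}$ and a central element of order dividing $2$, it computably produces a linear relational structure $A$ and an injective homomorphism $K\hookrightarrow\Gamma(A)$ sending that element to $J$. The construction also transfers representations: every finite-dimensional, Connes-embeddable, or arbitrary unitary representation of $\Gamma(A)$ restricts to one of $K$ of the same kind, and conversely (approximately in the Connes-embeddable case). So each of (i)--(iii) reduces to a corresponding hardness statement for finitely presented groups with a distinguished central involution, up to a computable translation.

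For (iii), I will reduce from the word problem. Take a finitely presented group $H$ with undecidable word problem and an input word $w$. Form the group $K_w$ generated by $H$ and a new central involution $z$, amalgamated so that $z$ is forced to equal a commutator-type gadget built from $w$. This should be arranged so that $z\neq 1$ in $K_w$ iff $w\neq 1$ in $H$, possibly via a Higman--Neumann--Neumann extension. Applying the embedding theorem gives $A_w$ with $J\neq 1$ iff $w\neq 1$, which yields $\tsf{coRE}$-hardness. Membership in $\tsf{coRE}$ is immediate: triviality of $J$ in a finitely presented group is $\tsf{RE}$, since one can enumerate consequences of the relators.

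For (i), I will use the same pipeline but start from a family of finitely presented groups and words for which it is undecidable whether $w$ survives in some finite-dimensional unitary representation. Such families exist, for example via undecidability of the finite-quotient and residual-finiteness word problems (Kharlampovich-type constructions), and this is the route of~\cite{Slo19}.

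For (ii), I expect the main obstacle. The Connes-embeddable version cannot come from pure combinatorial group theory, so I will import $\mathrm{MIP}^\ast=\mathrm{RE}$~\cite{JNV+21}. From a nonlocal game with either a perfect $qa$ strategy or $qa$-value at most $1/2$, I will pass to a boolean constraint system. I then use the linear-system and solution-group encoding of~\cite{Slo20} together with the approximate-representation stability arguments (in the spirit of~\cite{CM24}) to get: $J$ is nontrivial in some Connes-embeddable (hyperlinear) representation of $\Gamma(A)$ iff the original game has a perfect $qa$-strategy. The delicate part is that the embedding theorem must preserve approximate representations in $\mc{R}^\omega$, not just exact ones. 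That is exactly where the gap of the $\mathrm{MIP}^\ast=\mathrm{RE}$ promise is used, and I would check that the embedding is robust in the weighted-algebra sense before concluding undecidability.
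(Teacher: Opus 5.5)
The paper takes this statement from \cite{Slo19,Slo20} without proof. Your plan for (iii), and in outline for (i), follows the route of those works. You apply Slofstra's embedding theorem, in the form of \cite{Slo19}, which extends finite-dimensional and approximate representations from $K$ to $\Gamma(A)$ rather than only restricting them. For (iii) you feed it a family with undecidable word problem. For (i) you feed it a family with undecidable finite-quotient word problem; there the relevant input is Slobodskoi's undecidability of the universal theory of finite groups, not a Kharlampovich construction. You also need Malcev's theorem to pass between finite quotients and finite-dimensional unitary representations.

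One caution on your gadget for (iii). A relation such as $ywy^{-1}=zw$, with $z$ a central involution commuting with $y$, forces $z=1$ whenever $w$ has odd finite order. So first replace $w$ by an element of infinite order whenever $w\neq 1$, e.g.\ $[w,s]$ in $H\ast\langle s\rangle$.

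The genuine gap is (ii). Your premise that the Connes-embeddable case cannot come from combinatorial group theory is false, and the step you substitute has nothing behind it. Slofstra's encoding takes as input a finitely presented group with a central involution. Perfect strategies of a general nonlocal game or boolean constraint system are representations of a projection-generated $\ast$-algebra that is not a group algebra. No known transformation turns such a game into a linear system $A$ with ``$J$ nontrivial in some Connes-embeddable representation of $\Gamma(A)$ iff the game has a perfect $qa$-strategy''. That equivalence is the entire content of (ii), and you only assert it. The soundness gap of \cite{JNV+21} plays no role in an exact statement like this.

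The missing observation is that your own pipeline already suffices. The embedding transfers nontriviality of $J$ in approximate (equivalently $\mathcal{R}^\omega$-valued) representations in both directions. So you only need finitely presented groups with a central involution whose nontriviality in Connes-embeddable representations is undecidable. Hyperlinear finitely presented groups with undecidable word problem provide this, e.g.\ Kharlampovich's finitely presented solvable groups. They are amenable, hence embed in the unitary group of $\mathcal{R}^\omega$, so for them ``nontrivial in some Connes-embeddable representation'' just means ``nontrivial''. Your word-to-involution gadget must then preserve hyperlinearity, which it does if built from free products and HNN extensions over amenable subgroups. \cite{Slo19} obtains (ii) by a group-theoretic argument of this kind, years before \cite{JNV+21}.
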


\begin{proof}[Proof of \cref{thm:gadget-existence-hard}]
    For (i), we make use of \cref{thm:hardness-william}(i). Let $A$ be a linear relational structure. Then, in the same way as \cref{thm:no-finite}, there exists a homogeneous linear relational structure $B$ such that $\Gamma(A)$ embeds into $\Gamma_H(B)$ and every finite-dimensional representation of $\Gamma_H(B)$ is trivial iff $J\mapsto 1$ in every finite-dimensional representation of $\Gamma(A)$. Next, as in the proof of \cref{thm:separations}, there exists a relational structure $G=G(B,\mathrm{LIN})_K$ such that $\Mor^+(G^k,G)\cong C^\ast(\Gamma_H(B)^k)$. Therefore, every finite-dimensional representation of $\Mor^+(G^k,G)$ is abelian iff $J\mapsto 1$ in every finite-dimensional representation of $\Gamma(A)$. Due to \cref{thm:basic-char}, $G$ admits a $q$-commutativity gadget iff $J\mapsto 1$ in every finite-dimensional representation of $\Gamma(A)$. Due to \cref{thm:hardness-william}, it is undecidable to decide the latter, so it is undecidable to decide if a relational structure admits a $q$-commutativity gadget.

    (ii) follows in the same way from \cref{thm:hardness-william}(ii); and (iv) follows in the same way from \cref{thm:hardness-william}(iii). Since relations in group algebras hold if and only if they hold for every $qc$-representation, (iii) is equivalent to (iv).
\end{proof}

\begin{question}\label{que:two}
    Do analogous hardness results hold for the oracular commutativity gadget existence problems? Can we upper-bound the hardness of deciding if a relational structure admits a commutativity gadget?
\end{question}

\section{Conclusion}
\label{sec:conclusion}

In this paper, we studied commutativity gadgets for CSPs in all models of quantum correlation.
For every quantum model $Q$, we proved that a CSP admits a $Q$-commutativity gadget if and only if all its quantum polymorphisms in this model are classical, thereby extending the characterization of \cite{CJM25} to all quantum models.
We then proved various implications, equivalences, and separations between those different models, showing some of the subtleties involved in the study of commutativity gadgets.
Finally, we proved that it is undecidable whether or not commutativity gadgets exist in the non-oracular framework, in any of the quantum models.

\heading{Open questions.} Many interesting open questions remain about commutativity gadgets and the complexity of quantum homomorphisms. First, we know some classes of CSP admitting a commutativity gadget, such as $\tsf{NP}$-complete boolean CSPs~\cite{Ji13,CM24}, oracular graph colouring~\cite{Ji13,CDdBVZ25}, odd cycle colouring~\cite{CJM25}; and some classes that do not admit a commutativity gadget, such as non-oracular $k$-colouring for $k\geq 4$, and graph homomorphsim to the diamond graph~\cite{CDdBVZ25}. Can we make use of the polymorphism characterisation of commutativity gadgets to gain a better understanding of which CSPs admit commutativity gadgets? For example, the smallest example of a CSP admitting no commutativity gadget has an alphabet of size $4$: do all ternary CSPs admit a commutativity gadget?

The known reductions between entangled CSPs rely on commutativity gadgets to lift classical reductions. However, there may be more types of reduction possible. Can we find reductions between CSPs that bypass commutativity gadgets? Can we use this to show hardness for CSPs that do not admit a commutativity gadget?

One of the results of this work is to find separations between different classes of commutativity gadget. However, separations between $C^\ast$- and $qc$-commutativity gadgets, or gadgets in the oracular framework remain elusive. Can we find CSPs that exhibit these separations? Or are some of these commutativity gadgets equivalent? This was posed in~\cref{que:one}.

Finally, we show undecidability results for ascertaining the existence of commutativity gadgets in the non-oracular framework. Does this hardness extend to the oracular framework? Also, for the $q$- and $qa$-commutativity gadget existence problems, the undecidability results do not provide an obvious upper bound on the complexity. Can we pinpoint the complexity of these problems more precisely? This was posed in~\cref{que:two}.

\small
\bibliographystyle{bibtex/bst/alphaarxiv.bst}
\bibliography{bibtex/bib/full.bib,bibtex/bib/quantum.bib,bibtex/quantum_new.bib}

\normalsize

\ifthenelse{\boolean{anonymous}}{
\appendix
\section{Background}

}{}

\end{document}